\documentclass[acmsmall,screen,nonacm,natbib=false]{acmart}
\usepackage[utf8]{inputenc}
\usepackage[toc,page]{appendix}
\usepackage{verbatim}
\usepackage{ebproof} 
\usepackage{stmaryrd}
\usepackage{mathtools}
\usepackage{mathrsfs} 
\usepackage{accents} 
\usepackage{xfrac}
\usepackage{xcolor}
\usepackage{tikz-cd} 
\usepackage{thmtools,thm-restate} 
\usepackage{tikzit} 

\tikzstyle{texthere}=[inline text, font={\footnotesize}]
\tikzstyle{tiny box}=[rectangle, inline text, fill=white, draw, minimum height=5mm, yshift=-0.5mm, minimum width=5mm, font={\small}]
\tikzstyle{small box}=[rectangle, inline text, fill=white, draw, minimum height=7.5mm, yshift=-0.5mm, minimum width=7.5mm, font={\small}]
\tikzstyle{medium box}=[rectangle, inline text, fill=white, draw, minimum height=10mm, yshift=-0.5mm, minimum width=7.5mm, font={\small}]
\tikzstyle{semilarge box}=[rectangle, inline text, fill=white, draw, minimum height=12.5mm, yshift=-0.5mm, minimum width=7.5mm, font={\small}]
\tikzstyle{narrow semilarge box}=[rectangle, inline text, fill=white, draw, minimum height=12.5mm, yshift=-0.5mm, minimum width=5mm, font={\small}]
\tikzstyle{almost large box}=[rectangle, inline text, fill=white, draw, minimum height=15mm, yshift=-0.5mm, minimum width=7.5mm, font={\small}]
\tikzstyle{big box}=[rectangle, inline text, fill=white, draw, minimum height=17.5mm, yshift=-0.5mm, minimum width=7.5mm, font={\small}]
\tikzstyle{large box}=[rectangle, inline text, fill=white, draw, minimum height=20mm, yshift=-0.5mm, minimum width=7.5mm, font={\small}]
\tikzstyle{very large box}=[rectangle, inline text, fill=white, draw, minimum height=22.5mm, yshift=-0.5mm, minimum width=7.5mm, font={\small}]
\tikzstyle{upground}=[circuit ee IEC, thick, ground, scale=2.1]
\tikzstyle{downground}=[circuit ee IEC, thick, ground, rotate=180, scale=2.1]
\tikzstyle{point}=[regular polygon, regular polygon sides=3, draw, scale=0.67, inner sep=-0.5pt, minimum width=9mm, fill=white, regular polygon rotate=90]
\tikzstyle{copoint}=[regular polygon, regular polygon sides=3, draw, scale=0.67, inner sep=-0.5pt, minimum width=9mm, regular polygon rotate=-90, fill=white]
\tikzstyle{wide copoint}=[fill=white, draw, shape=isosceles triangle, shape border rotate=180, isosceles triangle stretches=true, inner sep=0pt, minimum width=1.5cm, minimum height=6.12mm]
\tikzstyle{wide point}=[fill=white, draw, shape=isosceles triangle, shape border rotate=0, isosceles triangle stretches=true, inner sep=0pt, minimum width=1.5cm, minimum height=6.12mm, yshift=-0.0mm]
\tikzstyle{white dot}=[fill=white, draw=black, shape=circle]
\tikzstyle{gray dot}=[fill={rgb,255: red,148; green,148; blue,148}, draw=black, shape=circle]
\tikzstyle{dot}=[inner sep=0mm, minimum width=1mm, minimum height=1mm, draw, shape=circle, fill=black]
\tikzstyle{empty dot}=[inner sep=0mm, minimum width=3mm, minimum height=3mm, draw, shape=circle, fill=none]

\tikzstyle{dashed line}=[-, dashed, dash pattern=on 1mm off 0.5mm]
\tikzstyle{dotted line}=[-, style=dotted, tikzit draw=brown]
\tikzstyle{cyan dashed line}=[-, style=dashed, tikzit draw=cyan]
\tikzstyle{green fill line}=[-, fill={green!90!black}, tikzit draw=green]
\tikzstyle{blue fill}=[-, fill=blue, tikzit fill=blue, tikzit draw={rgb,255: red,102; green,117; blue,255}]
\tikzstyle{red}=[-, draw=red, tikzit draw=red]
\tikzstyle{blue}=[-, draw=blue, tikzit draw=blue]
\tikzstyle{thick black}=[-, draw=black, tikzit draw=black, line width=1.8pt]
\tikzstyle{dotted red}=[-, draw=red, style=dotted, tikzit draw=red]
\tikzstyle{dotted blue}=[-, draw=blue, tikzit draw=blue, style=dotted]
\tikzstyle{dashed thick blue}=[-, draw={rgb,255: red,28; green,176; blue,255}, tikzit draw={rgb,255: red,83; green,19; blue,156}, line width=1pt, style=dashed]
\tikzstyle{dashed thick red}=[-, draw=red, tikzit draw={rgb,255: red,255; green,100; blue,10}, line width=1pt, style=dashed]
\tikzstyle{green}=[-, draw=green, tikzit draw=green]
\tikzstyle{dotted green}=[-, draw=green, tikzit draw=green, style=dotted]
\tikzstyle{arrow}=[->]
\tikzstyle{arrow green dashed}=[draw=green, ->, tikzit draw=green, style=dashed]
\tikzstyle{arrow dashed red}=[draw=red, ->, style=dashed, tikzit draw=red]
\tikzstyle{dashed green}=[-, tikzit draw=green, draw=green, style=dashed]

\input{diagrams.tikzdefs}
\usepackage{todonotes}

\def\longueversion{}
\newcommand{\version}[2]{\ifdefined\longueversion #1 \else #2 \fi}

\RequirePackage[
datamodel=acmdatamodel,
style=acmauthoryear
]{biblatex}
\DeclarePairedDelimiter\pair{\langle}{\rangle}
\DeclarePairedDelimiter\bra{\langle}{\rvert}
\DeclarePairedDelimiter\ket{\lvert}{\rangle}
\DeclarePairedDelimiterX\braket[2]{\langle}{\rangle}{#1 \delimsize\vert #2}

\newcommand{\db}[1]{\{\mskip-5mu\{ #1 \}\mskip-5mu\}}

\newcommand{\meas}{\mathtt{meas}}
\newcommand{\qcase}{\mathtt{qcase}}
\newcommand{\letrec}{\mathtt{letrec}}
\newcommand{\unit}{\text{\texttt{()}}}

\newcommand{\sem}[1]{\llbracket #1 \rrbracket}

\newcommand{\Caus}{\mathbf{Caus}}
\newcommand{\CPM}{\mathbf{CPM}}
\newcommand{\FHilb}{\mathbf{FHilb}}
\newcommand{\Mem}{\text{\textnormal{Mem}}}
\newcommand{\Supp}{\text{\textnormal{Supp}}}

\newcommand{\q}{\text{\textnormal{q}}}
\newcommand{\parr}{\text{\rotatebox[origin=c]{180}{$\&$}}}

\newcommand{\lolli}{\multimap}
\newcommand{\one}{\text{\textbf{1}}}

\newcommand{\alift}[1]{\lceil #1 \rceil}

\newcommand{\chan}{\mathcal{C}}

\newcommand{\smap}{\mathcal{S}}
\newcommand{\device}{\mathbb{D}}

\newcommand{\Meas}{\texttt{Meas}}
\newcommand{\Measd}[1]{\Meas [#1]}

\newcommand{\cat}{\mathsf{C}}

\begin{document}

\theoremstyle{acmdefinition}
\newtheorem{remark}[theorem]{Remark}

\title[Higher-Order Programs with Indefinite Causal Orders]{Higher-Order Programs with Indefinite Causal Orders: a Linear Approach to Coherent Control of Quantum Processes
}
\author{Kathleen Barsse}
\affiliation{
  \institution{Université de Lorraine, CNRS, Inria, LORIA}
  \country{France}
}
\author{Romain Péchoux}
\affiliation{
  \institution{Université de Lorraine, CNRS, Inria, LORIA}
  \country{France}
}
\author{Simon Perdrix}
\affiliation{
  \institution{Université de Lorraine, CNRS, Inria, LORIA}
  \country{France}
}
\authorsaddresses{}

\begin{abstract}
Processes with indefinite causal orders (ICOs), such as the quantum switch, are higher-order quantum processes that superpose the order in which quantum operations are performed. Such coherent control yields computational advantages but is not faithfully captured by existing quantum programming languages: either they are restricted to the unitary case, and thus cannot combine ICOs with measurement, or they treat coherent control nonlinearly. In both cases, they do not realize the full computational power of ICOs.
We introduce a higher-order quantum functional language that supports general quantum computation, not merely the permutation of channels, and whose linear type system allows quantum control to be well-defined beyond the unitary case, on arbitrary quantum channels.
We equip this language with a small-step operational semantics that synchronizes measurement outcomes across superposed branches, using device references and a memory function. 
We also give a denotational semantics by means of completely positive maps. With linearity as the only constraint, some well-typed terms would denote unphysical maps. We therefore impose a typing discipline that goes beyond linearity, and interpret programs in the causal category $\Caus[\CPM]$, under which every well-typed program is physically meaningful, a property that can be checked statically and efficiently.
We prove soundness, and study the language's expressive power: it can express every quantum channel at first order, and at second order a large subclass of the so-called quantum circuits with quantum control (\textbf{QC-QC}s), containing the quantum switch. Last but not least, we show that this language is well-designed enough to be extended to the nonlinear setting with recursion.
\end{abstract}

\maketitle

\section{Introduction}

\subsection{Context and Motivations}

\emph{Coherent control}, or \emph{quantum control}, is a concept that has attracted significant attention in the development of quantum programming languages.
It refers to the ability to express not only superpositions of data, but also superpositions of programs, which are typically controlled by a quantum state.
A fundamental example of coherent control is the \emph{quantum switch}~\cite{chiribella}.
The \emph{quantum switch} is the higher order quantum process whose input is pair of quantum operations, say $\mathcal{C}$ and $\mathcal{D}$, and whose output is the operation that consists in applying $\mathcal{D}\circ \mathcal{C}$ and $\mathcal{C}\circ \mathcal{D}$ in a superposition, based on the state of a control qubit. The two programs in superposition are the two possible orderings of the input operations: $\mathcal{C}$ before $\mathcal{D}$ or $\mathcal{D}$ before $\mathcal{C}$. This is called an \emph{indefinite causal order}.

Indefinite causal orders~\cite{oreshkov} are a subclass of coherent control that arises from higher-order quantum processes.
They can be described as processes in which the ordering between a set of processes is in a superposition.
Indefinite causal orders are of particular interest in quantum foundations~\cite{consistent_circuits,grenoble_process,projective},
but have also been shown to grant an advantage in computational complexity in a variety of tasks~\cite{araujo,araujo_guerin_baumeler,renner2021,renner2022,abbott_boolean,kristjansson_exponential,photonic}.
A significant number of these examples use the quantum switch.
Moreover, access to processes with indefinite causal orders can also provide advantages in some communication problems~\cite{ebler,grenoble_coherent_control} or information processing tasks~\cite{grenoble_process}.

Yet, existing quantum programming languages do not fully capture the expressive power of processes with indefinite causal orders. There are two major reasons for this.

Firstly, as we will argue in the paper, \emph{linearity} is a key feature to implementing such processes. Processes with indefinite causal orders are linear by nature: the different orderings in superposition all have the same underlying set of processes. However, programming languages that support coherent control often do so in a nonlinear fashion, which implements a closer analog to the classical \texttt{if/then/else} statement.
Nonetheless, some processes with indefinite causal orders can be simulated. For instance, in the unitary case, the quantum switch can be simulated by a quantum circuit in which one of the inputs is called twice~\cite{chiribella}.
However, due to using more resources, this implementation loses the potential computational advantages that the quantum switch would usually enable.

Secondly, integrating both indefinite causal orders and measurements into a quantum programming language is a non-trivial problem. In particular, the calculi in which quantum control \emph{is} handled linearly do not feature measurement~\cite{realizability,lambda_s1,sabry}. 
Fundamentally, the problem concerns synchronizing measurement outcomes occurring in different branches of a superposition. That is, a process with indefinite causal orders consists of a superposition of a set of processes being executed in different orders. If one of those processes contains a measurement, which is a probabilistic operation, the observed outcome must be the same in every branch.
In particular, the simulation of the quantum switch where one input is used twice does not work beyond the unitary case for exactly this reason.
The ability to implement measurements is necessary to express coherent control in its full generality. Remarkably, \citet{kristjansson_exponential} demonstrated an exponential separation in quantum query complexity between models with indefinite causal orders and standard quantum circuits, and this advantage specifically requires non-unitary gates.

In this paper, we develop a higher-order programming language that faithfully implements a large class of processes with indefinite causal orders, accounting for the presence of measurements. Crucially, one can check statically in polynomial time (see Remark~\ref{rmk:tit}) whether a program represents a physically meaningful process.

\subsection{Overview}

We give a brief overview of the design of the language and present its main features.
The syntax of the language (Section~\ref{section:syntax}) consists of the terms of the lambda calculus together with primitives implementing specific quantum features (including local unitaries, measurements and state initialisation). 
Quantum control is implemented {with} a dedicated $\qcase$ primitive,  written as
\[
 \qcase\ P\ \{0\to M\ |\ 1\to N\}.
\]
This program reads as follows. The term $P$ represents a control qubit: if $P$ evaluates to $\ket{0}$, then $M$ is executed; if $P$ evaluates to $\ket{1}$, then $N$ is executed. If $P$ evaluates to a superposition, then a coherent superposition of $M$ and $N$ is executed. The example of the quantum switch is given in  Fig.~\ref{fig:switch}. A linear typing discipline guarantees that  any variables appearing freely in $M$ or $N$ must appear exactly once in $M$ and exactly once in $N$. It allows an \emph{indefinite-causal-order} interpretation of higher order computations: roughly speaking, the same quantum channels are called in both $M$ and $N$ and only the order in which they are applied differs.

We endow this language with both operational (Section~\ref{section:operational}) and denotational semantics (Section~\ref{section:denotational}).

\begin{figure}[t]
$\begin{array}{l}\mathtt{switch} \ := \\ \quad
\lambda \pair{f,g}.\lambda q. \qcase\ q\ \{\\
\quad\quad\quad 0\rightarrow \mathtt{comp} \ \pair{f,g} \ |\ \\ \quad\quad\quad1\rightarrow \mathtt{comp} \ \pair{g,f} \\\quad\}\\\text{with } \mathtt{comp}\ :=\ \lambda \pair{f, g}.\lambda t. g(ft)
\end{array}\qquad\qquad$\raisebox{-0.2cm}{\scalebox{0.85}{\begin{tikzpicture}
	\begin{pgfonlayer}{nodelayer}
		\node [style=none] (2) at (-1, 1) {};
		\node [style=none] (3) at (1, 1) {};
		\node [style=none] (4) at (-1, 0) {};
		\node [style=none] (5) at (1, 0) {};
		\node [style=none] (6) at (-1, 2.75) {};
		\node [style=none] (7) at (-2.25, 3.375) {};
		\node [style=none] (8) at (1, 2.75) {};
		\node [style=none] (9) at (2.25, 3.375) {};
		\node [style=none] (10) at (1, -1.75) {};
		\node [style=none] (11) at (2.25, -2.375) {};
		\node [style=none] (12) at (-1, -1.75) {};
		\node [style=none] (13) at (-2.25, -2.375) {};
		\node [style=none] (14) at (-2.25, 1.875) {};
		\node [style=none] (15) at (-1, 1.875) {};
		\node [style=none] (16) at (-1, -0.875) {};
		\node [style=none] (18) at (-1.5, -0.375) {};
		\node [style=none] (19) at (1.5, -0.375) {};
		\node [style=none] (20) at (1, -0.875) {};
		\node [style=none] (21) at (0, 0.5) {};
		\node [style=none] (22) at (1.5, 1.375) {};
		\node [style=none] (23) at (1, 1.875) {};
		\node [style=none] (24) at (-1.5, 1.375) {};
		\node [style=none] (25) at (2.25, 1.875) {};
		\node [style=none] (26) at (-1.75, 0.5) {};
		\node [style=none] (27) at (1.75, 0.5) {};
		\node [style=none] (28) at (-3.75, 1.875) {};
		\node [style=none] (29) at (-3.75, -0.875) {};
		\node [style=none] (30) at (-2.25, -0.875) {};
		\node [style=none] (31) at (2.25, -0.875) {};
		\node [style=none] (32) at (3.75, -0.875) {};
		\node [style=none] (33) at (3.75, 1.875) {};
		\node [style=texthere] (34) at (-3.95, 2.125) {$q$};
		\node [style=texthere] (35) at (3.95, 2.125) {$q$};
		\node [style=texthere] (36) at (-3.95, -0.625) {$t$};
		\node [style=texthere] (37) at (3.95, -0.625) {$t$};
		\node [style=tiny box] (38) at (0, 1.875) {$f$};
		\node [style=tiny box] (39) at (0, -0.875) {$g$};
	\end{pgfonlayer}
	\begin{pgfonlayer}{edgelayer}
		\draw (4.center) to (5.center);
		\draw (2.center) to (3.center);
		\draw (3.center) to (8.center);
		\draw (9.center) to (11.center);
		\draw (10.center) to (5.center);
		\draw (12.center) to (4.center);
		\draw (2.center) to (6.center);
		\draw (7.center) to (13.center);
		\draw [style=dashed line, color=red, bend left=45] (23.center) to (22.center);
		\draw [style=dashed line, color=red, in=0, out=-90] (22.center) to (21.center);
		\draw [densely dash dot, color=blue, bend left=45, looseness=1.25] (24.center) to (15.center);
		\draw [densely dash dot, color=blue, in=180, out=-90, looseness=0.75] (24.center) to (21.center);
		\draw [style=dashed line, color=red, in=90, out=180] (21.center) to (18.center);
		\draw [style=dashed line, color=red, bend right=45] (18.center) to (16.center);
		\draw [densely dash dot, color=blue, bend right=45] (20.center) to (19.center);
		\draw [densely dash dot, color=blue, in=0, out=90] (19.center) to (21.center);
		\draw (25.center) to (33.center);
		\draw (31.center) to (32.center);
		\draw (29.center) to (30.center);
		\draw (28.center) to (14.center);
		\draw (15.center) to (23.center);
		\draw (16.center) to (20.center);
		\draw (7.center) to (9.center);
		\draw (6.center) to (8.center);
		\draw (12.center) to (10.center);
		\draw (13.center) to (11.center);
		\draw [style=dashed line, color=red, in=180, out=90, looseness=1.25] (26.center) to (15.center);
		\draw [style=dashed line, color=red, in=0, out=-90, looseness=0.75] (26.center) to (30.center);
		\draw [densely dash dot, color=blue, in=90, out=0, looseness=1.25] (23.center) to (27.center);
		\draw [densely dash dot, color=blue, in=-180, out=-90, looseness=0.75] (27.center) to (31.center);
		\draw [densely dash dot, color=blue] (30.center) to (16.center);
		\draw [style=dashed line, color=red] (20.center) to (31.center);
	\end{pgfonlayer}
\end{tikzpicture}}}
\caption{(Left) The quantum switch program. (Right) A diagrammatic representation of the quantum switch. }
\label{fig:switch}
\end{figure}

\subsubsection{Operational Semantics}
In order to provide some intuition of the step-by-step execution of programs, let us consider the example of the quantum switch in more detail.
The quantum switch is the higher order function whose input is a pair of quantum operations, given by variables $f$ and $g$,
and whose output is a quantum operation acting on a pair of qubits $q$ (``control'') and $t$ (``target''). If the control qubit is in state $\ket{0}$, $f$ then $g$ are applied to the target qubit. If it is in state $\ket{1}$, $g$ then $f$ are applied to the target qubit. Otherwise, the two cases are executed in a coherent superposition depending on the state of the control qubit. 
This process can be summarized by the diagram in Figure~\ref{fig:switch}, where the red dashed lines and the blue dash-dotted lines represent the two possible evolutions of the target qubit depending on the control qubit state. When the control qubit $q$ is in state $\ket{0}$, the target qubit $t$ follows the red path. When the control qubit is in state $\ket{1}$, the target qubit follows the blue path. Otherwise,  the two alternative paths are in a superposition.

Consider its application $\mathtt{switch}\ \pair{C,D}\ (H\ket{0})\ \ket k$ to a pair of operations $\pair{C,D}$, with a control qubit in a superposition ($H$ being the Hadamard gate), and with a target qubit in state $\ket k$. Diagrammatically, the execution actually corresponds to two separate evolutions, represented as two alternative paths in a coherent superposition. This will be written using sums of terms. Therefore $\mathtt{switch}\ \pair{C,D} \ (H\ket{0})\ \ket{k}$ will reduce, after a certain number of steps, to
\begin{align}
\label{eq:paths}
\frac{1}{\sqrt{2}} \cdot {\color{red}  \pair{ \ket{0}, D(C\ket{k})}} + \frac{1}{\sqrt{2}}\cdot {\color{blue} \pair{\ket{1},C(D\ket{k})} }.
\end{align}
where the first term of the superposition represents the red path with the control qubit in state $\ket{0}$, and the second term represents the blue path with the control qubit in state $\ket{1}$.

One of the main challenges we address is the combination of indefinite causal orders with measurements.
Suppose that the input $C$ contains a measurement (e.g., $C=\meas \ P\ \{0\rightarrow M\ |\ 1\rightarrow N\}$, for some $P$, $N$, and $M$).
The two instances of $C$ in~(\ref{eq:paths}) correspond to two distinct paths in which that same process appears, and \emph{not} to two copies of the process; diagrammatically, the two copies of $C$ correspond to the same box. Therefore,
we expect the same measurement outcome for each copy of $C$. To achieve this, we need to keep track of which identical subterms correspond to the same physical process.
This is done by assigning a unique \emph{device reference} to each measurement, which can be thought of as pointing to a specific physical device.
To synchronize the measurement outcomes occurring in different branches of the superposition,
we record measurement outcomes by means of a \emph{memory function}, which is a partial function from device references to $\{0,1\}$.

The operational semantics (Section~\ref{section:operational}) makes use of all these features: superpositions, device references, and memory functions, through the concept of \emph{configuration}.
It is important to note that superpositions of terms and device references are not actually accessible to the programmer, but rather serve as semantic tools to define the evolution of terms.

\subsubsection{Denotational Semantics.}
At first order, the most general evolutions from qubits to qubits are \emph{quantum channels},
formally described as completely positive trace-preserving maps. They correspond to arbitrary
compositions of unitary evolutions and measurements. At second order, the most general
transformations from quantum channels to quantum channels are the \emph{quantum
supermaps}~\cite{chiribella}. Iterating this construction yields quantum transformations of
arbitrarily high order. Processes with indefinite causal orders, such as the quantum switch, can be
described at the level of quantum supermaps.

We introduce a denotational semantics based on completely positive maps, which can be seen as 
morphisms of the category $\CPM$. However, some higher-order morphisms of $\CPM$ are not physically
meaningful, in particular those that fail to satisfy the appropriate trace-preservation conditions. Such
unphysical morphisms can be described by means of a non careful use of the $\qcase$ construct (see Remark~\ref{rm:qcase}). To rule them
out, we impose on $\qcase$ a typing discipline that goes beyond linearity, so that only functions
whose output consists of qubits can be coherently controlled. We then crucially rely on the $\Caus$
construction~\cite{kissinger2019categorical} to prove that every well-typed term is physically
meaningful. 
This denotational semantics provides a fine-grained treatment of the causal structure of quantum
maps and supermaps. For instance, a value of type $(\q\lolli\q)\otimes(\q\lolli\q)$ need not be a
pair of single-qubit channels, but may be an arbitrary no-signalling two-qubit channel~\cite{chiribella}.
Consequently, the quantum switch can be applied not only to pairs of single-qubit channels, but
also to no-signalling two-qubit channel.

\subsection{Summary of Contributions}

In this paper, we define a higher-order functional programming language that supports indefinite causal orders and measurements.
We give an operational semantics, which consists of a small-step transition system over configurations,  i.e., superpositions of terms, and a denotational semantics in the category $\Caus[\CPM]$. The language is shown to satisfy standard safety properties.

We prove the following main results:

\subsubsection{Soundness Theorem}
We prove that the denotational semantics is sound with respect to the operational semantics (Theorem~\ref{thm:soundness}): the reduction relation preserves the interpretation of terms.

\subsubsection{Physicality of Programs}
By construction, well-typed programs represent physically meaningful processes.
This is because the denotational semantics can be defined not only in $\CPM$, but also in the causal category $\Caus[\CPM]$. Therefore, explicit connections can be made between types and collections of higher-order quantum maps.

It is especially interesting to relate this result to the operational semantics.
In quantum programming languages based on superpositions of $\lambda$-terms, one of the main difficulties is checking that terms represent physically meaningful processes, typically unitary maps.
This is generally achieved through orthogonality checks either at the price of sacrificing intensional completeness~\cite{dave2025combining} or at the price of being undecidable~\cite{vinet}.
Here, since superpositions of terms are not a feature that is accessible to the programmer, we will be considering sequences of reductions that start from a term without sums.
Hence, the physicality of terms can be checked statically and efficiently: it suffices to check that the initial term is well-typed.

\subsubsection{Expressivity}

We study the expressivity of the language at the level of first-order and second-order functions. For first-order  functions, we show that all quantum channels can be realized in the language (Proposition~\ref{prop:channels}).
To study second-order functions, it is helpful to consider the expressivity of the language in comparison to the hierarchy of supermaps defined in~\cite{grenoble_process}, which we represent in Figure~\ref{fig:supermaps}. The innermost class, \emph{Quantum Circuits with Fixed Causal Order}, or \textbf{QC-FO}s, represents processes that do not have any control structure. The next class, \emph{Quantum Circuits with Classical Control}, or \textbf{QC-CC}s, corresponds to processes fitting the ``quantum data, classical control'' paradigm~\cite{QPL}. The next class is \emph{Quantum Circuits with Quantum Control}, or \textbf{QC-QC}s. This class contains all known examples of physically realizable processes with indefinite causal orders, including the quantum switch. Finally, \textbf{QC-QC}s do not encompass all existing supermaps—for instance the OCB process defined in~\cite{oreshkov} lies outside of the class of \textbf{QC-QC}s.

\begin{figure}[!ht]
\[
\scalebox{1}{\begin{tikzpicture}
	\draw [rounded corners=8pt] (0,0) rectangle (22,4);     
	\draw [rounded corners=8pt] (0,0) rectangle (16.5,3);   
	\draw [rounded corners=8pt] (0,0) rectangle (11,2);     
	\draw [rounded corners=8pt] (0,0) rectangle (5.5,1);    
 
	\node [style=texthere] at (2.75,0.5) {\textbf{QC-FO}s};
	\node [style=texthere] at (2.75,1.5) {\textbf{QC-CC}s};
	\node [style=texthere] at (2.75,2.5) {\textbf{QC-QC}s};
	\node [style=texthere] at (3.1,3.5)  {\textbf{All supermaps}};
 
	\node [style=dot] (cs) at (8,0.55)   {};
	\node [style=dot] (qs) at (13.5,0.55) {};
	\node [style=dot] (ocb) at (19,0.55) {};
 
	\node [style=texthere] at (8,0.95)    {\small Classical switch};
	\node [style=texthere] at (13.5,0.95) {\small Quantum switch};
	\node [style=texthere] at (19,0.95)   {\small OCB process};
\end{tikzpicture}}
\]
\caption{The nested classes of supermaps defined in~\cite{grenoble_process}}
\label{fig:supermaps}
\Description[A diagram of nested classes of supermaps.]{A diagram of nested classes of supermaps. From innermost to outermost, the classes are \texttt{QC-FO}s, \texttt{QC-CC}s (which contain the classical switch), \texttt{QC-QC}s (which contain the quantum switch) and all supermaps (which contain the OCB process).}
\end{figure}

We prove that the language implements a subclass of \textbf{QC-QC}s that contains all \textbf{QC-CC}s, which we call \textbf{QC-QC}s with memory (Proposition~\ref{prop:qcqc}).
This class contains the quantum switch and its direct generalizations~\cite{unitary2012,unitary2014}, but also some processes with dynamical control of causal orders—that is, processes where the orderings in a superposition are not fixed from the beginning and constructed on the fly.

\subsubsection{An Extension for Non-Linearity and Recursion}
Lastly, in Section~\ref{section:extension} we extend the language to include nonlinear features and recursive functions.
The ability to duplicate inputs is necessary to implement standard quantum algorithms whose inputs are oracles that are queried several times (for instance,~\citet{grover}'s algorithm).
Moreover, this enables us to define recursive functions, which will be considered duplicable.
For example, this allows in to implement~\citet{repeat_until_success}'s ``Repeat-Until-Success'' circuits (Example~\ref{ex:rus}).
We show that the syntax and operational semantics can be extended to include both linear and nonlinear features in a straightforward way.

\subsection{Related Work}

There are numerous examples of quantum programming languages with coherent control, for instance~\cite{altenkirch,qunity,silq,ying_verification,sabry,dave2025combining,quantum_control_machine,zhang,HLMR25}.
In most cases, coherent control is expressed as a quantum version of the classical \texttt{if/then/else} statement.
Usually this primitive is called quantum case (``qcase''), quantum alternation or conditional, and corresponds to the process in which two branching programs are executed in a superposition depending on the state of a control qubit.
Generally, there is no linearity constraint between the \texttt{then} and \texttt{else} branches (although we point out that quantum control is treated linearly in~\cite{realizability,lambda_s1} in the absence of measurement). The lack of linearity constraint enables one to write operations such as mapping a unitary gate to its controlled version. However, processes with indefinite causal orders cannot be faithfully implemented.

In fact, linearity is not just a means to implement processes with indefinite causal orders. At a more fundamental level, it allows quantum control itself to be well-defined. Indeed, without any linearity constraints, quantum control is defined on unitary maps by $(U,V)\mapsto \ket{0}\bra{0} \otimes U +\ket{1}\bra{1}\otimes V$. However, the extension of this semantics to completely positive maps is ill-defined~\cite{badescu}.
For this reason, programming languages featuring both the nonlinear quantum case and measurements use semantic models such as Kraus decompositions~\cite{badescu,ying_control_flow,ying_alternation} or vacuum-extended channels~\cite{bpp}.

Operationally, coherent control can be implemented in models with superpositions of $\lambda$-terms~\cite{lineal,realizability,lambda_s1,meas2017,meas2019,dave2025combining,vinet}.
In such calculi, one can write a unitary map directly from its matrix representation, which removes the need for built-in unitary gates.
In fact, unitaries themselves are written as an instance of quantum control.
The drawback is that verifying whether a term represents a physically meaningful process is generally difficult.
To this end, it is necessary to define orthogonality constraints~\cite{altenkirch,sabry,realizability}, where in particular, it was shown in~\cite{vinet} that deciding the orthogonality of terms is a $\Pi^0_2$-complete problem. 
Finally, \citet{vantonder}'s quantum $\lambda$-calculus, like ours, restricts sums to the topmost
level of terms; however, its additional constraints on the shape of terms permit only classical
control.

There exist several frameworks for the study of higher-order quantum maps~\cite{jencova,bisio,projective}.
The framework of causal categories~\cite{kissinger2019categorical} can be seen as a generalization to higher order of the trace-preserving condition for quantum channels.
Another approach  is presented in~\cite{tsukada}, which constructs a model  guaranteeing the property of ``total probability $\leq 1$'' for higher-order programs.

\section{Linear Language with Indefinite Causal Orders}
\label{section:syntax}
In this section, we introduce the syntax and type system of a linear higher-order programming language for indefinite causal orders.
We illustrate its expressiveness through several examples,
including the quantum switch (Example~\ref{ex:qs}).

\subsection{Syntax}
\begin{figure}[t]
\hrulefill
\begin{align*}
 \begin{array}{lrl}
 \
  \text{(Terms)}&\quad \mathscr{T} \ni M,N,P \  ::=  &  x \ |\ M N \ |\ \lambda x.M \\
  && \!\!\!\! |\ \ \pair{M,N} \ |\ \mathtt{let}\ \pair{x,y} = M \  \mathtt{in}\ N \ |\ \unit\ |\ M;N \\
  && \!\!\!\! |\ \ U\ | \ \ket{0}\ |\ \ket{1}
  \\
  && \!\!\!\! |\ \ \meas \ P\ \{0\rightarrow M\ |\ 1\rightarrow N\} \\
  && \!\!\!\! |\ \ \qcase \ P\ \{0\rightarrow M\ |\ 1\rightarrow N\}
 \end{array}
\end{align*}
\hrulefill
\caption{Syntax of the linear language for indefinite causal orders}
\label{fig:syntax}
\end{figure}
Let $\mathcal{X}$ be a countably infinite set of \emph{linear variables}, written $x,y,z...$
The set of terms $\mathscr{T}$ is defined by the grammar of Figure~\ref{fig:syntax}.
The terms $MN,\lambda x. M$ denote linear application and abstraction. The terms $\pair{M,N}$ and  $\mathtt{let}\ \pair{x, y} = M \  \mathtt{in}\ N$ are the pair constructor and destructor, respectively. The term $\unit$ is the unit.  The sequence term $M;N$ can be thought of as ``$\mathtt{let} \ \unit=M \ \mathtt{in} \ N$'', meaning that the computation of $M$ produces side effects and returns no meaningful value.

The term $\meas\ P \ \{0\rightarrow M \ |\  1\rightarrow N\}$ is used for expressing branching depending on the result of the measurement of a qubit $q$. If the result of the measurement is $0$, then $M$ is executed, otherwise, $N$ is executed.
 The term $\qcase \ P \ \{0\rightarrow M\ |\ 1\rightarrow N\}$ represents the superposition of terms controlled by $P$. The terms $\ket{0}$ and $\ket{1}$ represent qubit states.  $U$ is any single  qubit unitary transformation, i.e., $U\in \mathbb C^{2\times 2}$ s.t. $U^\dagger U=I$, $I$ being the identity. Non-local unitary transformations can be reconstructed, for instance control-U
 can be defined as the following term
 \[\lambda x. \qcase \ x \ \{0\rightarrow I\ |\ 1\rightarrow U\}.\]

By convention, application associates to the left, whereas pairs associate to the right.
We identify terms up to $\alpha$-equivalence in a standard way.
A term with no free variables is said to be \emph{closed}.
We define the following notations inductively, for all $n\geq 2$:
\[\begin{array}{rcl}
\mathtt{let} \ x=M \ \mathtt{in}\ N &:= &(\lambda x . N)M \\
\mathtt{let} \ \pair{x_1,...,x_{n+1}} =M \ \mathtt{in}\ N &:= &\mathtt{let} \ \pair{x_1,y} =M \ \mathtt{in}\ \mathtt{let} \ \pair{x_2,...,x_{n+1}} =y \ \mathtt{in}\ N \\
\lambda \pair{x_1,...,x_n}. M & := &	\lambda y. (\mathtt{let} \ \pair{x_1,...,x_n} = y \ \mathtt{in} \ M) \\
\ket{k_1,...,k_n} & := & \pair{\ket{k_1},...,\ket{k_n}} \quad \text{where }k_1,...,k_n\in\{0,1\}
\end{array}
\]

Lastly, we define the set of \emph{controllable terms} $\mathscr{C}\subseteq \mathscr{T}$ as the set of terms that do not contain measurement.
Controllable terms are the only ones that can appear in the branches of a $\qcase$ statement.
This restriction is formalized by the type system in the following section.

\subsection{Linear Type System}
\label{section:types}

The type system is based on the multiplicative fragment of intuitionistic linear logic~\cite{girard}.
\emph{Types} are given by the following syntax:
\[
 \begin{array}{lll}
  \text{(Types)}\quad A,B \ & ::= \ & \one\ |\ \q \ |\
  A \otimes B \ |\ A \lolli B
 \end{array}
\]
$\one$ is the unit type, and the ground type $\q$ represents a single qubit.
$A\otimes B$ is the linear product type. The type $A \lolli B$ corresponds to linear functions.
By convention, $\lolli$ and $\otimes$ associate to the right.
We write $\q^n$ for $\q \otimes ... \otimes \q$, where $\q$ appears $n\geq 1$ times.

A \emph{typing context} $\Delta$ is a list of pairs of variables and their respective types $x_1:A_1,...,x_n:A_n$, where $x_1,...,x_n$ are pairwise distinct.
Let $\Delta$ and $\Delta'$ be two typing contexts. If $\Delta$ and $\Delta'$ have no common linear variables, then we write $\Delta, \Delta'$ for their union.
\emph{Typing judgments} are expressions of the following form:
\[
 \Delta \vdash M :A
\]
where $M$ is a term, $A$ a type, and $\Delta$ a typing context.

The typing rules are detailed in Figure~\ref{fig:typing_linear}. If a typing judgment can be derived using the typing rules, then it is said to be \emph{valid}. A term $M$ is said to be \emph{well-typed} if there exist $\Delta$ and $A$ such that $\Delta\vdash M:A$ is valid.
Notice that in the meas and qcase rules, the typing context $\Delta'$ is the same for both branching statements.
In particular, if a typing judgment $\Delta:M:A$ is valid, each variable of $\Delta$ is guaranteed to appear exactly once in the term, where occurences in both branches of $\meas$ or $\qcase$ are counted as the same.

\begin{figure}[!t]
\hrulefill
\vspace*{-0.5cm}
\begin{center}
\[
\scalebox{0.9}{
 \begin{prooftree}
  \hypo{\phantom{k\{0,1\}}}
 \infer1[ax]{x:A \vdash x:A}
\end{prooftree}}
 \qquad
 \scalebox{0.9}{
 \begin{prooftree}
 \hypo{\phantom{k\{0,1\}}}
  \infer1[unit]{\vdash U:\q \lolli \q}
 \end{prooftree}}
 \qquad
 \scalebox{0.9}{
 \begin{prooftree}
  \hypo{k\in\{0,1\}}
  \infer1[qbit]{\vdash \ket{k}:\q}
 \end{prooftree}}
 \qquad
 \scalebox{0.9}{
  \begin{prooftree}
  \hypo{\Delta,x:A,y:B,\Delta'\vdash M:C}
  \infer1[X]{\Delta,y:B,x:A,\Delta'\vdash M:C}
 \end{prooftree}}
 \]
 
 \[
 \scalebox{0.9}{
\begin{prooftree}
 \hypo{\Delta,x:A\vdash M:B}
 \infer1[$\lolli$I]{\Delta\vdash \lambda x. M:A\lolli B}
 \end{prooftree}}
 \qquad
 \scalebox{0.9}{
 \begin{prooftree}
  \hypo{\Delta\vdash M:A\lolli B}
  \hypo{\Delta'\vdash N:A}
  \infer2[$\lolli$E]{\Delta,\Delta'\vdash M N:B}
 \end{prooftree}}
 \qquad
 \scalebox{0.9}{
 \begin{prooftree}
  \hypo{\Delta \vdash M:A}
  \hypo{\Delta'\vdash N:B}
  \infer2[$\otimes$I]{\Delta,\Delta' \vdash \pair{M , N}:A\otimes B}
 \end{prooftree}}
 \]

\[
\scalebox{0.9}{
  \begin{prooftree}
\hypo{\Delta \vdash M:A\otimes B}
\hypo{\Delta',x:A,y:B\vdash N:C}
\infer2[$\otimes$E]{\Delta,\Delta' \vdash \mathtt{let}\ \pair{x,y}=M\ \mathtt{in}\ N:C}
\end{prooftree}}
\qquad
\scalebox{0.9}{
 \begin{prooftree}
  \hypo{\phantom{\Delta'}}
  \infer1[$\one$I]{\vdash \unit:\one}
 \end{prooftree}}
 \qquad
 \scalebox{0.9}{
 \begin{prooftree}
  \hypo{\Delta\vdash M: \one}
  \hypo{\Delta' \vdash N: A}
  \infer2[seq]{\Delta,\Delta'\vdash M;N:A}
 \end{prooftree}}
\]

\[
\scalebox{0.9}{
 \begin{prooftree}
 \hypo{\Delta \vdash P:\q}
  \hypo{\Delta' \vdash M:A}
  \hypo{\Delta' \vdash N: A}
  \infer3[meas]{\Delta,\Delta'\vdash \meas\ P\ \{0\rightarrow M\ |\ 1\rightarrow N\}: A}
 \end{prooftree}}
\]

\[
\scalebox{0.9}{
 \begin{prooftree}
  \hypo{\Delta \vdash P:\q}
  \hypo{\Delta' \vdash M:A\lolli \q^n}
  \hypo{\Delta' \vdash N: A\lolli \q^n}
  \hypo{M,N\in\mathscr{C}}
  \infer4[qcase]{\Delta,\Delta'\vdash \qcase\ P\ \{0\rightarrow M\ |\ 1\rightarrow N\}: A\lolli \q^{n+1}}
 \end{prooftree}}
\]
\end{center}
\hrulefill
\caption{Typing rules. Recall that $\mathscr{C}$ is the set of controllable terms, i.e.,~those without measurements.}
\label{fig:typing_linear}
\end{figure}

The measurement statement is intended to discard the control qubit. To see this, notice that the type of $\meas\ P\ \{0\to M\ |\ 1\to N\}$ matches that of $M$ and $N$. On the contrary, the qcase statement preserves the control qubit. This is why the type of $\qcase\ P\ \{0\rightarrow M\ |\ 1\rightarrow N\}$ is $ A\lolli \q^{n+1}$, where $M$ and $N$ are controllable terms of type $A\lolli \q^n$ for some $n\geq 1$.
This type may seem surprising at first glance. However, the naive typing approach, which would consist in assigning $\qcase\ P\ \{0\rightarrow M\ |\ 1\rightarrow N\}$ the type $\q \otimes B$ when $M$ and $N$ have type $B$, raises some {causality} issues: as we will see in Section~\ref{section:operational} (Remark~\ref{rm:qcase}), it would allow us to type terms that are not physical.

In the qcase rule, the requirement that the branching statements $M$ and $N$ be controllable is essential to defining the operational and denotational semantics. We emphasize that this restriction does not prevent us from expressing the coherent control of terms containing measurements. Via an abstraction, we can write terms such as
\[\mathtt{let}\ x=(\lambda y. \meas\ y\ \{0\to \ket{0} | 1\to \ket{1} \}) \ \mathtt{in}\ \qcase\ q\ \{0\to x\ |\ 1\to x\} \]
which is well typed and can be interpreted as the coherent control of a measurement channel.\footnote{To define the reduction of this term, we will see in Section~\ref{section:operational} that the operational semantics relies on a set of \emph{execution terms} whose typing is slightly more permissive.}

\begin{remark}[Linearity as a means of defining coherent control]
We briefly discuss how a linear type system enables us to define coherent control.
It is helpful to distinguish two variants of the qcase primitive appearing in various quantum programming language: the linear qcase, as discussed in this paper, and the nonlinear qcase.
The nonlinear qcase consists of a statement $\qcase \ x\  \{0\to M \ |\ 1 \to N\}$ where the only constraint on $M$ and $N$ is that they be of the same type. When $M$ and $N$ compute the unitary maps $U$ and $V$, respectively, its semantics is given by $(U,V)\mapsto \ket{0}\bra{0}_{x}\otimes U+\ket{1}\bra{1}_{x}\otimes V$. This semantics however cannot be extended to completely positive maps~\cite{badescu}.

The linear qcase consists in the same statement where $M$ and $N$ are additionally required to have exactly the same set of linear variables, and cannot contain measurements  (see Figure~\ref{fig:typing_linear}). Adding the linearity constraint is enough for the process to be well defined on completely positive maps. For instance, this is the case of the quantum switch, where each branching statement contains one copy of each input channel. More generally, processes with indefinite causal orders correspond to this linear variant of qcase. For these reasons, our language will use the linear version of the quantum case, formally enforced by the linear type discipline of Figure~\ref{fig:typing_linear}.

\end{remark}

\begin{remark}[Tractability of type inference]\label{rmk:tit}
As the rules of Figure~\ref{fig:typing_linear} implement a simple-type discipline over a linear lambda-calculus, its type inference is known to be tractable (i.e., computable in polynomial time), as a direct consequence of~\cite{M04}.
 \end{remark}

\subsection{Examples}
\label{section:syntax_examples}
We present various examples to illustrate the expressiveness of the language.
\version{The full typing derivations are given in Appendix~\ref{app:examples}.}{}

\begin{example}[CNOT]\label{ex:cnot}
Non local unitary transformations can be constructed in the language, for instance the control-NOT gate is defined by the following term:
\[
 \mathtt{CNOT} \ := \ \lambda c.\qcase \ c \ \{0\rightarrow I \ |\ 1\rightarrow X\}.
\]
One can derive the typing judgment $\vdash  \mathtt{CNOT}: \q\lolli \q\lolli \q\otimes \q$.
Although only single-qubit unitary gates are built into the language's syntax, this construction demonstrates that this restriction does not actually affect the language's expressiveness. Indeed, CNOT and single-qubit unitary gates form a univeral set of gates~\cite{barenco}.

\end{example}

\begin{example}[Discard]\label{ex:discard}
The discard map is implemented by the following program:
\[
 \mathtt{discard} \ := \ \lambda c. \meas\ c\ \{0\rightarrow \unit\ |\ 1\rightarrow \unit\}
\]
and can be typed as $ \vdash \mathtt{discard}: \q \lolli \one$.
Note that due to linearity, $\mathtt{discard}$ could not have been written as $\lambda c. \unit$.

\end{example}

\begin{example}[Composition]\label{ex:comp}
The term $\mathtt{comp}=\lambda \pair{f, g}.\lambda t. g(ft)$, provided in Figure~\ref{fig:switch}, that simulates the composition can be given the type $(A\lolli B)\otimes (B\lolli C)\lolli A\lolli C$, for any types $A$, $B$, $C$.
\end{example}

\begin{example}[Quantum switch]\label{ex:qs}
\label{ex:switch_syntax}
Using the composition function, the quantum switch
$ \mathtt{switch} \ := \lambda \pair {x,y}. \lambda q. \qcase\ q\ \{0\rightarrow \mathtt{comp} \ \pair{x,y} \ |\ 1\rightarrow \mathtt{comp} \ \pair{y,x}\}$ (also given in Figure~\ref{fig:switch}) can be typed as:
\[
 \vdash \mathtt{switch}: (\q^n\lolli \q^n)\otimes (\q^n\lolli \q^n )\lolli \q\lolli \q^n \lolli \q^{n+1}
\]

\end{example}

\section{Operational Semantics}
\label{section:operational}

\subsection{Design Choices}
\label{section:design}
In this section, we describe the operational semantics for evaluating the terms of the language. The presence of controllable terms makes the definition of the operational semantics challenging, as, for instance, given $ \Meas \ := \ \lambda c. \meas \ c \ \{0 \to \ket{0} \ |\ 1 \to\ket{1}\}$, a  quantum switch on two copies of $ \Meas$: \[ \lambda \pair {x,y}. \lambda q. \qcase\ q\ \{0\rightarrow \mathtt{comp} \ \pair{x,y} \ |\ 1\rightarrow \mathtt{comp} \ \pair{y,x}\}\ \pair{ \Meas, \Meas}\] is well typed but
$\lambda q. \qcase\ q\ \{0\to  \mathtt{comp}\ \pair { \Meas, \Meas}~|~1\to  \mathtt{comp}\ \pair { \Meas, \Meas}\}$ is not. As a consequence, the first term cannot reduce to the second in one or several steps. There is a fundamental reason for this: $\qcase\ q\ \{0\to \mathtt{comp}\ \pair {x,y}~|~1\to \mathtt{comp}\ \pair {y,x}\}$ represents an indefinite causal order: the instrument $x$ is performed first in one branch, last in the other. 
 By contrast, the term $\qcase\ q\ \{0\to  \mathtt{comp}\ \pair { \Meas, \Meas}~|~1\to  \mathtt{comp}\ \pair { \Meas, \Meas}\}$ does not witness which instrument is performed first, and is thus rejected by the typing rules. To preserve distinguishability while reducing the terms,
we equip each measurement device with a unique identifier via  \emph{device references}. 
A second challenge in the definition of the operational semantics is the necessity of reducing both branches of a $\qcase$ in superposition, so as to allow interference. To do so, we allow a term to reduce to a sum of terms (at the topmost level) to account for the superposition phenomenon.
Thus, the operational semantics intuitively proceeds as follows: 
\begin{align*}
&\lambda \pair{x,y}. \lambda q. \qcase\ q\ \{0\to \mathtt{comp}\ \pair {x,y}~|~1\to \mathtt{comp}\ \pair {y,x}\}\ \pair{ \Measd{d}, \Measd{e}} \ H\ket 1\ \ket k\\
&\quad \twoheadrightarrow^* \, \, \frac 1{\sqrt 2} \cdot \pair{ \ket 0, \mathtt{comp}\ \pair {\Measd{d},\Measd{e}}\ \ket k}- \frac 1{\sqrt 2} \cdot  \pair{\ket 1, \mathtt{comp}\ \pair {\Measd{e},\Measd{d}}\ \ket k}
\end{align*}
where $\Measd{i} :=  \lambda c. \meas \ i\triangleright c \ \{0 \to \ket{0} \ |\ 1 \to\ket{1}\}$ and $d,e$ are device identifiers. 
Additionally, the measurement outcomes associated with each device will be recorded in a \emph{memory function}. 
This way, the two occurrences of $\Measd{d}$ (one in each branch of the superposition) share the
device $d$ and hence produce the same measurement outcome, while $\Measd{e}$ carries a different
device $e$ and thus produces an independent outcome.

\subsection{A Language for Program Executions}

In order to define the reduction of terms, we define a language for program executions, which is essentially the original set of terms $\mathscr{T}$ to which we add device references and linear combinations of terms.
These new features are not assumed to be accessible to the programmer, but are rather a tool to define the semantics of the language.

\subsubsection{Adding Device References}

The first step is to add device references.
Let $\device$ be a countably infinite set of \emph{device references}, which we write as $d,e,f$...
The set of \emph{execution terms} $\mathscr{E}$ is defined by the grammar in Figure~\ref{fig:syntax_ex}.
\begin{figure}[!t]
\hrulefill
\begin{align*}
 \begin{array}{lrl}
 \
  \text{(Execution\ terms)}&\quad \mathscr{E} \ni M,N,P \  ::=  &  x \ |\ M N \ |\ \lambda x.M \\
  && \!\!\!\! |\ \ \pair{M,N} \ |\ \mathtt{let}\ \pair{x,y} = M \  \mathtt{in}\ N \ |\ \unit\ |\ M;N \\
  && \!\!\!\! |\ \ U\ | \ \ket{0}\ |\ \ket{1}
  \\
  && \!\!\!\! |\ \ \meas \ d\triangleright P\ \{0\rightarrow M\ |\ 1\rightarrow N\} \\
  && \!\!\!\! |\ \ \qcase \ P\ \{0\rightarrow M\ |\ 1\rightarrow N\}
 \end{array}
\end{align*}
\hrulefill
\caption{Syntax of execution terms}
\label{fig:syntax_ex}
\end{figure}
The only change compared to the set of terms $\mathscr{T}$ (Figure~\ref{fig:syntax}) is the addition of a device reference $d$ to the primitive for measurement. The set of device references appearing in an execution term $M$ will be written as $\text{Dev}(M)$. In order to type execution terms, the following typing rules are defined on terms of $\mathscr{E}$:
\[
\scalebox{0.9}{
\begin{prooftree}
 \hypo{\Delta \vdash P:\q}
  \hypo{\Delta' \vdash M:A}
  \hypo{\Delta' \vdash N: A}
  \infer3[meas]{\Delta,\Delta'\vdash \meas\ d\triangleright P\ \{0\rightarrow M\ |\ 1\rightarrow N\}: A}
 \end{prooftree}}
 \quad
 \scalebox{0.9}{
  \begin{prooftree}
  \hypo{\Delta \vdash P:\q}
  \hypo{\Delta' \vdash M:A\lolli \q^n}
  \hypo{\Delta' \vdash N: A\lolli \q^n}
  \infer3[qcase]{\Delta,\Delta'\vdash \qcase\ P\ \{0\rightarrow M\ |\ 1\rightarrow N\}: A\lolli \q^{n+1}}
 \end{prooftree}}
\]
Namely, the typing rule for measurement with a device reference is similar to the rule (meas) of Figure~\ref{fig:typing_linear}, and in the rule for qcase, the hypothesis requiring that $M$ and $N$ be controllable is dropped. The rules for all other primitives are defined as being the same as those in Figure~\ref{fig:typing_linear}.

As usual, for $M,N\in\mathscr{E}$ and $x\in\mathcal{X}$, $M\{N/x\}$ is the execution term obtained by replacing each occurrence of $x$ in $M$ by $N$.

 To define the operational semantics, a term $M\in\mathscr{T}$ is lifted to an execution term by
tagging its measurements with device references.
A \emph{lifting} of $M$ is a term $\alift{M}\in\mathscr{E}$ obtained from $M$ by assigning a
distinct device reference to each measurement (and leaving $M$ otherwise unchanged).
Equivalently, $\alift{M}$ is any execution term whose measurements carry pairwise distinct
references and which coincides with $M$ once these references are erased.
Liftings always exist and differ only in the choice of references.

Notice that if $M$ is well-typed, then so is $\alift{M}$. However, the converse is false.
This is because we have relaxed the assumption that branches of a $\qcase$ statement are controllable. For example, the execution term
\[
 \qcase\ x \ \{0\rightarrow \lambda t. \Measd{d} \ (Ht) \ |\ 1\rightarrow \lambda t. \Measd{d}  \ (Xt)\},
\]
with $\Measd{d}$ as in Section~\ref{section:design}, is now well typed.

\begin{remark}
In practice, we can assume that the compiler automatically assigns unique device references as needed before executing the program.
\end{remark}

To record the outcome of measurements occurring throughout the computation, we define \emph{memory functions}.  A memory function is a map $\sigma\colon \device\to\{0,1\}_\bot$, where $\{0,1\}_\bot:=\{0,1,\bot\}$ carries the flat order $\preceq$ ($\bot\preceq 0$, $\bot\preceq 1$, with $0$ and $1$ incomparable) and such that the support $\Supp(\sigma):=\{\,d\in\device\mid\sigma(d)\neq\bot\,\}$ is finite. We write $\mathcal{M}$ for the set of memory functions, ordered pointwise by~$\preceq$.
\begin{definition}[Consistency]
Two memory functions $\sigma,\nu$ are \emph{consistent}, written $\sigma{||}\nu$, if they admit a common upper bound in $\mathcal{M}$ --- equivalently, if they agree on $\Supp(\sigma)\cap\Supp(\nu)$. In that case their join $\sigma\sqcup\nu:=\sup\{\sigma,\nu\}$ exists (the union of $\sigma$ and $\nu$ seen as partial maps).
\end{definition}
We lift $\sqcup$ to sets: for pairwise consistent $\Theta,\Xi\subseteq\mathcal{M}$ (i.e.\ $\sigma||\nu$ for all $\sigma\in\Theta$, $\nu\in\Xi$), let $\Theta\sqcup\Xi:=\{\,\sigma\sqcup\nu\mid\sigma\in\Theta,\ \nu\in\Xi\,\}$, and abbreviate $\{\sigma\}\sqcup\Xi$ as $\sigma\sqcup\Xi$.
We write $[\bot]$ for the least memory function, mapping every device reference to $\bot$. For $d\in\device$ and $k\in\{0,1\}$, we write $[d\mapsto k]$ for the function sending $d$ to $k$ and every other device reference to $\bot$.

\subsubsection{Adding Term Superpositions}

In order to define the reduction rules, we must define linear combinations of terms.
We define \emph{configurations}, which allow term superpositions and are based on~\cite{realizability}'s notions of \emph{term distributions}.
The set of configurations, written as $\vec{\mathscr{E}}$, is defined by the following grammar:
\[
 \begin{array}{lrl}
 \text{(Configurations)} &\quad \vec{\mathscr{E}} \ni \vec{M},\vec{N} \ ::= & \vec{0}\ |\ M^{\sigma}\ |\ \vec{M}+\vec{N} \ |\ \alpha \cdot \vec{M}
 \end{array}
\]
where $M\in\mathscr{E}$, $\alpha\in\mathbb{C}$, and $\sigma\in\mathcal{M}$. The purpose of the memory function $\sigma$ is to record all past measurement outcomes.
In the particular case where $\sigma=[\bot]$, we sometimes omit the superscript by writing $M$ instead of  $M^{[\bot]}$ ; hence confusing the configuration with the execution term, by abuse of notation. The configuration or $1 \cdot M^{[\bot]}$ is called an \emph{initial configuration}, i.e., a configuration with amplitude $1$ and least memory configuration. For simplicity, we also write $M$  instead of $1 \cdot M^{[\bot]}$.

\begin{remark}
We emphasize that linear combinations only appear at the topmost level of configurations.
For example, instead of $(\lambda x. x) (\alpha\cdot \ket{0} +\beta\cdot \ket{1})$ we directly write $\alpha\cdot (\lambda x. x) \ket{0} +\beta\cdot (\lambda x. x) \ket{1})$.
\end{remark}

In Figure~\ref{fig:congruence}, we define a congruence relation $\equiv$ over $\vec{\mathscr{E}}$. The rules consist of all the axioms of a vector space except $0\cdot\vec{M}=\vec{0}$. In technical terms, this corresponds to a weak vector space, which is also used in~\citet{realizability}'s calculus. The reason for excluding this rule is that even if a term cancels out, we must still keep the information of its memory function.

\begin{figure}[!t]
\hrulefill
\begin{center}

\vspace{-4mm}
 \[
  \vec{M}+(\vec{N}+\vec{P}) \equiv (\vec{M}+\vec{N})+\vec{P}
  \qquad
  \vec{M}+\vec{0}\equiv \vec{M}
  \qquad \vec{M} + \vec{N} \equiv \vec{N} +\vec{M}
 \]
 \[
  \alpha\cdot (\vec{M}+\vec{N})\equiv\alpha \cdot\vec{M}+\alpha\cdot\vec{N}
  \quad\
  (\alpha+\beta)\cdot\vec{M}\equiv \alpha \cdot\vec{M}+\beta\cdot\vec{M}
  \quad\
  (\alpha \beta)\cdot \vec{M} \equiv \alpha \cdot (\beta\cdot \vec{M})
  \quad\
  1\cdot \vec{M} \equiv \vec{M}
 \]

\end{center}
\hrulefill
\caption{Congruence relation $\equiv$}
\label{fig:congruence}
\end{figure}

We will now consider configurations up to $\equiv$. Then by associativity, we can use the standard notation $\sum_{i=1}^n \alpha_i\cdot M_i^{\sigma_i}$ for the sum $\alpha_1 \cdot M_1^{\sigma_1} + ... + \alpha_n \cdot M_n^{\sigma_n}$. This allows us to define the canonical form of a configuration:
\begin{definition}[Canonical form]
  The canonical form of a configuration $\vec{M}$ is a configuration $\sum_{i=1}^n \alpha_i \cdot M_i^{\sigma_i}\equiv \vec{M}$ such that the $M_i^{\sigma_i}$ are pairwise distinct and the $\alpha_i$ are (possibly 0) complex scalars. The canonical form is unique up to reordering of the terms of the sum.
 \end{definition}

The \emph{support} of a configuration $\vec{M}=\sum_{i=1}^n \alpha_i\cdot M_i^{\sigma_i}$ is defined as the set $\Supp(\vec{M}):=\{M_i^{\sigma_i}\}_{1\leq i\leq n}$.
A configuration $\vec{M}=\sum_{i=1}^n\alpha_i \cdot M_i^{\sigma_i}$ is said to be \emph{closed} if every $M_i$ is closed.

The following definition generalizes the concept of a term being well typed to configurations. In addition to requiring that each individual summand be well typed, we give some consistency conditions on the memory functions:
\begin{definition}[Well-formedness]
Let $\vec{M}$ be a configuration whose canonical form is $\sum_{i=1}^n \alpha_i \cdot M_i ^{\sigma_i}$. $\vec{M}$ is said to be \emph{well formed} if the $\sigma_i$ are pairwise consistent; and for all $i$, $M_i$ is well typed and $\Supp(\sigma_i)\cap \text{Dev}(M_i)=\emptyset$.
\end{definition}

\begin{definition}[Memory of a configuration]
For all well-formed configuration $\vec{M}=\sum_{i=1}^n\alpha_i \cdot M_i^{\sigma_i}$, we define the memory of $\vec{M}$ as
\[
\Mem(\vec{M}): \sup_{1\leq i\leq n} \sigma_i
\]
where the supremum is taken with respect to $\preceq$. 
\end{definition}
$\vec{M}$ being well-formed ensures that the supremum is well defined.
In particular, if $\vec{M}$ and $\vec{N}$ are well-formed configurations satisfying $\Mem(\vec{M})\ ||\ \Mem(\vec{N})$, then $\vec{M}+\vec{N}$ is also well formed.
Given a memory function $\sigma$ and a well-formed configuration $\vec{M}=\sum_{i=1}^n\alpha_i \cdot M_i^{\nu_i}$ such that $\sigma \ ||\ \Mem(\vec{M})$, we define $\vec{M}^{\{\sigma\}}:=\sum_{i=1}^n\alpha_i \cdot M_i^{\sigma\sqcup \nu_i}$.

\subsection{Reduction Rules}

The small-step semantics is a relation ${\twoheadrightarrow}$ over well-formed configurations
$\vec{M}=\sum_{i=1}^n\alpha_i \cdot M_i^{\nu_i}\in \vec{\mathscr{E}}$. It operates by reducing the
individual execution terms $M_i\in\mathscr{E}$ as  initial configurations $1 \cdot M_i^{[\bot]}\equiv M_i^{[\bot]}$.
Each summand $M_i$ carries its own memory function $\nu_i$, whose
support records the measurements performed so far in that summand; these supports may therefore
differ from one summand of $\vec{M}$ to another: the semantics should appropriately update memory functions while preserving consistency of the memory functions.

Before giving the reduction rules, we introduce some preliminary notions and notations.
 The set of values $\mathscr{V}$ consists of execution terms defined by the following grammar:
 \[
 \begin{array}{lll}
  \text{(Values)} \quad \mathscr{V} \ni V,W \ ::= \!\!\!\!& \ \ \
  \lambda x. M \ |\ \unit  \ |\ \pair{V,W} \ |\ U \ |\ \ket{0}\ |\ \ket{1}
 \end{array}
\]

(Evaluation) \emph{contexts} are defined by the following grammar:
\begin{align*}
\text{(Contexts)} \quad  E\ ::=\ & [] \ | \ EV \ |\ ME\ |\ \pair{E,V}\ |\ \pair{M,E}\ |\ \mathtt{let}\ \pair{x,y}=E\ \mathtt{in}\ M\ |\ E;M \\
 &\!\!\!\! |\ \ \meas \ d\triangleright E\ \{0\to M\ |\ 1\to N\} \ |\ \qcase \ E\ \{0\to  M\ |\ 1\to N\}
\end{align*}
where $M,N \in \mathscr{E}$ and $V \in \mathscr{V}$. Let $E[M]$ be the execution term obtained when substituting $M$ to $[]$ in $E$.
Contexts are extended to configurations through the following syntactic sugar:
\[
 E\left[\sum_{i=1}^n \alpha_i \cdot M_i^{\sigma_i}\right] \ :=\ \sum_{i=1}^n \alpha_i \cdot E[M_i]^{\sigma_i}
\]

The reduction relation ${\twoheadrightarrow} \subseteq \vec{\mathscr{E}}\times \vec{\mathscr{E}}$ is defined in Figure~\ref{fig:rel} for well-formed configurations.  We remind you that the notation $M \twoheadrightarrow \vec{M}$ means that $M$ is an initial configuration.
The reduction implements a \emph{call-by-basis} reduction strategy, which is a special case of call-by-value in which the execution of terms is linear by construction~\cite{lineal}.

\begin{figure}[!t]
\hrulefill
\vspace*{-0.5cm}
\begin{center}

\[
\scalebox{0.9}{
 \begin{prooftree}
  \hypo{N\twoheadrightarrow \vec{N}}
  \infer1{MN\twoheadrightarrow M\vec{N}}
 \end{prooftree}}
\quad\
\scalebox{0.9}{
 \begin{prooftree}
  \hypo{M\twoheadrightarrow \vec{M}}
  \infer1{MV\twoheadrightarrow \vec{M}V}
 \end{prooftree}}
\quad\
\scalebox{0.9}{
 \begin{prooftree}
 \hypo{\phantom{\vec{N}}}
  \infer1{(\lambda x. M)V\twoheadrightarrow M\{V/x\}}
 \end{prooftree}}
\quad\
\scalebox{0.9}{
 \begin{prooftree}
  \hypo{N\twoheadrightarrow \vec{N}}
  \infer1{\pair{M,N}\twoheadrightarrow \pair{M,\vec{N}}}
 \end{prooftree}}
\quad\
\scalebox{0.9}{
 \begin{prooftree}
  \hypo{M\twoheadrightarrow \vec{M}}
  \infer1{\pair{M,V}\twoheadrightarrow\pair{\vec{M},V}}
 \end{prooftree}}
\]
\[
\scalebox{0.9}{
 \begin{prooftree}
  \hypo{M\twoheadrightarrow \vec{M}}
  \infer1{\mathtt{let} \ \pair{x,y} = M \ \mathtt{in}\ N \twoheadrightarrow \mathtt{let} \ \pair{x,y}=\vec{M}\ \mathtt{in}\ N}
 \end{prooftree}}
\qquad
\scalebox{0.9}{
 \begin{prooftree}
 \hypo{\phantom{\vec{M}}}
  \infer1{\mathtt{let} \ \pair{x,y} = \pair{V,W} \ \mathtt{in}\ N \twoheadrightarrow N\{V/x,W/y\}}
 \end{prooftree}}
\]
\[
\scalebox{0.9}{
 \begin{prooftree}[center=false]
  \hypo{M\twoheadrightarrow \vec{M}}
  \infer1{M;N \twoheadrightarrow \vec{M};N}
 \end{prooftree}}
\qquad
\scalebox{0.9}{
\begin{prooftree}[center=false]
 \hypo{\phantom{\vec{M}}}
  \infer1{\unit;N\twoheadrightarrow N}
 \end{prooftree}}
\qquad
\scalebox{0.9}{
 \begin{prooftree}[center=false]
 \hypo{k\in\{0,1\}}
 \hypo{U = {\begin{bmatrix}
                  u_{00} & u_{01} \\
                  u_{10} & u_{11}
                 \end{bmatrix}}
}
  \infer2[$(un)$]{U\ket{k}\twoheadrightarrow u_{0k} \cdot\ket{0} + u_{1k} \cdot\ket{1}}
 \end{prooftree}}
\]
\[
\scalebox{0.9}{
 \begin{prooftree}
  \hypo{P\twoheadrightarrow \vec{P}}
  \infer1{\meas\ d\triangleright P\ \{0\rightarrow M_0\ |\ 1\rightarrow M_1\}\twoheadrightarrow \meas\ d\triangleright \vec{P}\  \{0\rightarrow M_0\ |\ 1\rightarrow M_1\}}
 \end{prooftree}}
\]
\\
\[
\scalebox{0.9}{
 \begin{prooftree}
  \hypo{k'\in\{0,1\}}
  \infer1[$(m_k) \quad (\scriptsize k\in\{0,1\})$]{\meas\ d\triangleright \ket{k'}\ \{0\rightarrow M_0\ |\ 1\rightarrow M_1\}\twoheadrightarrow\delta_{k,k'}\cdot M_{k}^{[d\mapsto k]}}
 \end{prooftree}}
\]
\[
\scalebox{0.9}{
 \begin{prooftree}
  \hypo{P\twoheadrightarrow \vec{P}}
  \infer1{\qcase\ P\ \{0\rightarrow M_0\ |\ 1\rightarrow M_1\}\twoheadrightarrow \qcase\ \vec{P}\ \{0\rightarrow M_0\ |\ 1\rightarrow M_1\}}
 \end{prooftree}}
\]
\\
\[
\scalebox{0.9}{
\begin{prooftree}
\hypo{k\in\{0,1\}}
\hypo{t\text{ is fresh}}
 \infer2{\qcase\ \ket{k}\ \{0\rightarrow M_0 \ |\  1\rightarrow M_1\} \twoheadrightarrow \lambda t. \pair{\ket{k},M_k \, t}}
\end{prooftree}}
\]
\\
\[
\scalebox{0.9}{
\begin{prooftree}
\hypo{M \twoheadrightarrow \vec{M}}
\hypo{\sigma \sqcup \Mem(\vec{M})\ ||\ \Mem(\vec{N})}
\hypo{M^\sigma \notin \Supp(\vec{N})}
 \infer3[$(sup)$]{\alpha \cdot M^\sigma + \vec{N}\twoheadrightarrow \alpha \cdot \vec{M}^{\{\sigma\}}+ \vec{N}}
\end{prooftree}}
\]
\end{center}
\hrulefill
\caption{Reduction relation $\twoheadrightarrow$}
\label{fig:rel}
\end{figure}

We now provide a some intuition on the rules of Figure~\ref{fig:rel}.

The unitary rule $(un)$ is the one that introduces sums of terms. Namely, $U\ket{k}$ reduces to a linear combination whose amplitudes are given by matrix components of $U$.

Measurement statements are reduced by first evaluating the control term (here $P$). Then, if the control reaches a qubit state, we can either apply rule ($m_0$) or rule ($m_1$).
Rule $(m_k)$ reads as follows: $k$ is as the outcome of the measurement, and $\ket{k'}$ is the state of the qubit being measured. If $k$ and $k'$ are different, then the result must be $0$, hence the scalar $\delta_{k,k'}$. The memory function records the measurement outcome obtained by the measurement device $d$, which here is $k$. Notice that ($m_0$), ($m_1$) and ($sup$) are the only rules that produce non-trivial memory functions.

Similarly, reductions of the qcase statement involve first reducing the control term (hence the first rule for the qcase). If the control is given by a qubit state $\ket{k}$, this state must be preserved in the output. Therefore in the final term (in the second rule for the qcase), $\ket{k}$ is paired with the corresponding branch $M_k$, within a $\lambda$-abstraction.

In the case of an arbitrary configuration (rule $(sup)$), the relation is defined by reducing one summand at a time. The memory functions of each summand are constructed on the fly by recording outcomes of measurements as they occur.
Here, measurement statements in different summands that are assigned the same device should be understood as representing the same physical measurement. Therefore the outcomes occuring in each summand should be identical. This is formalized by the hypothesis $\sigma \sqcup \Mem(\vec{M})\ ||\ \Mem(\vec{N})$, which ensures that at each step of the reduction, the memory functions of each summand remain consistent.
The hypothesis $M^{\sigma}\notin \Supp(\vec{N})$ indicates that we are reducing one of the summands of the canonical form.
This is to avoid artificially constructing an infinite derivation, for instance by starting with $M\equiv \sfrac{1}{2}\cdot M + \sfrac{1}{4}\cdot M+...$ and reducing the fractions one by one.

The reflexive transitive closure of $\twoheadrightarrow$ will be written as $\twoheadrightarrow^*$.

Notice that the operational semantics is nondeterministic: first, a transition from a configurations is obtained by applying a reduction rule to any one of its summands. Second, if the chosen summand is a measurement, there are two distinct applicable rules, corresponding to each measurement outcome.

\begin{remark}
\label{rem:unitarity}
Concretely, the reduction of a term $M\in\mathscr{T}$ defined by the programmer will have the following form:
\[
 \alift{M}^{[\bot]}
 =\vec{N_0} \twoheadrightarrow \vec{N_1}\twoheadrightarrow \vec{N_2} \twoheadrightarrow \cdots
\]
where $ \alift{M}\in\mathscr{E}$ and $\vec{N_0},\vec{N_1},\vec{N_2}...\in\vec{\mathscr{E}}$.
Treating the original set of terms $\mathscr{T}$ and the set of configurations $\vec{\mathscr{E}}$ as distinct has the following considerable advantage. Sums of terms allow us to express coherent control in a small-step transition system, and simultaneously, starting from a term of $\mathscr{T}$ greatly simplifies checking that the computation is physically meaningful.

In quantum programming languages with linear combinations of terms, verifying unitarity, or more generally physicality, is known to be a major difficulty~\cite{realizability, valiron}.
Here, it is sufficient to check the initial term $M$ is physical, which amounts to checking that it is well typed.\footnote{The notion of ``physically meaningful'' will be clarified in Section~\ref{section:denotational}.}
In this case, the terms $\vec{N_0},\vec{N_1},\vec{N_2},...$ will be considered physical on account of being reachable from $M$.
This will be further justified when we state the soundness theorem (Theorem~\ref{thm:soundness}) in Section~\ref{section:soundness}.
\end{remark}

\begin{remark}
Although device references were introduced in order to define the reduction of processes with indefinite causal orders, they can also be seen as a solution for treating the nonlinear measurement operation within a linear transition system.
For instance, in the reduction
\begin{align*}
\meas \ d\triangleright (H\ket{0}) \ \{0 \to M\ |\ 1\to N\}
&\twoheadrightarrow \frac{1}{\sqrt{2}}\cdot \meas \ d\triangleright \ket{0} \ \{0 \to M\ |\ 1\to N\} \\
&\qquad +  \frac{1}{\sqrt{2}}\cdot \meas \ d\triangleright \ket{1} \ \{0 \to M\ |\ 1\to N\}
\end{align*}
where $H= \sfrac{1}{\sqrt{2}} \begin{psmallmatrix}1&1\\1&-1\end{psmallmatrix}$ is the Hadamard gate,
the outcomes of the measurements in the two summands must be synchronized.
Another solution for expressing measurement within a linear calculus is given in~\cite{meas2017,meas2019}, where the reduction strategy guarantees that measurements are treated consistently.
\end{remark}

\subsection{Examples}

\begin{example}[Quantum switch]
\label{ex:switch_reduction}
 We are now able to give an execution of the quantum switch.  We consider the term $\mathtt{switch} \ \pair{\Meas,\Meas} \ (H\ket{0}) \ \ket{0}$, where $H= \sfrac{1}{\sqrt{2}} \begin{psmallmatrix}1&1\\1&-1\end{psmallmatrix}$ is the Hadamard gate and  the term $\Meas = \lambda c. \meas\ c\ \{0\to\ket{0}\ |\ 1\to \ket{1}\}$ is taken from Section~\ref{section:design}.
 The first step is to assign each measurement to a distinct measurement device (see Section~\ref{section:design}). We will write this as
 \[
 \alift{ \mathtt{switch} \ \pair{\Meas,\Meas} \ (H\ket{0}) \ \ket{0} }\ = \ \mathtt{switch} \ \pair{\Measd{d},\Measd{e}} \ (H\ket{0}) \ \ket{0},
 \]
Then we have the following sequence of reductions:\version{\footnote{The step-by-step sequence of reductions is detailed in Appendix~\ref{app:example_switch}.}}{}
\begin{align*}
 &\mathtt{switch} \ \pair{\Measd{d},\Measd{e}} \ (H\ket{0}) \ \ket{0}\\
 &\quad \twoheadrightarrow^* \ \sfrac{1}{\sqrt{2}}\cdot  \bigl(\lambda q. \qcase\ q\ \{0\rightarrow \mathtt{comp} \ \pair{\Measd{d},\Measd{e}} \ |\ 1\rightarrow \mathtt{comp} \ \pair{\Measd{e},\Measd{d}} \}\bigr) \ \ket{0}\ \ket{0}
 \\
 &\quad \quad \ \ + \sfrac{1}{\sqrt{2}} \cdot \bigl(\lambda q. \qcase\ q\ \{0\rightarrow \mathtt{comp} \ \pair{\Measd{d},\Measd{e}} \ |\ 1\rightarrow \mathtt{comp} \ \pair{\Measd{e},\Measd{d}} \}\bigr)\ \ket{1}\ \ket{0}
 \\
 &\quad \twoheadrightarrow^* \ \sfrac{1}{\sqrt{2}}\cdot   \lambda s. \pair{\ket{0}, \mathtt{comp} \ \pair{\Measd{d},\Measd{e}}\ s}\,\ket{0}
 +\ \sfrac{1}{\sqrt{2}} \cdot\lambda s. \pair{\ket{1}, \mathtt{comp} \ \pair{\Measd{e},\Measd{d}}\ s}\,\ket{0}
\\
 &\quad \twoheadrightarrow^* \ \sfrac{1}{\sqrt{2}}\cdot   \pair{\ket{0}, \Measd{e}(\Measd{d}\ket{0})}
 +\ \sfrac{1}{\sqrt{2}} \cdot\pair{\ket{1}, \Measd{d}(\Measd{e}\ket{0})}
\end{align*}

\end{example}

\begin{remark}\label{rm:qcase}
 In light of this example, we can explain our typing rule for the $\qcase$ primitive (Figure~\ref{fig:typing_linear}) in more detail. One could opt for the following, perhaps simpler, rule:
 \[
\scalebox{0.9}{
 \begin{prooftree}
  \hypo{\Delta \vdash P:\q}
  \hypo{\Delta' \vdash M:B}
  \hypo{\Delta' \vdash N: B}
  \hypo{M,N\in\mathscr{C}}
  \infer4[qcase]{\Delta,\Delta'\vdash \qcase\ P\ \{0\rightarrow M\ |\ 1\rightarrow N\}: \q\otimes B}
 \end{prooftree}}
\]
where the overall type is given by tensoring a qubit (for the output of the control) with the type of the branches.
However this would enable the programmer to define unphysical terms. For example, the quantum switch term would reduce to a superposition of terms of the form $\pair{\ket{k},M}$ where $\ket{k}$ is the state of the control qubit. Then we could write:
\[
 \mathtt{let}\ \pair{v,w}=\pair{\ket{k},M} \ \mathtt{in}\ wv
\]
Essentially, this term plugs the output of the control qubit into the input of the target qubit (see Figure~\ref{fig:switch}).
One can easily show that the corresponding process is unphysical, for instance by choosing as branching statements the Hadamard gate $H$ and the phase gate $S$.

 An essential feature of our language is that typing alone should be enough to guarantee the physicality of a term. Hence we restrict the typing of $\qcase$ by requiring that each branch have a type of the form $A\lolli \q^n$.
\end{remark}

\begin{example}
\label{ex:meas_transfer}
The operational semantics works with the principle of synchronizing measurement outcomes occuring in different summands. Then, what happens if we force conflicting measurements outcomes in different branches of a $\qcase$?
 Consider the following program:
\begin{align*}
 \mathtt{N} \ :=&\ \left(\lambda w. \qcase\ H\ket{0} \ \{ 0 \rightarrow \lambda t. \pair{t,w\,\ket{0}} \ |\ 1\rightarrow \lambda t.\pair{t,w\,\ket{1}}\}\right) \Meas \ \ket{k}
 \end{align*}
 where $\Meas$ is defined as in Section~\ref{section:design}, $H$ is the Hadamard gate, $X$ the Pauli-$X$ gate and $k\in\{0,1\}$. In the 0 branch (resp. the 1 branch), the measured qubit is in state $\ket{0}$ (resp. $\ket{1}$), therefore the measurement outcome should be $\ket{0}$ (resp. $\ket{1}$). We write $\mathtt{N'}:= \alift{\mathtt{N}}^{[\bot]}$. Depending on the measurement outcomes, we have the two alternative reductions:\version{\footnote{See details in Appendix~\ref{app:meas_transfer}.}}{}
 \begin{align*}
  \mathtt{N'} \twoheadrightarrow^* &\ \frac{1}{\sqrt{2}}\cdot \pair{\ket{0},\ket{k},\ket{0}}^{[d\mapsto 0]} + 0\cdot \pair{\ket{1},\ket{k},\ket{0}}^{[d\mapsto 0]}
  \\
  \mathtt{N'} \twoheadrightarrow^* &\ 0 \cdot \pair{\ket{0},\ket{k},\ket{1}}^{[d\mapsto 1]} + \frac{1}{\sqrt{2}} \cdot \pair{\ket{1},\ket{k},\ket{1}}^{[d\mapsto 1]}
 \end{align*}
 Here the control qubit is the first qubit of the triple. Since each term only has one non-zero summand, we can conclude that forcing conflicting measurements equivalent to measuring the control qubit.

\end{example}

\subsection{Properties of the Operational Semantics}

\subsubsection{Safety Properties}

The language satisfies the following standard safety properties.

\begin{restatable}[Substitution]{lemma}{lemsubstitution}
\label{lem:substitution}
Let $M,N\in\mathscr{E}$.
Suppose $\Delta_1,x:A,\Delta_2 \vdash M:B$ and $\Delta_3 \vdash N:A$ are valid typing judgments such that $\Delta_1,\Delta_2$ and $\Delta_3$ have no variables in common.  Then $\Delta_1,\Delta_2,\Delta_3\vdash M\{N/x\}:B$ is a valid typing judgment.
\end{restatable}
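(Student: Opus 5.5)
We argue by induction on the derivation of $\Delta_1,x:A,\Delta_2 \vdash M:B$, generalising the statement so that $x$ may occur at any position in the context. The exchange rule (X) forces this generalisation, since it can move $x$. The base cases are immediate.

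\emph{Base cases.} For (ax), the context is exactly $x:A$. Then $\Delta_1,\Delta_2$ are empty, $M=x$, $B=A$, and $M\{N/x\}=N$, so the result is the hypothesis $\Delta_3\vdash N:A$. The rules (unit), (qbit) and ($\one$I) have empty contexts, so they cannot contain $x$ and do not arise.

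\emph{Multiplicative rules.} For ($\lolli$E), ($\otimes$I), (seq) and ($\otimes$E), the context splits as $\Gamma,\Gamma'$, and $x$ lies in exactly one of the two parts. In that premise we apply the induction hypothesis. We leave the other premise unchanged: $x$ does not occur free there, so substitution does not affect it. We then reapply the rule, using (X) to restore the order $\Delta_1,\Delta_2,\Delta_3$.

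\emph{Binders.} For ($\lolli$I) and the bound variables of ($\otimes$E), we first $\alpha$-rename so that bound variables are distinct from $x$ and from the free variables of $N$. The variables of $\Delta_3$ are disjoint from those of $\Delta_1,\Delta_2$, so the extended context stays well formed.

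\emph{Exchange.} For (X), if $x$ is not one of the two swapped variables we apply the induction hypothesis directly. If $x$ is swapped, its position in the premise differs from its position in the conclusion; the generalised statement covers any position, so we apply it to the premise and then reapply exchanges.

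\emph{Main obstacle: (meas) and (qcase).} These rules share the context $\Delta'$ between two branches. If $x\in\Delta$, we substitute into $P$ only, and the branches are unchanged. If $x\in\Delta'$, then $x$ occurs in both $M$ and $N$, and substitution replaces both occurrences with the same $N$. We apply the induction hypothesis to each branch separately, obtaining two derivations with the same context $(\Delta'\setminus x),\Delta_3$. Hence the rule applies again, with the same $n$ in the (qcase) case.

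This is where the execution-term setting matters. The lemma is stated for $\mathscr{E}$, whose (qcase) rule has no controllability side condition. So substituting a term $N$ that contains measurements into the branches causes no problem. For $\mathscr{T}$ the corresponding lemma would require $N\in\mathscr{C}$.

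Device references need no extra care. $M\{N/x\}$ may duplicate device references of $N$ across the two branches, but typing places no constraint on $\text{Dev}$, so well-typedness is preserved.
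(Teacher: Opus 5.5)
Your proof is correct and follows the same route as the paper: structural induction on the typing derivation, with $\alpha$-renaming of binders away from $\Delta_3$ and exchange steps to restore the context order. The paper spells out only the $\lolli$I case and calls the rest ``similar''; your treatment of the shared context $\Delta'$ in (meas) and (qcase), and your remark that it is the relaxed (qcase) rule on $\mathscr{E}$ that allows measurement-containing terms to be substituted into branches, correctly fill in those cases.
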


The statement is proved by structural induction on the derivation of $\Delta_1,x:A,\Delta_2 \vdash M:B$. Consequently, the language enjoys the usual subject reduction property:
\begin{restatable}[Subject reduction]{proposition}{propsubjectreduction}
\label{prop:subject_reduction}
Let $M\in\mathscr{E}$.
If $\Delta\vdash M:A$ is a valid typing judgment and $M\twoheadrightarrow \sum_i \alpha_i \cdot M_i^{\sigma_i}$, then for all $i$, $\Delta\vdash M_i:A$ is a valid typing judgment.
\end{restatable}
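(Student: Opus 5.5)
We prove the statement by induction on the derivation of $M\twoheadrightarrow \sum_i \alpha_i\cdot M_i^{\sigma_i}$. Since $M$ is an initial configuration, the last rule is never $(sup)$; it is either an axiom (a redex rule) or a congruence rule that propagates a reduction through an evaluation context. In every case, we first invert the typing derivation of $\Delta\vdash M:A$. Inversion is routine, except that the exchange rule (X) may occur at any point. We handle this by first noting that (X) commutes with all other rules, so every valid judgment has a derivation ending in a syntax-directed rule followed by a permutation of the context. We then treat the permutation separately, since any judgment obtained by permuting $\Delta$ is still derivable.

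\textbf{Base cases.}
\begin{itemize}
\item For $(\lambda x.M)V\twoheadrightarrow M\{V/x\}$, inversion gives $\Delta_1,x:B\vdash M:A$ and $\Delta_2\vdash V:B$ with $\Delta=\Delta_1,\Delta_2$. Lemma~\ref{lem:substitution} then yields $\Delta_1,\Delta_2\vdash M\{V/x\}:A$.
\item The rule $\mathtt{let}\ \pair{x,y}=\pair{V,W}\ \mathtt{in}\ N\twoheadrightarrow N\{V/x,W/y\}$ is handled the same way, applying Lemma~\ref{lem:substitution} twice. The contexts of $V$ and $W$ are disjoint and also disjoint from that of $N$, which justifies the disjointness hypotheses of the lemma.
\item For $\unit;N\twoheadrightarrow N$, inversion on $\vdash\unit:\one$ forces the first context to be empty, so $\Delta\vdash N:A$.
\item For $(un)$, $U\ket{k}$ has type $\q$ in the empty context, and both $\ket0,\ket1$ are of type $\q$.
\item For $(m_k)$, the single summand is $M_k$, and the meas rule gives $\Delta'\vdash M_k:A$. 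Here $\Delta$ is just $\Delta'$, since $\vdash\ket{k'}:\q$ has empty context.
\item For the qcase axiom, from $\Delta'\vdash M_k:B\lolli\q^n$ we derive $\Delta',t:B\vdash M_k\,t:\q^n$ using (ax) and ($\lolli$E). Then ($\otimes$I) with $\vdash\ket{k}:\q$ gives $\q\otimes\q^n=\q^{n+1}$, and ($\lolli$I) closes the case. Freshness of $t$ guarantees that $\Delta',t:B$ is a valid context.
\end{itemize}

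\textbf{Inductive cases.} These are the congruence rules: $MN$, $MV$, pairs, $\mathtt{let}$, $;$, and the control positions of meas and qcase. Inversion splits $\Delta$ as $\Delta_1,\Delta_2$, with the reducing subterm typed in $\Delta_1$ at some type $B$. The induction hypothesis applied to that premise gives that every summand $N_i$ of the resulting configuration satisfies $\Delta_1\vdash N_i:B$. The $i$-th summand of the conclusion is $E[N_i]$ by the definition of contexts applied to configurations, so reapplying the same typing rule yields $\Delta\vdash E[N_i]:A$.

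Two details need care here. First, for qcase the execution-term rule has no controllability premise, so no side condition has to be re-established. Second, the congruence rules are stated with the result written as a configuration. If the configuration the sub-reduction produces is not in canonical form, we work with the literal sum produced by the rule, and each summand of the canonical form is one of these literal summands. This holds because canonicalisation only merges identical $M_i^{\sigma_i}$.

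\textbf{Main obstacle.} We expect the main difficulty to be the interaction of inversion with the exchange rule and with context splitting in the binary rules, rather than any genuinely hard case. The first thing to establish is therefore the small lemma that typing is invariant under context permutation, together with a generation (inversion) lemma for each syntactic form.
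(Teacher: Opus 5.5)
Your proposal is correct and follows the paper's own argument: induction on the derivation of the reduction, peeling off exchange rules, with Lemma~\ref{lem:substitution} applied once in the $\beta$-case and twice in the pair-destructor case, and every other axiom or congruence rule handled by directly re-typing. One small inaccuracy: $(sup)$ can in fact be the last rule even from an initial configuration (take $\alpha=1$, $\sigma=[\bot]$, $\vec{N}=\vec{0}$), but its conclusion is then $\vec{M}^{\{[\bot]\}}=\vec{M}$, exactly its premise, so the induction hypothesis handles it immediately.
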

This is proved by induction on the derivation of $M\twoheadrightarrow \sum_i \alpha_i \cdot M_i^{\sigma_i}$.

To prove progress, it is convenient to give the following definition:
\begin{definition}[Final configuration]
Let $\vec{M}=\sum_{i=1}^n \alpha_i\cdot M_i^{\sigma_i}$ be a well-formed configuration.
$\vec{M}$ is said to be a \emph{final configuration} if for all $i\in \{ 1,\ldots,n\}$, $M_i\in\mathscr{V}$.
The set of final configurations will be written as $\vec{\mathscr{V}}$.
\end{definition}

\begin{restatable}[Progress]{lemma}{lemprogress}\label{lem:progress}
\
\begin{itemize}
\item For all well-typed closed term $M\in\mathscr{E}$, either $M\in\mathscr{V}$ or there exists a reduction $M\twoheadrightarrow \vec{M}$.
\item For all well-formed closed configuration $\vec{M}$, either $\vec{M}\in\vec{\mathscr{V}}$ or there exists a reduction $\vec{M}\twoheadrightarrow \vec{M'}$.
\end{itemize}
\end{restatable}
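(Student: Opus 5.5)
For the first item, I will argue by structural induction on the typing derivation of the closed term $M$, i.e.\ on the last rule used to type $M$ in the empty context. This needs a canonical forms lemma, proved first by inspecting the typing rules and the grammar of values: a closed value of type $\q$ is $\ket{0}$ or $\ket{1}$; of type $A\otimes B$ it is $\pair{V,W}$; of type $\one$ it is $\unit$; of type $A\lolli B$ it is $\lambda x.N$ or a unitary $U$. Variables are excluded by closedness, and the exchange rule is handled directly by the induction hypothesis.

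\emph{Values and introduction forms.} The cases $\lambda x.N$, $\unit$, $U$, $\ket{k}$ are values. For $\pair{N_1,N_2}$: if $N_2$ is not a value, it reduces by induction and the rule for $\pair{M,N}$ applies; otherwise if $N_1$ reduces, the rule for $\pair{M,V}$ applies; otherwise it is a value.

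\emph{Application $N_1N_2$.} If $N_2$ reduces, use the first application rule. Otherwise $N_2=V$, and if $N_1$ reduces, use the rule $MV\twoheadrightarrow\vec MV$. Otherwise $N_1$ is a value of arrow type. If it is $\lambda x.N$, use $\beta$. If it is $U$, then $V:\q$ is $\ket{k}$ and rule $(un)$ fires.

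\emph{Eliminations.} For $\mathtt{let}$, $M;N$, $\meas\ d\triangleright P$ and $\qcase\ P$, the scrutinee is closed and typed, since its context $\Delta$ is empty, so it can be either reduced by induction (using the congruence rules) or it is a value. By canonical forms that value is $\pair{V,W}$, $\unit$, or $\ket{k}$, and the corresponding redex rule applies. For $\meas$ we use $(m_0)$, which always applies, possibly with a zero scalar. For $\qcase$ the rule producing $\lambda t.\pair{\ket{k},M_kt}$ applies with $t$ fresh.

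For the second item, take the canonical form $\sum_i\alpha_i\cdot M_i^{\sigma_i}$. If it is not final, pick $j$ with $M_j\notin\mathscr V$. By the first item, $M_j$ is well typed (by well-formedness) and closed, so $M_j\twoheadrightarrow\vec M$. We then apply $(sup)$ with $\vec N$ the remaining summands; the condition $M_j^{\sigma_j}\notin\Supp(\vec N)$ holds by canonicity.

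\emph{Main obstacle: the consistency side condition} $\sigma_j\sqcup\Mem(\vec M)\,||\,\Mem(\vec N)$. Only the measurement rules create memory, so I will strengthen the first item by making the choice of reduction adapt to a given memory. Specifically, I will prove that for any memory $\tau$ with $\tau\,||\,$ … the reduction can be chosen so that $\Mem(\vec M)\,||\,\tau$.
\begin{itemize}
\item In the measurement base case for device $d$, if $\tau(d)=k\neq\bot$ choose $(m_k)$; otherwise either choice works.
\item All other rules produce $[\bot]$ memories or propagate the inductive choice.
\end{itemize}
Instantiating $\tau=\Mem(\vec N)\sqcup\sigma_j$, which exists by well-formedness, gives the side condition. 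One also checks that $\Supp(\sigma_j)\cap\mathrm{Dev}(M_j)=\emptyset$ means $d\notin\Supp(\sigma_j)$, so there is no clash with $\sigma_j$ itself. The resulting configuration is a valid $(sup)$ step, proving progress.
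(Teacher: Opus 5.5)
Your proof is correct and takes essentially the paper's route. The first item goes by structural induction. For the second, you reduce a non-value summand of the canonical form via $(sup)$, and the only obstruction, a conflicting $(m_k)$ outcome, is removed by choosing the other measurement outcome. The paper makes that choice after the fact: if the chosen step breaks consistency, it must come from $(m_k)$ under congruence rules, so it switches to $(m_{1-k})$. You build the choice into a strengthened first item instead. That strengthened claim holds for every memory $\tau$ with no side condition, so the unfinished hypothesis $\tau\,||\,\ldots$ can simply be dropped.
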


\subsubsection{Uniqueness of the Normal Form}

In this section, we consider configurations $\vec{M}$ that can be reduced to a given final configuration $\vec{V}$.
There are two sources of nondeterminism in the reduction relation $\twoheadrightarrow$: the choice of summand to be reduced, and the choice of measurement outcome. When it comes to the choice of summand,
we show uniqueness of the normal form.
More precisely, suppose we have a configuration $\vec{M}$. Given a fixed set of measurement outcomes for all devices of $\text{Dev}(\vec{M})$, if $\vec{M}$ is a normalizing term, then all reductions from $\vec{M}$ lead to the same final configuration.
This is formalized by the following confluence-like property:
\begin{restatable}[Uniqueness of the normal form]{proposition}{propconfluence}
\label{prop:confluence}
Suppose that $\vec{M}\twoheadrightarrow^* \vec{N}$ and $\vec{M}\twoheadrightarrow^* \vec{V}$, where $\vec{V}\in\vec{\mathscr{V}}$ and $\Mem(\vec{N})\ ||\ \Mem(\vec{V})$. Then we have $\vec{N}\twoheadrightarrow^*\vec{V}$. Diagrammatically,
 \[
 \scalebox{0.9}{
  \begin{tikzcd}[ampersand replacement=\&]
   \&\vec{M}
   \ar[dl,twoheadrightarrow,"*"']
   \ar[dr,twoheadrightarrow,"*"]
   \&
   \\
   \vec{N}
   \ar[rr,twoheadrightarrow,dashed,"*"]
   \&\&\vec{V}
\end{tikzcd}}
 \]
\end{restatable}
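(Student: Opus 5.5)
We prove the statement by a diamond-style argument: a local confluence lemma restricted to memory-consistent steps, followed by an induction on the length of the reduction $\vec{M}\twoheadrightarrow^*\vec{V}$, which is finite because $\vec{V}$ is final.

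\emph{Determinism of single terms.} We first show that for a single execution term $M$, the reductions $M\twoheadrightarrow\vec{M}$ are determined up to the choice of measurement outcome. More precisely, if $M\twoheadrightarrow\vec{M}_1$ and $M\twoheadrightarrow\vec{M}_2$ with $\Mem(\vec{M}_1)\ ||\ \Mem(\vec{M}_2)$, then $\vec{M}_1\equiv\vec{M}_2$. The proof is by induction on the derivation, using the facts that evaluation contexts decompose a non-value uniquely, with the call-by-basis, right-to-left order fixed by the rules $MN$, $MV$, $\pair{M,N}$ and $\pair{M,V}$. At a redex, only the rules $(m_0)$ and $(m_1)$ overlap. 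These overlap only at the same device $d$, where they yield $[d\mapsto 0]$ and $[d\mapsto 1]$, which are inconsistent.

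\emph{Local diamond for configurations.} All non-determinism on configurations now comes from rule $(sup)$ choosing different summands. Suppose $\vec{M}\twoheadrightarrow\vec{N}_1$ by reducing summand $M_i^{\sigma_i}$ and $\vec{M}\twoheadrightarrow\vec{N}_2$ by reducing summand $M_j^{\sigma_j}$, and $\Mem(\vec{N}_1)\ ||\ \Mem(\vec{N}_2)$. We split into two cases.
\begin{itemize}
\item If $i=j$, the first step gives $\vec{N}_1\equiv\vec{N}_2$.
\item If $i\neq j$, we reduce $M_j$ in $\vec{N}_1$ and $M_i$ in $\vec{N}_2$, reaching a common configuration in one step each.
\end{itemize}
The second case has two side conditions to check. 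The consistency premise of $(sup)$ holds because the joint memory is $\Mem(\vec{N}_1)\sqcup\Mem(\vec{N}_2)$, which exists by hypothesis. The condition $M^\sigma\notin\Supp(\vec{N})$ may fail when the reduct of one summand produces a term already present in the other. Here we work with canonical forms, merging coefficients via $\equiv$, and observe that reducing a merged summand is equivalent to reducing both occurrences. This is where the weak vector space axioms, in particular the absence of $0\cdot\vec{M}=\vec{0}$, must be handled with care.

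\emph{Induction.} We induct on the length $n$ of $\vec{M}\twoheadrightarrow^n\vec{V}$, with an inner induction on the length of $\vec{M}\twoheadrightarrow^*\vec{N}$.
\begin{itemize}
\item If $n=0$, then $\vec{M}$ is final, so $\vec{N}=\vec{M}$ and the claim holds trivially.
\item Otherwise, write $\vec{M}\twoheadrightarrow\vec{M}'\twoheadrightarrow^*\vec{V}$ and $\vec{M}\twoheadrightarrow\vec{N}'\twoheadrightarrow^*\vec{N}$.
\end{itemize}
Memories only grow along reductions, since $(sup)$ joins $\sigma$ and $(m_k)$ adds a fresh entry. Hence $\Mem(\vec{N}')\preceq\Mem(\vec{N})$ and $\Mem(\vec{M}')\preceq\Mem(\vec{V})$, and these memories are consistent. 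The diamond then gives a common $\vec{P}$ with $\vec{M}'\twoheadrightarrow^{\le1}\vec{P}$ and $\vec{N}'\twoheadrightarrow^{\le1}\vec{P}$, and we must ensure $\Mem(\vec{P})\ ||\ \Mem(\vec{V})$. The outcome chosen in any measurement step of $\vec{P}$ is forced to match $\Mem(\vec{V})$ where defined, and if it is undefined there, that device never gets measured on the way to $\vec{V}$. Applying the induction hypothesis to $\vec{M}'$ gives $\vec{P}\twoheadrightarrow^*\vec{V}$, and then the inner induction applied to $\vec{N}'$ concludes.

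\emph{Main obstacle.} The main obstacle is this memory-consistency bookkeeping, specifically showing that a measurement performed in $\vec{N}$ on a device absent from $\Mem(\vec{V})$ is impossible. A measurement redex in some summand must eventually fire on the path to a final $\vec{V}$, because values contain no pending redex at top level. In that case its device appears in $\Mem(\vec{V})$, using that memories are never erased even for zero-amplitude summands, which is exactly why $0\cdot\vec{M}\equiv\vec{0}$ was excluded.
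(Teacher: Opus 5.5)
\textbf{There is a genuine gap: the local diamond you claim is false, and induction on the length of a reduction to $\vec{V}$ does not close.} When you reduce two different summands $M_i^{\sigma_i}$ and $M_j^{\sigma_j}$, and a reduct of $M_i$ coincides with $M_j^{\sigma_j}$, the two sides do not close in one step each. Here is an example.
\begin{itemize}
\item Take $B:=(\lambda x.x)\ket{0}$, $A:=\unit;B$ and $\vec{M}:=\alpha\cdot A+\beta\cdot B$.
\item Reducing $A$ gives $(\alpha+\beta)\cdot B$, whose only reduct is $(\alpha+\beta)\cdot\ket{0}$.
\item Reducing $B$ gives $\alpha\cdot A+\beta\cdot\ket{0}$, whose only reduct is $\alpha\cdot B+\beta\cdot\ket{0}$, so a second step is needed.
\end{itemize}
Your remark that reducing a merged summand amounts to reducing both occurrences only covers one side. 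On the other side, the copy of $M_j^{\sigma_j}$ created by reducing $M_i$ appears after $M_j$ has already been reduced, so it must be reduced again. This is why the paper's proof has three closing shapes ($1+1$, $2+1$ and $1+2$ steps). It must also exclude the case where each reduct reproduces the other summand. That case need not be joinable, and it is ruled out only because $\vec{V}$ is reachable.

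Once joins can take two steps, induction on the length $n$ of a fixed reduction $\vec{M}\twoheadrightarrow^n\vec{V}$ no longer closes.
\begin{itemize}
\item Your hypothesis at $\vec{M'}$ yields $\vec{P}\twoheadrightarrow^*\vec{V}$ with no bound on its length. So the inner induction, which is tied to the same $n$, cannot be invoked at $\vec{N'}$: you do not know that $\vec{N'}$ reaches $\vec{V}$ in at most $n$ steps.
\item The distance to $\vec{V}$ genuinely need not drop. In the example, $\vec{M}\twoheadrightarrow^2(\alpha+\beta)\cdot\ket{0}$, yet the reduct $\alpha\cdot A+\beta\cdot\ket{0}$ still needs two steps.
\item Carrying a length bound in the statement breaks exactly when the $\vec{N'}$ side of a join takes two steps.
\item Abstractly, joinable local peaks plus one normalizing reduction are not enough. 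Consider $b\to a$, $b\to c$, $c\to b$, $c\to d$: every peak is joinable, $b$ reaches the normal form $a$, yet $b\twoheadrightarrow^* d$ and $d$ cannot reach $a$.
\end{itemize}

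What is missing is a well-founded measure on configurations, independent of any chosen reduction, that strictly decreases along every memory-consistent step that can still reach $\vec{V}$. The paper's $L_\nu$, with $\nu=\Mem(\vec{V})$, is exactly this. For each summand of the canonical form, it counts the steps of that summand's unique $\nu$-compatible evaluation; your determinism step is what makes this evaluation unique. It then sums these counts. It satisfies $L_\nu(\vec{M'})\le L_\nu(\vec{M})-1$, and it bounds the length of every reduction to $\vec{V}$. If you induct on $L_\nu$ instead of $n$, two-step joins become harmless. The hypothesis applies at $\vec{M'}$ whatever the number of steps to the join point. Once $\vec{N'}\twoheadrightarrow^*\vec{V}$ is known, it also applies at $\vec{N'}$, since $L_\nu(\vec{N'})<L_\nu(\vec{M})$.

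By contrast, the memory bookkeeping you flag as the main obstacle is routine. The closing steps replay transitions already taken, so $\Mem(\vec{P})=\Mem(\vec{M'})\sqcup\Mem(\vec{N'})$. This is consistent with $\Mem(\vec{V})$, because $\Mem(\vec{M'})\preceq\Mem(\vec{V})$ and $\Mem(\vec{N'})\ ||\ \Mem(\vec{V})$.
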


The consistency condition between the memory functions of $\vec{N}$ and $\vec{V}$ is necessary because the result only applies when measurement outcomes in either branch are not in contradiction.
Therefore, for the same measurement outcomes, the value distribution to which a term can be reduced (when it exists) is unique up to congruence.

\section{Denotational Semantics}
\label{section:denotational}

To define the denotational semantics, we give separate interpretations for the set of controllable terms $\mathscr{C}$ and the set of all terms $\mathscr{T}$.
Controllable terms, which are deterministic, are given a denotational semantics $[\cdot]$ in the category of finite-dimensional Hilbert spaces $\FHilb$ (Section~\ref{section:denotational_controllable}).
The full denotational semantics is based on the Caus construction~\cite{kissinger2019categorical}. Specifically, the interpretation is defined in $\Caus[\CPM]$, where $\CPM$ is the category of completely positive maps (Section~\ref{section:interpretation_full}).

\subsection{Compact Closed Categories}

The denotational semantics will be defined in terms of \emph{symmetric monoidal categories} (SMCs).
An SMC $(\cat,\otimes,I)$ is a category equipped with a parallel composition operation $\otimes$ for objects and morphisms, a unit object $I$ and a swap morphism $\gamma_{A,B}$ for all objects $A,B$ satisfying some consistency equations.
In this paper, SMCs will always be assumed to be strict, that is, satisfying the equalities $(A\otimes B)\otimes C = A\otimes (B\otimes C)$ and $A\otimes I = A = I\otimes A$.
We note that every SMC is equivalent to a strict one~\cite{maclane}.

\begin{definition}[Symmetric monoidal closed category]
 A \emph{symmetric monoidal closed category} (SMCC) is a symmetric monoidal category $(\cat,\otimes,I)$ such that for every object $B$, the functor $-\otimes B : \cat \to\cat$ has a right adjoint $B\lolli -:\cat\to\cat$.
 We will write the corresponding bijections as:
 \[
  \phi_{A,B,C}: \cat(A\otimes B,C)\cong \cat(A,B\lolli C).
 \]

\end{definition}
We will omit the subscript $A,B,C$ of $\phi_{A,B,C}$ when the objects involved are clear from the context.

Compact closed categories are a particular case of SMCCs:
\begin{definition}[Compact closed category]
An SMC $\cat$ is said to be \emph{compact closed} if every object has a \emph{dual object} $A^*$, i.e. there exist morphisms $\eta_A:I \to A^*\otimes A$ and $\epsilon_A:A\otimes A^* \to I$ satisfying:
\[
 (\epsilon_A\otimes 1_A)\circ(1_A\otimes \eta_A) = 1_A \qquad (1_{A^*}\otimes \epsilon_A)\circ(\eta_A\otimes 1_{A^*})=1_{A^*}
\]
\end{definition}

As in~\cite{kissinger2019categorical}, we assume the equalities $I^*=I$,
$A^{**}=A$ and $(A\otimes B)^*=A^*\otimes B^*$ throughout. This is without loss
of generality: in any compact closed category these three identifications are
canonical and coherent~\cite{kelly_laplaza}, so every compact closed category is
monoidally equivalent to one in which they hold strictly.

The morphisms $\epsilon_A$ and $\eta_A$ will be represented diagrammatically with cups and caps, respectively, so that the above equations can be written as:
\[
\scalebox{0.9}{
 \begin{tikzpicture}
	\begin{pgfonlayer}{nodelayer}
		\node [style=none] (0) at (0, 0) {};
		\node [style=none] (1) at (0, 1.25) {};
		\node [style=none] (2) at (0, -1.25) {};
		\node [style=none] (3) at (-1.5, 1.25) {};
		\node [style=none] (4) at (1.5, -1.25) {};
		\node [style=texthere] (5) at (-1.5, 0.8) {$A$};
		\node [style=texthere] (6) at (0, 0.25) {$A^*$};
		\node [style=texthere] (7) at (1.5, -1) {$A$};
	\end{pgfonlayer}
	\begin{pgfonlayer}{edgelayer}
		\draw (3.center) to (1.center);
		\draw [bend left=90, looseness=1.50] (1.center) to (0.center);
		\draw [bend left=270, looseness=1.50] (0.center) to (2.center);
		\draw (2.center) to (4.center);
	\end{pgfonlayer}
\end{tikzpicture}\ = \ \begin{tikzpicture}
	\begin{pgfonlayer}{nodelayer}
		\node [style=none] (0) at (-0.75, 0) {};
		\node [style=none] (1) at (0.75, 0) {};
		\node [style=texthere] (2) at (0.75, 0.25) {$A$};
	\end{pgfonlayer}
	\begin{pgfonlayer}{edgelayer}
		\draw (0.center) to (1.center);
	\end{pgfonlayer}
\end{tikzpicture}
 \qquad \begin{tikzpicture}
	\begin{pgfonlayer}{nodelayer}
		\node [style=none] (0) at (0, 0) {};
		\node [style=none] (1) at (0, 1.25) {};
		\node [style=none] (2) at (1.5, 1.25) {};
		\node [style=none] (3) at (0, -1.25) {};
		\node [style=none] (4) at (-1.5, -1.25) {};
		\node [style=texthere] (5) at (1.5, 0.8) {$A^*$};
		\node [style=texthere] (6) at (0, 0.25) {$A$};
		\node [style=texthere] (7) at (-1.5, -1) {$A^*$};
	\end{pgfonlayer}
	\begin{pgfonlayer}{edgelayer}
		\draw (1.center) to (2.center);
		\draw [bend right=90, looseness=1.50] (1.center) to (0.center);
		\draw [bend left=90, looseness=1.50] (0.center) to (3.center);
		\draw (4.center) to (3.center);
	\end{pgfonlayer}
\end{tikzpicture} \ = \ \begin{tikzpicture}
	\begin{pgfonlayer}{nodelayer}
		\node [style=none] (0) at (-0.75, 0) {};
		\node [style=none] (1) at (0.75, 0) {};
		\node [style=texthere] (2) at (0.75, 0.25) {$A^*$};
	\end{pgfonlayer}
	\begin{pgfonlayer}{edgelayer}
		\draw (0.center) to (1.center);
	\end{pgfonlayer}
\end{tikzpicture}

}
\]

To see that every compact closed category is an SMCC, we choose the functor $B\lolli - $ to be that defined by $B\lolli C:=B^*\otimes C$ for all object $C$, and $B\lolli f:=1_{B^*}\otimes f$ for all morphism $f:C\to C'$; together with the bijections
\[
 \phi_{A,B,C}\left(\scalebox{0.9}{\begin{tikzpicture}
	\begin{pgfonlayer}{nodelayer}
		\node [style=medium box] (0) at (0, 0) {$f$};
		\node [style=none] (1) at (-1.75, 0.75) {};
		\node [style=none] (2) at (-0.75, -0.75) {};
		\node [style=none] (3) at (0.75, 0) {};
		\node [style=none] (4) at (1.75, 0) {};
		\node [style=none] (5) at (-1.75, -0.75) {};
		\node [style=none] (8) at (-0.75, 0.75) {};
		\node [style=texthere] (9) at (-1.75, -0.5) {$B$};
		\node [style=texthere] (10) at (1.75, 0.25) {$C$};
		\node [style=texthere] (11) at (-1.75, 1) {$A$};
	\end{pgfonlayer}
	\begin{pgfonlayer}{edgelayer}
		\draw (5.center) to (2.center);
		\draw (4.center) to (3.center);
		\draw (1.center) to (8.center);
	\end{pgfonlayer}
\end{tikzpicture}}\right) \ = \ \scalebox{0.9}{\begin{tikzpicture}
	\begin{pgfonlayer}{nodelayer}
		\node [style=medium box] (0) at (0, -0.625) {$f$};
		\node [style=none] (1) at (-2.75, 0.125) {};
		\node [style=none] (2) at (-0.75, -1.375) {};
		\node [style=none] (3) at (0.75, -0.625) {};
		\node [style=none] (4) at (1.75, -0.625) {};
		\node [style=none] (5) at (-1.25, -1.375) {};
		\node [style=none] (6) at (-1.25, 1.125) {};
		\node [style=none] (7) at (1.75, 1.125) {};
		\node [style=none] (12) at (-0.75, 0.125) {};
		\node [style=texthere] (13) at (-1.25, -1.125) {$B$};
		\node [style=texthere] (14) at (1.75, -0.375) {$C$};
		\node [style=texthere] (15) at (-2.75, 0.375) {$A$};
		\node [style=texthere] (16) at (1.75, 1.375) {$B^*$};
	\end{pgfonlayer}
	\begin{pgfonlayer}{edgelayer}
		\draw (5.center) to (2.center);
		\draw (6.center) to (7.center);
		\draw (4.center) to (3.center);
		\draw (1.center) to (12.center);
		\draw [bend left=270, looseness=1.25] (6.center) to (5.center);
	\end{pgfonlayer}
\end{tikzpicture}}
 \qquad
  \phi_{A,B,C}^{-1}\left(\scalebox{0.9}{\begin{tikzpicture}
	\begin{pgfonlayer}{nodelayer}
		\node [style=medium box] (0) at (0, 0) {$g$};
		\node [style=none] (1) at (0.75, 0.75) {};
		\node [style=none] (2) at (-0.75, 0) {};
		\node [style=none] (3) at (0.75, -0.75) {};
		\node [style=none] (4) at (1.75, -0.75) {};
		\node [style=none] (5) at (-1.75, 0) {};
		\node [style=none] (6) at (1.75, 0.75) {};
		\node [style=texthere] (7) at (-1.75, 0.25) {$A$};
		\node [style=texthere] (8) at (1.75, -0.5) {$C$};
		\node [style=texthere] (9) at (1.75, 1) {$B^*$};
	\end{pgfonlayer}
	\begin{pgfonlayer}{edgelayer}
		\draw (5.center) to (2.center);
		\draw (4.center) to (3.center);
		\draw (1.center) to (6.center);
	\end{pgfonlayer}
\end{tikzpicture}}\right) \ = \ \scalebox{0.9}{\begin{tikzpicture}
	\begin{pgfonlayer}{nodelayer}
		\node [style=medium box] (0) at (0, 0.5) {$g$};
		\node [style=none] (1) at (0.75, 1.25) {};
		\node [style=none] (2) at (-0.75, 0.5) {};
		\node [style=none] (3) at (0.75, -0.25) {};
		\node [style=none] (4) at (2.75, -0.25) {};
		\node [style=none] (5) at (-1.75, 0.5) {};
		\node [style=none] (6) at (1.25, 1.25) {};
		\node [style=texthere] (7) at (-1.75, 0.75) {$A$};
		\node [style=texthere] (8) at (2.75, 0) {$C$};
		\node [style=texthere] (9) at (1.25, 1.5) {$B^*$};
		\node [style=none] (10) at (1.25, -1.25) {};
		\node [style=none] (11) at (-1.75, -1.25) {};
		\node [style=texthere] (12) at (-1.75, -1) {$B$};
	\end{pgfonlayer}
	\begin{pgfonlayer}{edgelayer}
		\draw (5.center) to (2.center);
		\draw (4.center) to (3.center);
		\draw (1.center) to (6.center);
		\draw (11.center) to (10.center);
		\draw [bend right=90, looseness=1.25] (10.center) to (6.center);
	\end{pgfonlayer}
\end{tikzpicture}}
\]
for all objects $A,B,C$, where $\scalebox{0.9}{\begin{tikzpicture}
	\begin{pgfonlayer}{nodelayer}
		\node [style=none] (0) at (0, -0.75) {};
		\node [style=none] (1) at (0, 0.75) {};
		\node [style=texthere] (2) at (0, 1) {$B^*$};
		\node [style=texthere] (3) at (0, -0.5) {$B$};
	\end{pgfonlayer}
	\begin{pgfonlayer}{edgelayer}
		\draw [bend right=90, looseness=1.75] (0.center) to (1.center);
	\end{pgfonlayer}
\end{tikzpicture}
}\ := \ \scalebox{0.9}{\begin{tikzpicture}
	\begin{pgfonlayer}{nodelayer}
		\node [style=none] (0) at (0.25, -0.75) {};
		\node [style=none] (1) at (0.25, 0.75) {};
		\node [style=none] (2) at (-1, 0.75) {};
		\node [style=none] (3) at (-1, -0.75) {};
		\node [style=texthere] (4) at (-1, 1) {$B^*$};
		\node [style=texthere] (5) at (-1, -0.5) {$B$};
	\end{pgfonlayer}
	\begin{pgfonlayer}{edgelayer}
		\draw [in=-180, out=0] (3.center) to (1.center);
		\draw [in=180, out=0] (2.center) to (0.center);
		\draw [bend right=90, looseness=1.25] (0.center) to (1.center);
	\end{pgfonlayer}
\end{tikzpicture}}$.

\subsection{A Semantics of Controllable Terms}
\label{section:denotational_controllable}

\subsubsection{Higher-Order Linear Maps}
\label{section:Choi}
Controllable terms will be interpreted in $\FHilb$, the category of {finite dimensional} Hilbert spaces and $\mathbb{C}$-linear maps between them. The set of linear maps from a Hilbert space $\mathcal{H}$ to a Hilbert space $\mathcal{K}$ will be written as $\mathcal{L}(\mathcal{H},\mathcal{K})$. $\FHilb$ is a symmetric monoidal category, where $\otimes$ is the usual tensor product, and is compact closed. The dual of an object $\mathcal{H}$ is the Hilbert space $\mathcal{H}^*$ of linear forms on $\mathcal{H}$.
For any $\mathcal{H} \in \FHilb$, $\eta_\mathcal{H}$ and $\epsilon_\mathcal{H}$ are given by:
\[
 \eta_\mathcal{H}: 1\mapsto \sum_i \ket{i}_{\mathcal{H}^*}\otimes\ket{i}_{\mathcal{H}} \qquad
 \epsilon_{\mathcal{H}}: \ket{i}_{\mathcal{H}}\otimes \ket{j}_{\mathcal{H}^*} \mapsto \braket{j}{i}_\mathcal{H} = \braket{i}{j}_{\mathcal{H}^*}
\]
where $(\ket{i}_{\mathcal{H}})_i$ is an orthonormal basis of $\mathcal{H}$ and $(\ket{i}_{\mathcal{H}^*})_i$ is the corresponding dual basis of $\mathcal{H}^*$.
The definitions of $\eta_\mathcal{H}$ and $\epsilon_{\mathcal{H}}$ are independent of the choice of basis.
The compact closed structure of $\FHilb$ enables us to define a semantics for higher order linear maps. This is the same structure as that given by the Choi-Jamiołkowski (CJ) isomorphism~\cite{choi,jamiolkowski}.

\subsubsection{Interpretation in $\FHilb$}
Controllable terms $\mathscr{C}$ are given an interpretation $[\cdot]$:
\begin{itemize}
 \item to each type $A$ we assign an object $[A]$ of $\FHilb$:
 \begin{align*}
[\one] & := I = \mathbb{C} & [A\otimes B] & := [A]\otimes [B]
\\
 [\q] & := \mathbb{C}^2 & [A\lolli B] &:= [A]\lolli [B]
\end{align*}

 \item to each typing context $\Delta = x_1:A_1,...,x_n:A_n$ we assign the object $[\Delta]:= [A_1]\otimes ... \otimes [A_n]$;
 \item to each valid typing judgment $\Delta\vdash M:A$ we assign a morphism $[\Delta]\to[A]$. The interpretation is defined in Figure~\ref{fig:denotational_controllable}, by induction on the derivation of the typing judgments.
 For the qcase rule, we define $\mathfrak{q}(f,g):[\q]\otimes[\Delta']\to [A\lolli \q^n]$ as follows, for $f,g:[\Delta']\to [A\lolli \q^n]$:
 \[
 \mathfrak{q}(f,g) := (\gamma_{[\q],[A]^*}\otimes I_{[\q^n]})(\ket{0}\bra{0}\otimes f + \ket{1}\bra{1}\otimes g).
\]
\end{itemize}

\begin{figure}[!t]
\hrulefill
\vspace*{-0.6cm}
\begin{center}
\[
\scalebox{0.8}{
 \begin{prooftree}
  \hypo{\vphantom{k\in\{0,1\}}}
 \infer1{[x:A \vdash x:A] \ =\ [A]\xrightarrow{1_{[A]}}[A]}
\end{prooftree}}
 \qquad
 \scalebox{0.8}{
 \begin{prooftree}
 \hypo{\vphantom{k\in\{0,1\}}}
  \infer1{[\vdash U:\q \lolli \q]\ =\  I\xrightarrow{(I\otimes U)\circ \eta_{[\q]}}[\q\lolli \q]}
 \end{prooftree}}
 \qquad
 \scalebox{0.8}{
 \begin{prooftree}
  \hypo{k\in\{0,1\}}
  \infer1{[\vdash \ket{k}:\q]\ = \ I \xrightarrow{1\mapsto \ket{k}} [\q]}
 \end{prooftree}}
\]
\vspace{0.1ex}
\[
\scalebox{0.8}{
 \begin{prooftree}
  \hypo{[\Delta,x:A,y:B,\Delta'\vdash M:C] = [\Delta]\otimes[A]\otimes[B]\otimes [\Delta'] \xrightarrow{f} [C]}
  \infer1{[\Delta,y:B,x:A,\Delta'\vdash M:C] = [\Delta]\otimes[B]\otimes[A]\otimes [\Delta'] \xrightarrow{1_{[\Delta]}\otimes \gamma_{[B],[A]}\otimes 1_{[\Delta']}}[\Delta]\otimes[A]\otimes[B]\otimes [\Delta'] \xrightarrow{f} [C]}
 \end{prooftree}}
\]
\vspace{0.1ex}
 \[
 \scalebox{0.8}{
\begin{prooftree}
 \hypo{[\Delta,x:A\vdash M:B] \ = \ [\Delta]\otimes [A] \xrightarrow{f} [B]}
 \infer1{[\Delta\vdash \lambda x. M:A\lolli B] \ = \ [\Delta] \xrightarrow{\phi(f)}[A]\lolli[B]}
 \end{prooftree}}
 \quad
 \scalebox{0.8}{
 \begin{prooftree}
  \hypo{[\Delta\vdash M:A\lolli B] \ = \ [\Delta]\xrightarrow{f} [A\lolli B]
}
\hypo{[\Delta'\vdash N:A] \ = \ [\Delta'] \xrightarrow{g}[A]}
  \infer2{[\Delta,\Delta'\vdash MN:B] \ = \ [\Delta]\otimes [\Delta'] \xrightarrow{f\otimes g} [A\lolli B]\otimes [A] \xrightarrow{\phi^{-1}(1_{A\lolli B})} [B]}
 \end{prooftree}}
 \]
\vspace{0.1ex}
 \[
 \scalebox{0.8}{
 \begin{prooftree}
  \hypo{[\Delta \vdash M:A] \ =\ [\Delta]\xrightarrow{f}[A]}
  \hypo{[\Delta'\vdash N:B] \ =\ [\Delta'] \xrightarrow{g}[B]}
  \infer2{[\Delta,\Delta' \vdash \pair{M , N}:A\otimes B]\ = \ [\Delta]\otimes[\Delta'] \xrightarrow{f\otimes g} [A]\otimes[B] }
 \end{prooftree}}
 \]
\vspace{0.1ex}
\[
\scalebox{0.8}{
  \begin{prooftree}
\hypo{[\Delta \vdash M:A\otimes B]\ = \ [\Delta] \xrightarrow{f}[A\otimes B]
}
\hypo{[\Delta',x:A,y:B\vdash N:C]\ = \ [\Delta']\otimes [A]\otimes [B] \xrightarrow{g}[C]}
\infer2{[\Delta,\Delta' \vdash \mathtt{let}\ \pair{x,y}=M\ \mathtt{in}\ N:C] \ = \ [\Delta]\otimes [\Delta'] \xrightarrow{f\otimes 1_{[\Delta']}} [A\otimes B] \otimes [\Delta'] \xrightarrow{\gamma_{[A\otimes B], [\Delta']}} [\Delta']\otimes [A\otimes B]\xrightarrow{g} [C]}
\end{prooftree}}
\]
\vspace{0.1ex}
\[
\scalebox{0.8}{
 \begin{prooftree}
 \hypo{\vphantom{\xrightarrow{f}}}
  \infer1{[\vdash \unit:\one] \ = \ I\xrightarrow{1_{I}} I}
 \end{prooftree}}
 \qquad
 \scalebox{0.8}{
 \begin{prooftree}
 \hypo{[\Delta\vdash M: \one]\ = \ [\Delta] \xrightarrow{f} I}
  \hypo{[\Delta' \vdash N: A]\ = \ [\Delta']\xrightarrow{g}[A]}
  \infer2{[\Delta,\Delta'\vdash M;N:A] = \ [\Delta]\otimes[\Delta'] \xrightarrow{f\otimes g} [A]}
 \end{prooftree}}
\]
\vspace{0.1ex}
\[
\scalebox{0.8}{
 \begin{prooftree}
  \hypo{[\Delta \vdash P:\q] = [\Delta] \xrightarrow{p}[\q]}
  \hypo{[\Delta' \vdash M:A \lolli \q^n] = [\Delta'] \xrightarrow{f}[A\lolli \q^n]}
  \hypo{[\Delta' \vdash N: A\lolli \q^n] = [\Delta'] \xrightarrow{g} [A\lolli \q^n]}
  \infer3{[\Delta,\Delta'\vdash \qcase\ P\ \{0\rightarrow M\ |\ 1\rightarrow N\}: A\lolli \q^{n+1}]\ = \ [\Delta]\otimes[\Delta'] \xrightarrow{p\otimes 1_{[\Delta']}} [\q]\otimes [\Delta']\xrightarrow{\mathfrak{q}(f,g)}[A\lolli \q^{n+1}]}
 \end{prooftree}}
\]
\end{center}
\hrulefill
\caption{Denotational semantics of controllable terms}
\label{fig:denotational_controllable}
\end{figure}

We briefly explain the semantics for the qcase rule.
Fix linear maps $f,g:[\Delta']\to[A\lolli \q^n]$.
The controlled statements are not $f$ and $g$ themselves, but rather $\phi^{-1}(f),\phi^{-1}(g):[\Delta']\otimes [A] \to[\q^n]$, where the input on $[\Delta']$ can be seen as representing some parameters for the free variables of the branching statements $M$ and $N$.
Therefore, the qcase statement is given by the linear map below:
\[
 \ket{0}\bra{0}\otimes \phi^{-1}(f) + \ket{1}\bra{1} \otimes \phi^{-1}(g)\ :\ [\q]\otimes [\Delta']\otimes [A] \to [\q^{n+1}],
\]
which we will represent as
\[
 \mathfrak{q}(f,g)\ := \phi\bigl(\ket{0}\bra{0}\otimes \phi^{-1}(f) + \ket{1}\bra{1} \otimes \phi^{-1}(g)\bigr)\ :\ [\q]\otimes [\Delta'] \to [A]\lolli [\q^{n+1}].
\]
This expression can be simplified to
\[
 \mathfrak{q}(f,g)\ = (\gamma_{[\q],[A]^*}\otimes I_{[\q^n]})\circ (\ket{0}\bra{0}\otimes f + \ket{1}\bra{1} \otimes g).
\]

\begin{restatable}{proposition}{proppurewelldefined}
\label{prop:pure_well_defined}
The interpretation $[\cdot]$ is well-defined.
\end{restatable}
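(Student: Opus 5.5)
The plan is to show that $[\Delta\vdash M:A]$ depends only on the judgment and not on the derivation chosen for it. Two things can make derivations of the same judgment differ. The first is where the exchange rule (X) is placed. The second is the choice of intermediate types: the argument type $A$ in ($\lolli$E), the types $A,B$ in ($\otimes$E), and the type of the control term in (qcase). Along the way I also check that each clause of Figure~\ref{fig:denotational_controllable} is type-correct. In particular, for $\mathfrak{q}(f,g)$ the map $\ket{0}\bra{0}\otimes f+\ket{1}\bra{1}\otimes g$ goes from $[\q]\otimes[\Delta']$ to $[\q]\otimes[A]^*\otimes[\q^n]$, and $\gamma_{[\q],[A]^*}\otimes I_{[\q^n]}$ lands in $[A]^*\otimes[\q^{n+1}]=[A\lolli\q^{n+1}]$, using the strictness conventions.

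\textbf{Step 1: exchange.} I first prove a permutation lemma by induction on derivations. If $\Delta\vdash M:A$ is derived and $\pi$ permutes $\Delta$ into $\Delta^\pi$, then every derivation of $\Delta^\pi\vdash M:A$ whose intermediate types agree with the first gives $[\Delta\vdash M:A]\circ\gamma_\pi$. Here $\gamma_\pi$ is the canonical symmetry isomorphism built from the $\gamma$'s, which is unique by the coherence theorem for SMCs. Each logical rule is syntax-directed once the intermediate types are fixed. Linearity also determines how the context splits in a binary rule: $\Delta$ must consist exactly of the free variables of $P$ (or $M$), and $\Delta'$ of the rest. In (qcase) both branches share $\Delta'$, so a permutation of $\Delta'$ acts identically on $f$ and $g$, and $\mathfrak{q}$ commutes with it, because $\mathfrak{q}(f\circ\sigma,g\circ\sigma)=\mathfrak{q}(f,g)\circ(1\otimes\sigma)$ by bilinearity. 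The remaining cases follow from naturality of $\gamma$ and of $\phi$ in its first argument. Together these show that any two derivations with the same intermediate types give the same morphism.

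\textbf{Step 2: intermediate types.} This is the step I expect to be the main obstacle. A term such as $(\lambda x.x)(\lambda y.y)$ can be typed with different internal types, so the corresponding morphisms must be compared. I would first try a coherence argument. Derivations of the multiplicative linear fragment in a compact closed category are interpreted via cups, caps and symmetries, with $\phi^{-1}(1)$ being a cup. Changing an internal type only reshapes a wire inside the string diagram, whose underlying connectivity (the MLL proof net) depends only on the term. By the Kelly--Laplaza coherence theorem for compact closed categories, two such diagrams with the same connectivity and the same labelled boxes are equal. The boxes are $U$, $\ket{k}$ and the $\mathfrak{q}$-blocks. Within a $\mathfrak{q}$-block the types of the branches are forced by the conclusion $A\lolli\q^{n+1}$, so the boxes coincide. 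If this approach is too delicate, the fallback is to read ``well-defined'' as relative to a fixed derivation and note that Step 1 already gives independence from exchange.
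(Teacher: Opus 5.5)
Your Step 1 is essentially the paper's entire proof. The paper argues by induction on derivations that $[\Delta\vdash M:A]$ is well defined, and that deriving the same term under a permuted context yields $[\Delta\vdash M:A]$ precomposed with the corresponding symmetry. That is your permutation lemma. Your remarks that context splits are forced by linearity, and that $\mathfrak{q}(f\circ\sigma,g\circ\sigma)=\mathfrak{q}(f,g)\circ(1\otimes\sigma)$, supply details the paper leaves implicit.

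Step 2 is where the proposal goes wrong, though the fault is in the argument, not the result.

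First, the worry is vacuous: for a fixed judgment the intermediate types are forced. Your own example shows this. In any derivation of $\vdash(\lambda x.x)(\lambda y.y):B\lolli B$, the premise $x:A\vdash x:B\lolli B$ forces $A=B\lolli B$. In general, consider a type in a derivation that is fixed neither by the judgment nor by the types of the constants and of the $\meas$/$\qcase$ rules. It would have to be produced and consumed entirely inside the term, which linearity rules out. There is no closed term of an arbitrary type and no term that discards one; in proof-net terms, correct MLL nets have no vicious circles. So two derivations of the same judgment differ only in their uses of (X), and Step 1 already proves the proposition. Your fallback of weakening the statement to a fixed derivation is therefore unnecessary.

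Second, the coherence argument you sketch would not work anyway. Kelly--Laplaza coherence equates diagrams with the same \emph{labelled} graph, and relabelling a wire is not harmless: a closed loop on $[X]$ evaluates in $\FHilb$ to the scalar $\dim[X]$. An internal type that could genuinely be varied would have to live on such loops, so the denotation would in general change with it. This is the opposite of what you want to show. Moreover, $\mathfrak{q}(f,g)$ is a sum built from the interpretations of the branches, not a generator box, so coherence for free compact closed categories does not extend through it.
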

Due to the exchange rule, the derivation of a judgment is generally not unique.
 The proof of well-definedness involves showing that the interpretation is independent of the choice of derivation.

\subsection{A Causal Semantics}
\label{section:interpretation_full}

\subsubsection{Higher-Order Quantum Maps and Causal Categories}
\label{section:causal}

The full fragment will be interpreted in the category $\Caus[\CPM]$.
First,
$\CPM$ is the category of finite-dimensional Hilbert spaces and completely positive maps. We recall that a map $\mathcal{C}\in\mathcal{L}(\mathcal{L}(\mathcal{H}),\mathcal{L}(\mathcal{K}))$ is completely positive if for all auxiliary Hilbert space $\mathcal{E}$ and all positive matrix $\rho\in\mathcal{L}(\mathcal{H}\otimes \mathcal{E})$, the matrix $(\mathcal{C}\otimes \mathcal{I}_{\mathcal{E}})(\rho)\in \mathcal{L}(\mathcal{K}\otimes\mathcal{E})$ is also positive.
Like $\FHilb$, the category $\CPM$ is a symmetric monoidal category, with $\otimes$ the usual tensor product. Moreover, $\CPM$ is compact closed, with dual objects defined as dual vector spaces, and cups and caps defined by
\[
 \eta'_\mathcal{H}: 1\mapsto \sum_{i,j} \ket{i}_{\mathcal{H}^*}{}_{\mathcal{H}^*}\bra{j}\otimes\ket{i}_{\mathcal{H}}{}_{\mathcal{H}}\bra{j}
 \qquad
 \epsilon_{\mathcal{H}}': \rho \mapsto \left(\sum_i {}_{\mathcal{H}} \bra{i}\otimes {}_{\mathcal{H}^*}\bra{i}\right)\,\rho\,\left(\sum_i \ket{i}_{\mathcal{H}}\otimes\ket{i}_{\mathcal{H}^*}\right)
\]
Further, there exists a identity-on-objects functor $D:\FHilb \to \CPM$ that ``doubles'' linear maps to construct completely positive ones, that is $D(f) = [\rho \mapsto f\rho f^\dag]$.
$D$ preserves the compact closed structure of $\FHilb$. In particular, we have $D(f\otimes g)=D(f)\otimes D(g)$, $D(\eta_{\mathcal{H}})=\eta'_{\mathcal{H}}$ and $D(\epsilon_{\mathcal{H}})=\epsilon'_{\mathcal{H}}$.

Similarly to $\FHilb$, the compact closed structure in $\CPM$ corresponds the Choi-Jamiołkowski isomorphism in quantum theory\footnote{Specifically the variant given in~\cite{common_causes,causal_models} which explicitly uses dual vector spaces.}, which
plays an important role in representing higher-order quantum maps.
Quantum supermaps are defined as higher-order linear maps that send quantum channels to quantum channels, including when tensored with an identity map~\cite{chiribella}. Via the CJ isomorphism, a quantum supermap can be transformed into a linear map that sends CJ representations of quantum channels to CJ representations of quantum channels. It was shown that such a linear map is completely positive~\cite[Theorem 1]{chiribella}.
In turn, its CJ representation is a positive-semidefinite matrix. By this process, all higher-order quantum maps can be represented as positive semidefinite matrices, and therefore in the category $\CPM$.

Although it is remarkable that compact closure collapses all higher-order quantum maps to order 1, the disadvantage is that there is no way to keep track of whether maps are trace preserving (or trace preserving preserving, and further variations).
A solution to this problem has already been studied in~\cite{kissinger2019categorical}: causal categories.
We give a short summary of the construction.

A \emph{precausal category} is defined as a compact closed category {$\cat$} satisfying four additional axioms.
Notably, $\cat$ is required to have discarding maps $\mathfrak{d}_A \in\cat(A, I)$ for each object $A$, which are compatible with the monoidal structure.
E.g., $\CPM$ is a precausal category, whose discard maps are traceout functions.
Given a precausal category $\cat$,  the causal category $\Caus[\cat]$ is constructed as follows:

\begin{definition}{\cite{kissinger2019categorical}}
 Let $\cat$ be a precausal category. The dual of a set $c\subseteq \cat(I,A)$ is defined as
 \[
  c^*:= \left\{ \pi\in\cat(I,A^*)\ \middle|\  \forall \rho \in c:\epsilon_A \circ (\pi\otimes \rho) \ = 1_I\right\}
 \]
A set $c\in\cat(I,A)$ is \emph{closed} if $c=c^{**}$ and \emph{flat} if there exist invertible $\lambda,\mu\in\cat(I,I)$ such that $\lambda \mathfrak{d}_A\in c$ and $\mu \mathfrak{d}^*_A\in c^*$, where $\mathfrak{d}_A^*$ is the transpose of $\mathfrak{d}_A$, i.e $\mathfrak{d}_A^*:=(\mathfrak{d}_{A^*}\otimes 1_A)\circ\eta_A$.\footnote{$\lambda,\mu$ being invertible means that $\lambda,\mu$ are isomorphisms; $\lambda \mathfrak{d}_A$ and $\mu \mathfrak{d}^*_A$ should be read as $\lambda \otimes \mathfrak{d}_A$ and $\mu \otimes \mathfrak{d}^*_A$.}
$\Caus[\cat]$ is defined as the category whose objects are pairs $\mathbf{A}=(A,c_\mathbf{A})$ where $A$ is an object of $\cat$ and $c_\mathbf{A}\subseteq \cat(I,A)$ is closed and flat. Morphisms $A\stackrel{f}{\to}B$ in $\Caus[\cat]$ are morphisms $A\stackrel{f}{\to}B$ in $\cat$ such that $\rho \in c_\mathbf{A} \Rightarrow f\circ \rho \in c_\mathbf{B}$.
\end{definition}

Tensor products are defined in $\Caus[\cat]$ as follows. Given objects $\mathbf{A}=(A,c_\mathbf{A})$ and $\mathbf{B}=(B,c_\mathbf{B})$ of $\Caus[\cat]$, we define
\[
 c_\mathbf{A}\otimes c_\mathbf{B} := \{\rho_1\otimes \rho_2 \ |\ \rho_1\in c_\mathbf{A} \text{ and }\rho_2\in c_\mathbf{B}\}.
\]
Then, $\mathbf{A}\otimes\mathbf{B}:=(A\otimes B,c_{\mathbf{A}\otimes \mathbf{B}})$ where $c_{\mathbf{A}\otimes \mathbf{B}}:=(c_\mathbf{A}\otimes c_\mathbf{B})^{**}$. Equipped with this tensor product, $\Caus[\cat]$ is shown to be a symmetric monoidal category. Moreover, duals are defined as $\mathbf{A}^*:=(A^*,c_{\mathbf{A}^*})$ with $c_{\mathbf{A}^*}:=c_\mathbf{A}^*$.
With $\mathbf{A}\lolli\mathbf{B}:=(\mathbf{A}\otimes \mathbf{B}^*)^*$, $\Caus[\cat]$ is shown to be $*$-autonomous (and therefore a symmetric monoidal closed category). However, $\Caus[\cat]$ is not compact closed.

The `par' operation $\parr$ is defined as $\mathbf{A}\parr \mathbf{B} := (\mathbf{A}^*\otimes\mathbf{B}^*)^*$. In particular, it is shown that $c_{\mathbf{A}\otimes \mathbf{B}}\subseteq c_{\mathbf{A} \parr \mathbf{B}}$. Intuitively, $\parr$ constructs a larger joint system than $\otimes$.

\subsubsection{Interpretation in $\Caus{[}\CPM{]}$}

We can now define the denotational semantics in the category $\Caus[\CPM]$.
Similarly to the pure semantics $[\cdot]$, the semantics $\sem{\cdot}$ is defined by the following:
\begin{itemize}
 \item to each type $A$ we assign an object $\sem{A}$ of $\Caus[\CPM]$. Objects of $\Caus[\CPM]$ have the form $\mathbf{A}=(\mathcal{H},c_{\mathbf{A}})$ where $\mathcal{H}$ is a Hilbert space and $c_{\mathbf{A}}$ is a set of completely positive maps from $\mathcal{L}(\mathbb{C})$ to $\mathcal{\mathcal{L}}(\mathcal{H})$. 
 By a slight abuse of notation, we will identify the completely positive map $\rho\in \mathcal{L}(\mathcal{L}(\mathbb{C}),\mathcal{L}(\mathcal{H}))$ with the positive semi-definite matrix $\rho(1)\in \mathcal{L}(\mathcal{H})$. We define:
 \begin{align*}
\sem{\one} & := \mathbf{I} = (\mathbb{C},\{1\})
&
\sem{A\otimes B} & := \sem{A}\otimes \sem{B}
\\
 \sem{\q} & :=(\mathbb{C}^2,St(\mathbb{C}^2))
 &
 \sem{A\lolli B} &:= \sem{A}\lolli \sem{B}
\end{align*}
where $St(\mathbb{C}^2)$ is the set of positive semidefinite matrices in $\mathcal{L}(\mathbb{C}^2)$ of trace 1;
 \item to each typing context $\Delta = x_1:A_1,...,x_n:A_n$ we assign the object $\sem{\Delta}:= \sem{A_1}\otimes ... \otimes \sem{A_n}$;
 \item to each valid typing judgment $\Delta\vdash M:A$ we assign a morphism $\sem{\Delta}\to\sem{A}$.
The semantics is defined by induction on the derivation of the typing judgment, and consists of the same cases as controllable terms, plus measurement. \version{The full definition is given in Appendix~\ref{app:denotational}, Figure~\ref{fig:denotational}, with most cases being similar to the pure semantics. The cases with notable differences are the following:}{Most cases are similar to the pure semantics—the cases with notable differences are the following:}
\[
\scalebox{0.8}{
 \begin{prooftree}
  \infer0{\sem{\vdash U:\q \lolli \q}\ =\  \mathbf{I}\xrightarrow{D(I\otimes U)\circ \eta'_{[q]}}\sem{\q\lolli \q}}
 \end{prooftree}}
 \qquad
 \scalebox{0.8}{
 \begin{prooftree}
  \hypo{k\in\{0,1\}}
  \infer1{\sem{\vdash \ket{k}:\q}\ = \ I \xrightarrow{1\mapsto \ket{k}\bra{k}} \sem{\q}}
 \end{prooftree}}
\]
\[
\scalebox{0.8}{
 \begin{prooftree}
 \hypo{\sem{\Delta \vdash P:\q}\ = \ \sem{\Delta} \xrightarrow{p} \sem{\q}}
  \hypo{\sem{\Delta' \vdash M:A}\ = \ \sem{\Delta'}\xrightarrow{f}\sem{A}}
  \hypo{\sem{\Delta' \vdash N: A} \ = \ \sem{\Delta'} \xrightarrow{g} \sem{A}}
  \infer3{\sem{\Delta,\Delta'\vdash \meas\ P\ \{0\rightarrow M\ |\ 1\rightarrow N\}: A}\ = \ \sem{\Delta}\otimes\sem{\Delta'}\xrightarrow{p\otimes 1_{\sem{\Delta'}}} \sem{\q}\otimes \sem{\Delta'} \xrightarrow{\mathfrak{M}(f,g)}\sem{A}}
 \end{prooftree}}
\]
\[
\scalebox{0.8}{
 \begin{prooftree}
  \hypo{\sem{\Delta \vdash P:\q} = \sem{\Delta} \xrightarrow{p}\sem{\q}}
  \hypo{[\Delta' \vdash M:A\lolli \q^n] = [\Delta'] \xrightarrow{f}[A\lolli \q^n]}
  \hypo{[\Delta' \vdash N: A\lolli \q^n] = [\Delta'] \xrightarrow{g} [A\lolli\q^n]}
  \infer3{\sem{\Delta,\Delta'\vdash \qcase\ P\ \{0\rightarrow M\ |\ 1\rightarrow N\}: A\lolli \q^{n+1}}\ = \ \sem{\Delta}\otimes\sem{\Delta'} \xrightarrow{p\otimes 1_{\sem{\Delta'}}} \sem{\q}\otimes \sem{\Delta'}\xrightarrow{D(\mathfrak{q}(f,g))}\sem{A\lolli\q^{n+1}}}
 \end{prooftree}}
\]
For the measurement rule, the morphism $\mathfrak{M}(f,g)$ is defined for all $f,g:\sem{\Delta'}\to \sem{A}$ as:
\[
 \mathfrak{M}(f,g) = \mathcal{P}_0\otimes f+ \mathcal{P}_1\otimes g : \sem{\q} \otimes \sem{\Delta'} \to \sem{A}
\]
where $\mathcal{P}_k =\bra{k}(\cdot)\ket{k}$ is the projection on the computational basis state $\ket{k}$ ($k\in\{0,1\}$).\footnote{Note that the $\mathcal{P}_k$ are not morphisms of $\Caus[\CPM]$.}
The qcase rule is based on the linear semantics of the branching statements (which are necessarily controllable).
\end{itemize}

We give some intuition regarding the interpretation of measurement statement. Suppose $A=\q\lolli \q$, that is, the branching statements are channels from one qubit to one qubit.
Let $f,g:\sem{\Delta'}\to \sem{\q\lolli \q}$.
We emphasize that $f$ and $g$ are not channels representing the branching statements.
Rather, the branching statements are given by $\phi^{-1}(f)$ and $\phi^{-1}(g):\sem{\Delta'}\otimes\sem{\q} \to \sem{\q}$, where the input on $\sem{\Delta'}$ can be seen as parametrizing the branching statements $M$ and $N$.
Therefore, the measurement statement is given by $\mathcal{P}_0 \otimes \phi^{-1}(f) + \mathcal{P}_1 \otimes \phi^{-1}(g)$.
Taking
\[
 \mathfrak{M}(f,g)\ := \ \phi\bigl(\mathcal{P}_0 \otimes \phi^{-1}(f) + \mathcal{P}_1 \otimes \phi^{-1}(g)\bigr) : \sem{\q}\otimes \sem{\Delta'} \to \sem{\q}\lolli \sem{\q},
\]
we obtain the final expression, which is simply $\mathfrak{M}(f,g) = \mathcal{P}_0 \otimes f + \mathcal{P}_1 \otimes g$. This formula can clearly be generalized to an arbitrary type $A$.

To show that the denotational semantics is well-defined, we check that the morphisms are indeed in $\Caus[\CPM]$. As a first step, doubling the interpretation $[\cdot]$ yields morphisms of $\Caus[\CPM]$:
\begin{restatable}{lemma}{lemdouble}
\label{lem:double}
Let $M$ be a controllable term and $\Delta \vdash M:A$ a valid typing judgment. If $[\Delta \vdash M:A]=[\Delta] \xrightarrow{f} [A]$, then $\sem{\Delta} \xrightarrow{D(f)} \sem{A}$ is a morphism of $\Caus[\CPM]$.
\end{restatable}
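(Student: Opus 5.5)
We proceed by induction on the derivation of $\Delta\vdash M:A$ (using Proposition~\ref{prop:pure_well_defined} so the choice of derivation does not matter). We show that for each rule, if the premises yield morphisms $D(f_i)$ in $\Caus[\CPM]$, then so does $D$ applied to the conclusion. Since $D$ is a strict monoidal functor that preserves cups, caps, swaps and hence the currying bijection $\phi$, every structural case reduces to the corresponding structural operation in $\Caus[\CPM]$. This structure is available because $\Caus[\CPM]$ is a symmetric monoidal closed (indeed $*$-autonomous) category whose tensor, swap and closure are inherited from $\CPM$ \cite{kissinger2019categorical}. The cases are then as follows. (ax) and (X) give identities and symmetries. ($\lolli$I) is $\phi(D(f))$, which lies in $\Caus[\CPM]$ because $\Caus[\CPM]$ is closed with $\mathbf{A}\lolli\mathbf{B}$ whose underlying object is $A^*\otimes B$. ($\lolli$E) uses $D(f)\otimes D(g)$ followed by the evaluation $\phi^{-1}(1)$, which is a $\Caus$ morphism. ($\otimes$I), ($\otimes$E) and (seq) are tensors and compositions of $\Caus$ morphisms. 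Here we note that $\sem{\one}=\mathbf{I}$, so a morphism $\sem{\Delta}\to\mathbf{I}$ tensored with $g$ is fine.

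The base cases are direct. For $\ket{k}$, the state $\ket{k}\bra{k}$ has trace $1$, so it lies in $St(\mathbb{C}^2)=c_{\sem{\q}}$. For $U$, the morphism $D((I\otimes U)\circ\eta)$ is the Choi state of the unitary channel. We check that it lies in $c_{\sem{\q\lolli\q}}=(c_{\sem\q}\otimes c_{\sem\q}^*)^*$. Equivalently, currying back, we need the map $\rho\mapsto U\rho U^\dagger$ to send states to states, which holds since it is trace preserving. The identification $c_{\sem{\q}}^*=\{\mathfrak{d}^*\}$ (the trace) is standard for first-order systems.

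The main obstacle is the qcase case. There we must show that $D(\mathfrak{q}(f,g))\circ(D(p)\otimes 1)$ is causal, given only that $D(p)$, $D(f)$ and $D(g)$ are. By composition it suffices to show that $D(\mathfrak{q}(f,g)):\sem{\q}\otimes\sem{\Delta'}\to\sem{A\lolli\q^{n+1}}$ is a $\Caus$ morphism. Uncurrying, this becomes $D(h):\sem{\q}\otimes\sem{\Delta'}\otimes\sem{A}\to\sem{\q^{n+1}}$ with $h=\ket0\bra0\otimes\phi^{-1}(f)+\ket1\bra1\otimes\phi^{-1}(g)$. Since the codomain $\sem{\q^{n+1}}$ is first-order, with $c^*=\{\mathfrak{d}^*\}$, being a $\Caus$ morphism into it amounts to two conditions: complete positivity, which is automatic from $D$, and $\mathfrak{d}_{\q^{n+1}}\circ D(h)\circ\rho=1$ for all $\rho\in c_{\sem\q\otimes\sem{\Delta'}\otimes\sem A}$. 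This uses the characterization that $f:\mathbf{X}\to\mathbf{B}$ with $\mathbf B$ first-order is causal iff $\mathfrak{d}_B\circ f\in c_{\mathbf X}^*$.

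The key computation is the following. Tracing the output, $\mathfrak{d}\circ D(h) = \mathcal{P}_0\otimes(\mathfrak{d}\circ D(\phi^{-1}f)) + \mathcal{P}_1\otimes(\mathfrak{d}\circ D(\phi^{-1}g))$. The cross terms vanish because the control qubit is kept in the output and traced, which is exactly why the type keeps the control qubit. By the induction hypothesis, $e_f:=\mathfrak{d}\circ D(\phi^{-1}f)$ and $e_g:=\mathfrak{d}\circ D(\phi^{-1}g)$ both lie in $c^*_{\sem{\Delta'}\otimes\sem A}$. We then need $\mathcal{P}_0\otimes e_f+\mathcal{P}_1\otimes e_g\in c^*_{\sem\q\otimes(\sem{\Delta'}\otimes\sem A)}$. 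We plan to prove this via the known description $c^*_{\mathbf{Q}\otimes\mathbf{X}}=c_{\mathbf Q\lolli \mathbf X^*}$ with $\mathbf{Q}$ first-order. That description says effects on $\sem\q\otimes\mathbf X$ are exactly maps sending each qubit state $\sigma$ to an element of $c^*_{\mathbf X}$. Feeding $\sigma$ in gives $\bra0\sigma\ket0\,e_f+\bra1\sigma\ket1\,e_g$, a convex combination of elements of $c^*_{\mathbf X}$. This lies in $c^*_{\mathbf X}$ because dual sets in a precausal category are convex: affine constraints of the form $\epsilon\circ(\pi\otimes\rho)=1$ are preserved by convex combinations. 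Checking this characterization (on product states, then extending by closure $c=c^{**}$) is the delicate step. We will also need $D(\phi^{-1}f)$ to be causal into $\sem{\q^n}$, which follows from the induction hypothesis by uncurrying in $\Caus$.
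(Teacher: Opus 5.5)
Your proof is correct and follows the paper's argument. The induction handles the structural cases through the closed structure of $\Caus[\CPM]$ (which shares $\phi$ with $\CPM$), the unitary case through trace preservation, and the qcase case through the same computation: under the trace the cross terms vanish, and on product states the pairing reduces to $\bra{0}\sigma\ket{0}+\bra{1}\sigma\ket{1}=1$. The only difference is cosmetic. You uncurry, so the codomain $\sem{\q^{n+1}}$ is first order and only the discard effect needs checking, whereas the paper stays curried and decomposes effects on $\sem{A\lolli\q^{n+1}}$ as $\pi'\otimes\mathfrak{d}^*$ via Lemma~6.1 of Kissinger and Uijlen. Also, the step you flag as delicate is immediate from $c^{***}=c^*$.
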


This enables us to prove well-definedness of the semantics:
\begin{restatable}{proposition}{propmixedwelldefined}
 \label{prop:mixed_well_defined}
 The interpretation $\sem{\cdot}$ is well defined.
\end{restatable}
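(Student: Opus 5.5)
The plan is a structural induction on typing derivations. The claim has two parts: every rule produces a morphism of $\Caus[\CPM]$ between the claimed objects, and the interpretation does not depend on which derivation is chosen. The second part can be taken almost verbatim from Proposition~\ref{prop:pure_well_defined}. Derivations of the same judgment differ only by where exchange rules (X) are placed. Each such permutation is interpreted by a symmetry $\gamma$, and $\gamma$ commutes with every other construction by naturality and the coherence of strict SMCs. The same argument therefore goes through in $\Caus[\CPM]$, since that category is symmetric monoidal and its symmetries are those of $\CPM$. In the qcase case, the pure interpretations $[\cdot]$ of the branches are already independent of the derivation by Proposition~\ref{prop:pure_well_defined}.

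For the first part, I would treat the rules one at a time.

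\emph{Axiom, exchange, $\otimes$I, $\otimes$E, unit, seq.} These are built from identities, symmetries, tensor products and composites of morphisms that are causal by the induction hypothesis. They stay in $\Caus[\CPM]$ because it is a symmetric monoidal category.

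\emph{$\lolli$I, $\lolli$E.} Here I use that $\Caus[\CPM]$ is $*$-autonomous, hence an SMCC, and that its currying bijection $\phi$ and its evaluation $\phi^{-1}(1)$ are the same underlying $\CPM$ maps (cups and caps) that appear in the definition.

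\emph{Constants.} $\ket{k}\bra{k}$ lies in $St(\mathbb{C}^2)$. For $U$, the map $D(I\otimes U)\circ\eta'$ is the Choi state of a unitary channel, and that is a state of $\sem{\q\lolli\q}$.

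\emph{qcase.} By Lemma~\ref{lem:double}, $D(f)$ and $D(g)$ are causal. What remains is to show that $D(\mathfrak{q}(f,g))\colon \sem{\q}\otimes\sem{\Delta'}\to\sem{A\lolli\q^n}$ lies in $\Caus[\CPM]$, taking care that the output has one more qubit, $\q^{n+1}$. I would not go through Lemma~\ref{lem:double} a second time. Instead I would note that $\mathfrak{q}(f,g)$ is itself the pure interpretation of the controllable term $\qcase\ z\ \{0\to M\ |\ 1\to N\}$ in context $z:\q,\Delta'$. Lemma~\ref{lem:double} applies to that term and gives causality directly; composing with the causal $p\otimes 1$ finishes the case.

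\emph{Measurement.} This is the main obstacle, because $\mathcal{P}_0$ and $\mathcal{P}_1$ are not causal, so $\mathfrak{M}(f,g)$ is not a composite of morphisms already known to be causal. I would rewrite it as
\[
\mathfrak{M}(f,g)=\nabla\circ\bigl((\mathcal{P}_0\otimes f)\oplus(\mathcal{P}_1\otimes g)\bigr)
\]
and then verify the defining condition of $\Caus$ directly on states.

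First I would show that $c_{\sem{\q}\otimes\sem{\Delta'}}$ is generated under $(-)^{**}$ by product states $\rho\otimes\tau$. For such a state,
\[
\mathfrak{M}(f,g)(\rho\otimes\tau)=\rho_{00}\,f(\tau)+\rho_{11}\,g(\tau),
\]
with $\rho_{00}+\rho_{11}=1$ and $\rho_{kk}\ge 0$. Since $f(\tau),g(\tau)\in c_{\sem{A}}$ by induction, this is a convex combination of two elements of $c_{\sem{A}}$. Closed, flat sets in $\Caus[\CPM]$ are convex, because they are duals of sets defined by affine equations. So the image lies in $c_{\sem{A}}$.

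To pass from product states to the whole closure $(c\otimes c')^{**}$, I would take an effect $\pi\in c_{\sem{A}}^*$ and consider $\pi\circ\mathfrak{M}(f,g)$. On product states it evaluates to $1$, so $\pi\circ\mathfrak{M}(f,g)$ lies in $(c\otimes c')^{*}$. It therefore also evaluates to $1$ on every element of $(c\otimes c')^{**}$. This shows $\mathfrak{M}(f,g)\circ\sigma\in c_{\sem{A}}^{**}=c_{\sem{A}}$ for every state $\sigma$ of the tensor object. The same dualization argument is the one I expect to need in the unitary and $\lolli$ cases, if those are not already available from \cite{kissinger2019categorical}.
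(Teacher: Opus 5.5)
Your proposal is correct and follows the paper's proof: induction on derivations, the symmetric monoidal closed structure of $\Caus[\CPM]$ for the structural cases, the Choi-state fact \cite[Lemma 4.9]{kissinger2019categorical} for $U$, and for measurement the same computation $\pi\circ\mathfrak{M}(f,g)\circ(\rho\otimes\tau)=\bra{0}\rho\ket{0}+\bra{1}\rho\ket{1}=1$ for $\pi\in c_{\sem{A}}^*$ (you also spell out the passage from $(c_{\sem{\q}}\otimes c_{\sem{\Delta'}})^*$ to the closure, which the paper leaves implicit). For qcase, applying Lemma~\ref{lem:double} to the auxiliary controllable term $\qcase\ z\ \{0\to M\ |\ 1\to N\}$ is legitimate, but it only reroutes through Lemma~\ref{lem:double_qcase}, on which that lemma's own qcase case rests; two small fixes are that the codomain there should read $\sem{A\lolli\q^{n+1}}$, and that the $\nabla\circ(\cdot\oplus\cdot)$ rewriting is never used and should be dropped, since $\CPM$ has no biproducts that would make it literal.
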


The fact that the interpretation of typing judgments is well defined not only in $\CPM$ but also in the causal category $\Caus[\CPM]$ guarantees
the physicality of the language. The ground type $\q$ is interpreted as the object $(\mathbb{C}^2,St(\mathbb{C}^2))$ which is a first order system by the definition of~\cite[Section 5]{kissinger2019categorical}.
Therefore by construction, the interpretation {of} each type corresponds to a physically meaningful collection of higher-order quantum maps (for instance quantum channels, quantum supermap or quantum supermaps on no-signalling channels).
Hence, typing alone is enough to decide whether a given term implements a physically meaningful operation.

We conclude this section with some remarks on tensor product types.
For qubit systems, the type $\q\otimes \q$ has the interpretation $(\mathbb{C}^4,St(\mathbb{C}^4))$, which corresponds to all possible states of 2 qubits.
For second order systems, the tensor product actually does not correspond to the entire joint system. For instance, the type $(q\lolli q)\otimes (q\lolli q)$ corresponds to channels from 2 qubits to 2 qubits that are \emph{no-signalling}. That is, the first input qubit cannot influence the second output qubit and vice-versa.
No signalling channels include, but are not limited to, product channels.
The set of \emph{all} channels from 2 qubits to 2 qubits is instead given by the type $(q\otimes q)\lolli (q\otimes q)$.

As an example, the type of the composition term $\mathtt{comp}$ restricts to $(\q\lolli \q)\otimes(\q\lolli \q)\lolli \q\lolli \q$. Hence its input must be a no-signalling channel from 2 qubits to 2 qubits.
Remarkably, we are able to apply composition to inputs that cannot be decomposed into products of channels.

\begin{example}
 Let $\mathtt{U}$ and $\mathtt{V}$ be terms of type $\q^2\lolli \q^2$ and $\q^3\lolli \q^3$, respectively, which we will assume to implement 2- and 3-qubit unitary gates. Then the term
\[
  \mathtt{share}\ := \ \mathtt{let}\ \pair{t,x} = \mathtt{CNOT}\ (H\ket{0})\ \ket{0}
  \ \mathtt{in}\
  \pair{\lambda s. \mathtt{U} \, \pair{s,t} , \lambda \pair{y,z}. \mathtt{V}\,\pair{x,y,z}}
\]
 has type $(\q\lolli \q^2)\otimes (\q^2\lolli \q^3)$.
First, we create a pair of entangled qubits stored in the variables $t$ and $s$. Second, we form a pair in which $\mathtt{U}$ and $\mathtt{V}$ each get one qubit of the entangled pair. Therefore, the overall term can be interpreted as two quantum channels with shared entanglement.

Then, the term $\mathtt{comp}\ \mathtt{share}$ is a well typed (of type $\q\lolli \q^3$).

\end{example}

\section{Soundness}
\label{section:soundness}

\subsection{Valuations}

Since we have defined an operational semantics over the set of configurations $\vec{\mathscr{E}}$ containing linear combinations of terms and device references, in order to state the soundness of the denotational semantics with respect to the operational semantics, we must also extend the definition of the denotational semantics to $\vec{\mathscr{E}}$.
The idea of this semantics is to choose a fixed set of measurement outcomes and give a pure interpretation assuming those are the outcomes that occur. Formally, the chosen set of outcomes is described by a valuation.

Given $M\in \mathscr{E}$, the set $\Omega(M)$ of \emph{valuations} of $M$ is the subset of $\mathcal{M}$ defined inductively by:

\begin{minipage}{0.4\textwidth}
 \begin{align*}
 \Omega(x) &:= [\bot]
 \\
 \Omega(MN)&:= \Omega(M)\sqcup \Omega(N)
 \\
 \Omega(\lambda x. M) &:= \Omega(M)
 \\
 \Omega(\pair{M,N}) &:= \Omega(M)\sqcup \Omega(N)
 \\
 \Omega(\mathtt{let}\ \pair{x,y}=M\ \in\ N)&:= \Omega(M)\sqcup \Omega(N)
 \end{align*}
\end{minipage}
\begin{minipage}{0.6\textwidth}
\begin{align*}
 \Omega(\unit)&:= [\bot]
 \\
 \Omega(M;N)&:=\Omega(M)\sqcup \Omega(N)
 \\
 \Omega(U)&:=[\bot]
 \\
 \Omega(\ket{k})&:=[\bot]
\end{align*}
\end{minipage}
\vspace*{-0.15cm}
\begin{align*}
 \Omega(\meas\ d\triangleright P\ \{0\to M\ |\ 1\to N\})&:= \Omega(P)\sqcup \bigl([d\mapsto 0] \sqcup \Omega(M) \uplus [d\mapsto 1]\sqcup \Omega(N)\bigr)
 \\
 \Omega(\qcase\ P\ \{0\to M\ |\ 1\to N\}) &:= \Omega(P)\sqcup \Omega(M)\sqcup \Omega(N)
\end{align*}
Therefore, elements of $\Omega(M)$ are memory functions that provide exactly enough measurement outcomes to fully determine the term to which $M$ will evaluate.
In particular, we have $\Supp(\nu)\subseteq \text{Dev}(M)$ for all $\nu\in\Omega(M)$.
We further define the set of \emph{extended valuations} of $M$:
\[
 \Omega_+(M):= \left\{\sigma\in\mathcal{M}\ \middle|\
 \exists \nu\in\Omega(M):\nu\preceq \sigma
 \right\}.
\]
Then, given a well-formed configuration $\vec{M}=\sum_i \alpha_i\cdot M_i^{\sigma_i}\in\vec{\mathscr{E}}$, the set of valuations of $\vec{M}$ is defined as
\[
\Omega(\vec{M}) :=\bigcap_i \sigma_i \sqcup \Omega(M_i)
\]
Note that $\sigma_i$ and elements of $\Omega(M_i)$ have disjoint supports, guaranteeing that $\Omega(\vec{M})$ is well defined.

\subsection{Interpretation of Configurations}

The goal of this section is to define an interpretation in $\CPM$ of configurations of $\vec{\mathscr{E}}$, which will be written as $\sem{\cdot}$.
The first step is to define an interpretation in $\FHilb$ of terms $M\in\mathscr{E}$, which is parametrized by a compatible valuation $\nu\in \Omega_+(M)$. We will write this interpretation as $[\cdot]_\nu$. On types and typing contexts, $[\cdot]_\nu$ coincides with $[\cdot]$. For a typing judgment $\Delta\vdash M:A$ with $M\in\mathscr{E}$ and a valuation $\nu\in\Omega_+(M)$, we inductively define the interpretation $[\Delta\vdash M:A]_\nu$. Compared to $[\cdot]$, the main addition is the rule for measurement, where the valuation is used to determine measurement outcomes:
\[
\scalebox{0.9}{
 \begin{prooftree}
 \hypo{[\Delta \vdash P:\q]_{\nu}\ = \ [\Delta] \xrightarrow{p} [\q]}
  \hypo{[\Delta' \vdash M_i:A]_{\nu}\ = \ [\Delta']\xrightarrow{f_i}[A]}
  \hypo{\nu(d)=i}
  \infer3{[\Delta,\Delta'\vdash \meas\ d\triangleright P\ \{0\rightarrow M_0\ |\ 1\rightarrow M_1\}: A]_\nu \ = \ [\Delta]\otimes[\Delta']\xrightarrow{p\otimes 1_{\sem{\Delta'}}} [\q]\otimes [\Delta'] \xrightarrow{\mathfrak{m}_i(f_i)}[A]}
 \end{prooftree}}
\]
where $\mathfrak{m}_i(f):= \bra{i}\otimes f$.
For terms without measurement, the definition is similar to that of $[\cdot]$. \version{The full definition is given in Appendix~\ref{app:soundness}.}{}

\begin{restatable}{lemma}{lemnuwelldefined}
\label{lem:nu_well_defined}
 For all valid typing judgment $\Delta \vdash M:A$ where $M\in\mathscr{E}$ and valuation $\nu\in \Omega_+(M)$, $[\Delta\vdash M:A]_\nu$ is well defined.
\end{restatable}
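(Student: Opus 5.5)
The proof goes by induction on the derivation of $\Delta\vdash M:A$, and has to settle two things at once. First, every recursive call $[\cdot]_\nu$ on a subterm must receive a valuation in $\Omega_+$ of that subterm. Second, the resulting morphism must not depend on which derivation of the judgment is chosen. As with Proposition~\ref{prop:pure_well_defined}, the only non-syntax-directed rule is the exchange rule (X), so derivations of a fixed judgment differ only in where exchanges are placed.

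The first step is a propagation lemma: if $\nu\in\Omega_+(M)$ and $M'$ is an immediate subterm of $M$ that the interpretation visits, then $\nu\in\Omega_+(M')$. This follows directly from the inductive definition of $\Omega$. For binary constructors such as application, pairing, let, sequencing and $\qcase$, we have $\Omega(M)=\Omega(M_1)\sqcup\Omega(M_2)$, and since a join lies above both of its arguments, any $\nu$ above some element of $\Omega(M)$ is above an element of $\Omega(M_i)$. For $\qcase$ this also covers both branches, which is required because the interpretation uses both of them.

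For $\meas\ d\triangleright P\ \{0\to M_0\ |\ 1\to M_1\}$, an element of $\Omega$ has the form $\nu_P\sqcup[d\mapsto i]\sqcup\nu_i$ with $\nu_P\in\Omega(P)$ and $\nu_i\in\Omega(M_i)$. Hence any extension $\nu$ satisfies $\nu(d)=i\neq\bot$, which means the premise $\nu(d)=i$ of the measurement rule determines a unique branch. Moreover $\nu\in\Omega_+(P)\cap\Omega_+(M_i)$, so exactly the subterms that the rule interprets receive valid valuations, and the unvisited branch $M_{1-i}$ never needs one. The base cases (variables, $U$, $\ket{k}$, $\unit$) are immediate, since $\Omega=[\bot]$ and every $\nu$ extends it.

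The second step is independence of the derivation, which I expect to be the main obstacle even though it is routine. I would reuse the strategy of Proposition~\ref{prop:pure_well_defined}. Show that an exchange can be permuted past each other rule, with the interpretation changing by a naturality square for the symmetry $\gamma$ in the symmetric monoidal category $\FHilb$. Then normalize every derivation to one in which all exchanges sit at the leaves or at the root, and use the coherence theorem for symmetric monoidal categories to identify the permutations that result.

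The only genuinely new case is the one just described: commuting an exchange with the measurement rule. There it suffices that $\mathfrak{m}_i(f)=\bra{i}\otimes f$ is natural in $f$, that is, $\mathfrak{m}_i(f\circ(1\otimes\gamma\otimes 1))=\mathfrak{m}_i(f)\circ(1_{[\q]}\otimes 1\otimes\gamma\otimes1)$. Throughout, the valuation $\nu$ is fixed across all derivations and the branch choice depends only on $\nu(d)$, so it never interacts with the permutations. Finally, the well-typedness of both branches of $\meas$ and $\qcase$ in the same context $\Delta'$ ensures that all the composites typecheck.
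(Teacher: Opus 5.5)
Your proposal follows the paper's proof. It has the same three ingredients: induction on the typing derivation; the observation that $\nu$ remains an extended valuation for exactly the subterms being interpreted (both branches for $\qcase$, and only $P$ and $M_{\nu(d)}$ for $\meas$, where $\nu(d)\neq\bot$ because $\nu$ extends a valuation of the whole term); and independence from the derivation, handled as in Proposition~\ref{prop:pure_well_defined}.

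One detail needs fixing: exchanges cannot in general be pushed to the root or the leaves. Typing $\lambda x.\lambda y.\,y\,x$ needs an exchange between $\lolli$E and $\lolli$I. Typing $\qcase\ q\ \{0\to\mathtt{comp}\ \pair{x,y}\ |\ 1\to\mathtt{comp}\ \pair{y,x}\}$ needs one inside a branch, above the $\qcase$ rule. So instead of normalizing derivations, strengthen the induction hypothesis as the paper does: for every permutation $\Gamma$ of $\Delta$, $[\Gamma\vdash M:A]_\nu$ equals $[\Delta\vdash M:A]_\nu$ precomposed with the corresponding symmetry. This absorbs exchanges wherever they occur.
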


We extend the definition to configurations as follows. Let $\vec{M} = \sum_i \alpha_i\cdot M_i^{\sigma_i}\in\vec{\mathscr{E}}$ be a well-formed configuration for which there exist $\Delta,A$ such that $\Delta \vdash M_i: A$ is valid for all $i$. Then, we define
\[
 [\Delta;\vec{M};A]_{\nu}\ :=\ \sum_i \alpha_i[\Delta\vdash M_i:A]_{\nu}, \quad \text {for }\nu \in\Omega\bigl(\vec{M}\bigr).
\]
$[\cdot]_{\nu}$ is well defined—indeed, it is invariant with respect to congruence. 

We can now introduce the interpretation of a configuration restricted to valuations compatible with a fixed memory function. Given a  well-formed term $\vec{M}\in \vec{\mathscr{E}}$, let $\theta:\device\to\{0,1\}_\bot$ be a memory function such that $\Mem(\vec{M}) \ ||\ \theta$. We define:
\[
 \sem{\Delta;\vec{M};A}_{\theta} \ :=\ \sum_{ \nu\in\Omega(\vec{M}):\ \nu \ || \ \theta} [\Delta;\vec{M};A]_\nu(\cdot)[\Delta;\vec{M};A]_\nu^\dag
\]
 In the particular case where $\theta= [\bot]$, we omit the subscript and simply write $\sem{\Delta;\vec{M};A}$. Then, $\sem{\cdot}$ gives a semantics of the extended language in $\CPM$.
Generally, it does not yield morphisms of $\Caus[\CPM]$.
Nonetheless, we have the following property: the interpretation restricted to terms of $\mathscr{E}$ coincides with the original definition on $\mathscr{T}$, which justifies our using the same notation $\sem{\cdot}$:
\begin{restatable}{lemma}{lemcoincide}
\label{lem:coincide}
 Let $M\in \mathscr{T}$ be a well-typed term. For appropriate $\Delta$ and $A$, we have:
 \[
  \sem{\Delta \vdash M:A}=\sem{\Delta;\alift{M};A}
 \]
\end{restatable}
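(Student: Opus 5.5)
We prove the lemma by structural induction on the typing derivation of $\Delta\vdash M:A$. In fact we prove a slightly more general statement that survives the induction. Let $M\in\mathscr{T}$ be well typed and $\alift{M}$ a lifting. Then $\Omega(\alift{M})$ is a set of valuations $\nu$ with pairwise distinct supports, and
\[
\sem{\Delta\vdash M:A} \;=\; \sum_{\nu\in\Omega(\alift{M})} [\Delta\vdash \alift{M}:A]_\nu(\cdot)\,[\Delta\vdash \alift{M}:A]_\nu^\dag ,
\]
that is, the CP map in $\CPM$ underlying the $\Caus[\CPM]$ morphism is the sum of the Kraus operators $[\cdot]_\nu$ indexed by valuations. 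For the initial configuration $1\cdot\alift{M}^{[\bot]}$ we have $\Omega(\vec{M})=[\bot]\sqcup\Omega(\alift{M})=\Omega(\alift{M})$. Every $\nu$ is consistent with $\theta=[\bot]$, so the right-hand side is exactly $\sem{\Delta;\alift{M};A}$.

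\textbf{Base and non-branching cases.} For axioms, $U$, $\ket{k}$ and $\unit$, $\Omega=\{[\bot]\}$ and $\sem{\cdot}=D([\cdot])$ by inspection. For the binary constructors ($\lolli$E, $\otimes$I, $\otimes$E, seq), the two subterms carry disjoint device sets, since the lifting assigns distinct references. Hence $\Omega(MN)=\Omega(M)\sqcup\Omega(N)$ is in bijection with $\Omega(M)\times\Omega(N)$, and $[MN]_{\nu\sqcup\nu'}$ is built from $[M]_\nu$ and $[N]_{\nu'}$ by the same $\FHilb$ operation that builds $\sem{MN}$ from $\sem{M}$ and $\sem{N}$. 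The sum then factorises because $D$ is a strict monoidal functor preserving cups, caps, swaps and $\phi$: we have $\sum_{\nu,\nu'} D(g\circ(f_\nu\otimes h_{\nu'})) = D(g)\circ(\sum_\nu D(f_\nu)\otimes\sum_{\nu'}D(h_{\nu'}))$. Abstraction and exchange are immediate, since $\phi$ commutes with sums and with $D$.

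\textbf{Measurement.} Write $\meas\ d\triangleright P\ \{0\to M_0\mid 1\to M_1\}$. Here $\Omega=\Omega(P)\sqcup\bigl(([d\mapsto0]\sqcup\Omega(M_0))\uplus([d\mapsto1]\sqcup\Omega(M_1))\bigr)$. A valuation with $\nu(d)=i$ only involves the devices of $P$ and $M_i$, and $[\cdot]_\nu=\mathfrak{m}_i([M_i]_{\nu})\circ([P]_\nu\otimes 1)$. Summing the doubled operators splits into $i=0,1$. The key identity is $D(\bra{i}\otimes f)=\mathcal{P}_i\otimes D(f)$, so we obtain $\mathcal{P}_0\otimes\sum D(f_{0,\nu})+\mathcal{P}_1\otimes\sum D(f_{1,\nu})=\mathfrak{M}(\sem{M_0},\sem{M_1})$, precomposed with $\sem{P}\otimes1$ by the previous factorisation.

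\textbf{qcase.} The branches are controllable, so they contain no measurements and $\Omega(\alift{M_j})=\{[\bot]\}$, with $[\alift{M_j}]_{[\bot]}=[M_j]$. The interpretation is $D(\mathfrak{q}(f,g))\circ(\sem{P}\otimes1)$, and this matches the $\sem{\cdot}$ clause after applying the induction hypothesis to $P$.

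\textbf{Main obstacle.} The delicate points are the bookkeeping facts about valuations that the factorisations rely on. First, $\sqcup$ on valuation sets of subterms with disjoint devices is a bijection with the product. Second, in the measurement case the disjoint union $\uplus$ really is disjoint. This holds because its two parts differ on $d$, and it ensures no Kraus term is counted twice. Third, $[\cdot]_\nu$ depends only on $\nu$ restricted to the relevant subterm, which we take from Lemma~\ref{lem:nu_well_defined}. Independence of the chosen lifting follows, since renaming references induces a bijection on valuations that preserves every $[\cdot]_\nu$.
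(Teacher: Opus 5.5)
Your proof is correct and follows the same route as the paper. Like the paper, you do induction on the typing derivation, split the valuations of a measurement according to the value of $d$, and use that the controllable qcase branches have only the valuation $[\bot]$; you additionally spell out the factorisation over disjoint device sets and the identity $D(\bra{i}\otimes f)=\mathcal{P}_i\otimes D(f)$, which the paper leaves implicit.

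One harmless slip: the valuations in $\Omega(\alift{M})$ need not have pairwise distinct supports (for example $[d\mapsto 0]$ and $[d\mapsto 1]$ share the support $\{d\}$). What actually holds is that they are pairwise inconsistent, and your argument never uses that claim anyway.
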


\subsection{Main Result}

We now prove a soundness theorem stating that if one configuration can be reduced to another, then their denotations are the same.

\begin{restatable}[Soundness]{theorem}{thmsoundness}
 \label{thm:soundness}
Let $\vec{M}=\sum_i\alpha_i\cdot M_i^{\sigma_i}$ be a configuration for which $\Delta \vdash M_i :A$ is valid for all $i$, for some $\Delta$ and $A$. Consider a reduction $\vec{M}\twoheadrightarrow \vec{M'}$. Then:
\[
 \sem{\Delta;\vec{M};A}_{\Mem(\vec{M})}=\sem{\Delta;\vec{M'};A}
\]
\end{restatable}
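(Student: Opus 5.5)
The plan is to prove the statement by induction on the derivation of $\vec{M}\twoheadrightarrow\vec{M'}$, first for \emph{pure} valuation-indexed interpretations and then summing. The first step is a term-level lemma. If $M\twoheadrightarrow\sum_j\beta_j\cdot N_j^{\tau_j}$ is derived from an initial configuration $M^{[\bot]}$ and $\Delta\vdash M:A$, then for every $\nu\in\Omega(M)$ the following holds. We would like
\[
[\Delta\vdash M:A]_\nu=\sum_{j:\ \tau_j\,||\,\nu}\beta_j\,[\Delta\vdash N_j:A]_{\nu'}
\]
where $\nu'$ is the unique restriction of $\nu$ lying in $\Omega(\sum_j\beta_j N_j^{\tau_j})$. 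More simply, we want the valuations on both sides to be in bijection through $\nu\mapsto\nu$ (using $\Omega_+$). By Proposition~\ref{prop:subject_reduction} all $N_j$ have type $A$ in $\Delta$, so the right-hand side makes sense.

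The term-level lemma is proved by case analysis on the last rule.
\begin{itemize}
\item \emph{$\beta$-reduction and let-pairs.} These use a substitution lemma for the semantics, $[M\{V/x\}]_\nu=[M]_\nu\circ(1\otimes[V]_\nu)$. It is proved alongside Lemma~\ref{lem:substitution} by induction on $M$, using that values carry no measurements and so $\Omega(V)=\{[\bot]\}$.
\item \emph{Rule $(un)$.} This is immediate from $[U]$ and the definition of $\phi^{-1}$.
\item \emph{The $\qcase$ rule on $\ket k$.} This holds because $\mathfrak{q}(f,g)\circ(\ket k\otimes 1)$ equals $\phi$ of $\ket k\otimes\phi^{-1}(f_k)$, which is the denotation of $\lambda t.\pair{\ket k,M_k t}$.
\item \emph{Rule $(m_k)$.} This is the key case. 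For $\nu(d)=i$, the left side is $\mathfrak m_i(f_i)\circ\ket{k'}=\delta_{i,k'}f_i$. On the right there is the single summand $\delta_{k,k'}M_k^{[d\mapsto k]}$, which is kept only when $k=i$ (consistency with $\nu$). So the two sides agree.
\item \emph{Context rules.} These follow from functoriality: $[E[\cdot]]_\nu$ is linear in the hole's denotation, and contexts distribute over sums syntactically.
\end{itemize}

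The second step lifts this to configurations via $(sup)$. Write $\vec M=\alpha\cdot M^\sigma+\vec N$ and $\vec{M'}=\alpha\cdot\vec{M_0}^{\{\sigma\}}+\vec N$. We must compare
\[
\sum_{\nu\in\Omega(\vec M),\ \nu||\Mem(\vec M)}[\vec M]_\nu(\cdot)[\vec M]_\nu^\dagger
\quad\text{with}\quad
\sum_{\nu\in\Omega(\vec{M'})}[\vec{M'}]_\nu(\cdot)[\vec{M'}]_\nu^\dagger .
\]
I would construct a bijection between the two index sets. On the summand $M^\sigma$, a valuation $\sigma\sqcup\nu_M$ with $\nu_M\in\Omega(M)$ corresponds to $\sigma\sqcup\tau_j\sqcup\nu_j$ on the reduced summands. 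The case $(m_k)$ is where this is nontrivial: $\Omega(\meas\ d\triangleright\ket{k'}\{\dots\})=[d\mapsto0]\sqcup\Omega(M_0)\uplus[d\mapsto1]\sqcup\Omega(M_1)$, while after reduction the outcome $d\mapsto k$ is fixed in memory. The index sets then differ, but the filter $\nu\,||\,\Mem(\vec M)$ on the left and the consistency requirement on memories in $\vec{M'}$ (the side condition of $(sup)$) cut them down to matching sets. Summands of $\vec N$ are unchanged; their valuations are simply intersected. Given the bijection, the first step yields $[\vec M]_\nu=[\vec{M'}]_{\nu}$ pointwise, and hence equality of the doubled sums.

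I expect the main obstacle to be the bookkeeping of valuation sets in the $(sup)$/measurement interplay. One part is showing that intersecting $\bigcap_i\sigma_i\sqcup\Omega(M_i)$ before and after the step gives the same set, modulo the filter by $\Mem(\vec M)$. Another part concerns zero-amplitude summands, which the weak vector space deliberately retains: there I must check that they still contribute the right valuation constraints but nothing to the sum. I would handle this by proving the auxiliary statement $\Omega(\vec{M'})=\{\nu\in\Omega_+(\vec M)\text{-compatible}\mid\nu\,||\,\Mem(\vec{M'})\}$, up to the restriction map, by induction on the reduction derivation, before tackling the denotational equality.
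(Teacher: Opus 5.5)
\textbf{The gap is in the lifting step.} You flagged the $(sup)$/measurement bookkeeping as the main obstacle, and that is exactly where the argument fails. Every $\nu\in\Omega(\vec M)=\bigcap_i\sigma_i\sqcup\Omega(M_i)$ already extends each $\sigma_i$, so it extends $\Mem(\vec M)$. The filter $\nu\,||\,\Mem(\vec M)$ therefore removes nothing.

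Now take a step that measures a device $d$ for the first time, so $d\notin\Supp(\Mem(\vec M))$. Then $\Omega(\vec M)$ still contains the valuations with $\nu(d)=1-k$, while $\Omega(\vec{M'})$ does not. Those valuations carry the amplitude of the other outcome. The side condition of $(sup)$ only constrains memories inside $\vec{M'}$, not $\Omega(\vec M)$, so no bijection can match the two sums.

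Concretely, take $N=\meas\ d\triangleright\ket{0}\ \{0\to\ket{0}\mid 1\to\ket{1}\}$ and the step $(m_1)$, which gives $N\twoheadrightarrow 0\cdot\ket{1}^{[d\mapsto 1]}$.
\begin{itemize}
\item On the left, $\sem{\cdot}_{\Mem(\vec M)}=\sem{\cdot}_{[\bot]}$ sums over $\{[d\mapsto0],[d\mapsto1]\}$ and yields $\ket{0}\bra{0}$, coming from $\nu=[d\mapsto 0]$.
\item On the right, $\Omega(\vec{M'})=\{[d\mapsto 1]\}$ and the result is $0$.
\end{itemize}
So the identity with subscript $\Mem(\vec M)$ is false as literally written, and your proof cannot establish it. The subscript must be $\Mem(\vec{M'})$, which is what the paper's proof actually uses. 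Writing $\vec M=\alpha\cdot N^\sigma+\vec P$ and $\mu=\Mem(\vec{N'})$, it proves $\{\nu\in\Omega(\vec M)\mid\nu\,||\,\mu\}=\Omega(\vec{M'})$ and then $[\vec M]_\nu=[\vec{M'}]_\nu$ pointwise on that set.

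The same oversight affects your term-level lemma. You state it for every $\nu\in\Omega(M)$, with the per-summand filter $\tau_j\,||\,\nu$. In the $(m_k)$ case with $\nu(d)=k'\neq k$, the left side is $\mathfrak{m}_{k'}(f_{k'})\circ\ket{k'}=f_{k'}$, while the right side is the empty sum. So your claim that ``the two sides agree'' holds only when $\nu(d)=k$. The correct formulation, as in the paper, is $[N]_\nu=[\vec{N'}]_\nu$ for $\nu\in\Omega_+(N)$ with $\nu\,||\,\Mem(\vec{N'})$.

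With these two corrections, the rest of your plan coincides with the paper's proof. Filter by the memory of the reduct rather than of the redex, and restrict the term-level lemma to consistent valuations. The remaining ingredients then match: induction on the single-step derivation with a semantic substitution lemma, the $(un)$, $\qcase$ and context cases, and lifting through $(sup)$ with the untouched summands $\vec P$ contributing identically on both sides.
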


A direct consequence is that
if a configuration is reachable from a well-typed term of $\mathscr{T}$ (via $\alift{\cdot}$ and $\twoheadrightarrow$), then it must be physically meaningful (see Remark~\ref{rem:unitarity}). This implies by-design that no unitarity checks are needed to ensure physically in the operational semantics.

\section{Expressivity}

We study the expressivity of the language at first order and second order, i.e., at the level of quantum channels and of quantum supermaps, respectively.

We begin with first order. Recall that every quantum channel is a completely positive trace preserving map. We show that each quantum channel over qubits can be expressed by a term of the language:
\begin{restatable}[Universality for quantum channels]{proposition}{propchannels}
\label{prop:channels}
Let $\chan\in \CPM(\mathbb{C}^{2^n},\mathbb{C}^{2^m})$ be a quantum channel from $n$ qubits to $m$ qubits. Then there exists a term $M$ such that $\vdash M : \q^n \lolli \q^m$ is derivable and $\sem{M} = (1_{\sem{\q^n}}\otimes \chan)\circ \eta'_{[\q^n]}$.
\end{restatable}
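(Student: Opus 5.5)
The plan is to use the Stinespring dilation and then compile it into the language. Every channel $\chan$ from $n$ to $m$ qubits can be written as
\[
\chan(\rho)=\mathrm{Tr}_{E}\bigl(V(\rho\otimes\ket{0\cdots0}\bra{0\cdots0})V^\dag\bigr),
\]
where $V$ is a unitary acting on $n+k$ qubits, with $k$ chosen large enough that $n+k\geq m$, and $E$ consists of $n+k-m$ qubits. The proof has three steps:
\begin{enumerate}
\item program the unitary $V$;
\item program the ancilla preparation and the trace-out;
\item check that the denotation of the composite is the Choi form $(1\otimes\chan)\circ\eta'$ claimed in the statement.
\end{enumerate}

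\textbf{Step 1: programming $V$.} Single-qubit unitaries together with CNOT form a universal gate set~\cite{barenco}, and the decomposition is exact. So $V$ factors as a finite product of gates of the following shapes:
\begin{itemize}
\item a single-qubit $U$ applied to wire $j$;
\item a CNOT applied to wires $j,l$.
\end{itemize}
For each gate I write a term of type $\q^{n+k}\lolli\q^{n+k}$ of the form
\[
\lambda\pair{x_1,\dots,x_{n+k}}.\pair{\dots,U\,x_j,\dots}.
\]
For a two-qubit gate I use $\mathtt{let}\ \pair{a,b}=\mathtt{CNOT}\ x_j\ x_l\ \mathtt{in}\ \pair{\dots}$, reordering the components as needed (this is allowed by the exchange rule). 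These gate terms are then chained by application.

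\textbf{Step 2: ancillas and discarding.} The ancillas are supplied by the constants $\ket{0}$. The environment qubits are removed with $\mathtt{discard}$ (Example~\ref{ex:discard}), sequenced with $;$. The resulting term is
\[
M:=\lambda\pair{x_1,\dots,x_n}.\ \mathtt{let}\ \pair{y_1,\dots,y_{n+k}} = V_{\mathrm{term}}\,\pair{x_1,\dots,x_n,\ket{0},\dots,\ket{0}}\ \mathtt{in}\ \mathtt{discard}\,y_{m+1};\dots;\mathtt{discard}\,y_{n+k};\pair{y_1,\dots,y_m}.
\]
Its typing derivation is straightforward. Every variable is used exactly once, and the only $\meas$ occurs in $\mathtt{discard}$, outside any $\qcase$, so the controllability side condition is never violated.

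\textbf{Step 3: the semantics.} I compute $\sem{M}$ compositionally.
\begin{itemize}
\item By Lemma~\ref{lem:double}, the measurement-free part $V_{\mathrm{term}}$ is interpreted as $D$ applied to its $\FHilb$ semantics.
\item I check by induction on the gate list that this $\FHilb$ semantics is the Choi/name of $V$. Via $\phi^{-1}$, it is the map $V$ itself.
\item For this I need $[\mathtt{CNOT}]$ to be the name of the actual CNOT matrix. Unfolding $\mathfrak{q}$ gives $\ket0\bra0\otimes I+\ket1\bra1\otimes X$, and the swap $\gamma$ only rearranges the wires so that the result reads as a function of $c$ and then the target.
\item For $\mathtt{discard}$, $\mathfrak{M}(1,1)=\mathcal P_0+\mathcal P_1$ is the trace, which is the discarding map.
\item The application and $\mathtt{let}$ rules compose using $\phi^{-1}(1)$, i.e.\ the cup. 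The snake equations then turn the semantics of $M$ into $\phi$ of the composite channel, which is exactly $(1\otimes\chan)\circ\eta'_{[\q^n]}$.
\end{itemize}

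\textbf{Main obstacle.} I expect the main obstacle to be the bookkeeping of the wire orderings in Step 3. This concerns the swaps inserted by the exchange rule, the $\gamma$ inside $\mathfrak q$, and the $\gamma$ in the $\otimes$E rule, all of which must line up with the $\lambda$-abstraction convention so that one obtains $1\otimes\chan$ rather than a transposed or permuted version. I would handle this once, with a general lemma: for any closed term $\lambda x.N$ with $x:\q^n\vdash N:\q^m$, $\sem{\lambda x.N}=(1\otimes g)\circ\eta'$ where $g=\sem{x:\q^n\vdash N:\q^m}$. This follows directly from the definition of $\phi$. After that, it only remains to compute the open-term semantics, which is plain composition of channels.
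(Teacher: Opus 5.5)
Your proposal is correct and follows the paper's own proof: Stinespring dilation, the exact CNOT-plus-single-qubit decomposition of \cite{barenco} with CNOT built from $\qcase$ as in Example~\ref{ex:cnot}, and the partial trace via $\mathtt{discard}$. You also spell out the semantic bookkeeping that the paper leaves implicit.

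Two small corrections. First, Lemma~\ref{lem:double} only asserts that $D(f)$ is a morphism of $\Caus[\CPM]$. The identity $\sem{\Delta\vdash M:A}=D([\Delta\vdash M:A])$ for measurement-free $M$, which you rely on, needs its own routine induction using that $D$ preserves the compact closed structure. Second, in the dilation $k$ must be large enough that $2^{n+k-m}$ bounds the Kraus rank of $\chan$, not merely that $n+k\geq m$.
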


The result on quantum channels is expected. Our main expressivity result concerns the level above, i.e. quantum supermaps.
We show that
the language implements a subclass of so-called quantum circuits with quantum control (\textbf{QC-QC}s), which we will refer to as \textbf{QC-QC}s \emph{with memory}.
Based on~\citet{grenoble_process}'s definition of \textbf{QC-QC}s, we give the following definition.
An $N$-linear map from $\CPM(\mathcal{H}_I,\mathcal{H}_O)^N$ to $\CPM(\mathcal{H}_P,\mathcal{H}_F)$ (where $\mathcal{H}_I,\mathcal{H}_O,\mathcal{H}_P,\mathcal{H}_F$ are finite-dimensional Hilbert spaces) is said to be a \textbf{QC-QC} with memory if it is computed by a circuit of the following shape:
\[\scalebox{0.9}{\begin{tikzpicture}
	\begin{pgfonlayer}{nodelayer}
		\node [style=none] (3) at (-0.25, 1.75) {};
		\node [style=none] (4) at (1.25, 1.75) {};
		\node [style=none] (5) at (2.25, 1.75) {};
		\node [style=none] (6) at (3.75, 1.75) {};
		\node [style=none] (7) at (6.25, -0.25) {};
		\node [style=none] (8) at (2.25, -0.25) {};
		\node [style=none] (16) at (-3.25, -2) {};
		\node [style=none] (17) at (8.75, -2) {};
		\node [style=none] (18) at (4.75, 1.75) {};
		\node [style=none] (19) at (6.25, 1.75) {};
		\node [style=none] (20) at (-1.25, 1.75) {};
		\node [style=none] (21) at (-2.75, 1.75) {};
		\node [style=none] (22) at (7.25, 1.75) {};
		\node [style=none] (23) at (8.75, 1.75) {};
		\node [style=none] (24) at (7.25, -0.25) {};
		\node [style=none] (25) at (8.75, -0.25) {};
		\node [style=none] (26) at (-3.75, 0.75) {};
		\node [style=none] (27) at (-4.75, 0.75) {};
		\node [style=none] (28) at (11.25, 1.75) {};
		\node [style=none] (29) at (11.25, -0.25) {};
		\node [style=none] (30) at (9.75, -0.25) {};
		\node [style=none] (31) at (9.75, 1.75) {};
		\node [style=none] (32) at (12.25, 1.75) {};
		\node [style=none] (33) at (13.75, 1.75) {};
		\node [style=none] (34) at (12.25, -0.25) {};
		\node [style=none] (35) at (16.75, -0.25) {};
		\node [style=none] (36) at (14.75, 1.75) {};
		\node [style=none] (37) at (16.25, 1.75) {};
		\node [style=none] (38) at (1.25, -0.25) {};
		\node [style=none] (39) at (-2.75, -0.25) {};
		\node [style=none] (40) at (-3.25, -0.5) {};
		\node [style=none] (41) at (1.75, -0.5) {};
		\node [style=none] (42) at (1.75, -2) {};
		\node [style=none] (43) at (6.75, -0.5) {};
		\node [style=none] (44) at (6.75, -2) {};
		\node [style=none] (45) at (11.75, -0.5) {};
		\node [style=none] (46) at (11.75, -2) {};
		\node [style=none] (47) at (9.75, -2) {};
		\node [style=none] (48) at (16.75, -2) {};
		\node [style=texthere] (49) at (9.25, 1.675) {$\cdots$};
		\node [style=texthere] (50) at (9.25, -0.325) {$\cdots$};
		\node [style=texthere] (51) at (9.25, -2.075) {$\cdots$};
		\node [style=none] (52) at (17.25, 1.75) {};
		\node [style=none] (53) at (19, 1.75) {};
		\node [style=dot] (54) at (16.75, -2) {};
		\node [style=dot] (55) at (11.75, -2) {};
		\node [style=dot] (56) at (-3.25, -2) {};
		\node [style=dot] (57) at (1.75, -2) {};
		\node [style=dot] (58) at (6.75, -2) {};
		\node [style=dot] (59) at (-0.75, -2) {};
		\node [style=dot] (60) at (4.25, -2) {};
		\node [style=dot] (61) at (14.25, -2) {};
		\node [style=none] (62) at (14.25, 1.25) {};
		\node [style=none] (63) at (-0.75, 1.25) {};
		\node [style=none] (64) at (4.25, 1.25) {};
		\node [style=texthere] (65) at (-0.75, -2.75) {$|(k_1)\rangle$};
		\node [style=texthere] (66) at (4.25, -2.75) {$|(k_1,k_2)\rangle$};
		\node [style=texthere] (67) at (14.25, -2.75) {$|(k_1,...,k_N)\rangle$};
		\node [style=upground] (68) at (19, -0.25) {};
		\node [style=none] (69) at (18.825, -0.25) {};
		\node [style=big box] (74) at (-3.25, 0.75) {\rotatebox[origin=c]{90}{$V_{|()\rangle}^{\to k_1}$}};
		\node [style=small box] (75) at (-0.75, 1.75) {$x_{k_1}$};
		\node [style=small box] (76) at (4.25, 1.75) {$x_{k_2}$};
		\node [style=small box] (77) at (14.25, 1.75) {$x_{k_N}$};
		\node [style=texthere] (78) at (-4.75, 1) {$P$};
		\node [style=texthere] (79) at (-2, 2) {$I$};
		\node [style=texthere] (80) at (0.5, 2) {$O$};
		\node [style=texthere] (81) at (-2, 0) {$\alpha_1$};
		\node [style=texthere] (82) at (3, 2) {$I$};
		\node [style=texthere] (83) at (3, 0) {$\alpha_2$};
		\node [style=texthere] (85) at (5.5, 2) {$O$};
		\node [style=texthere] (86) at (8, 2) {$I$};
		\node [style=texthere] (87) at (8, 0) {$\alpha_3$};
		\node [style=texthere] (88) at (10.25, 2) {$O$};
		\node [style=texthere] (89) at (10.25, 0) {$\alpha_{N-1}$};
		\node [style=texthere] (90) at (13, 2) {$I$};
		\node [style=texthere] (91) at (13, 0) {$\alpha_N$};
		\node [style=texthere] (92) at (15.5, 2) {$O$};
		\node [style=texthere] (93) at (19, 2) {$F$};
		\node [style=texthere] (94) at (18.25, 0) {$\alpha_F$};
		\node [style=texthere] (95) at (-2.5, -1.75) {$C_1$};
		\node [style=texthere] (96) at (2.5, -1.75) {$C_2$};
		\node [style=texthere] (97) at (7.5, -1.75) {$C_3$};
		\node [style=texthere] (98) at (12.5, -1.75) {$C_N$};
		\node [style=big box] (99) at (6.75, 0.75) {\rotatebox[origin=c]{90}{$V_{|(k_1,k_2)\rangle}^{\to k_3}$}};
		\node [style=big box] (100) at (1.75, 0.75) {\rotatebox[origin=c]{90}{$V_{|(k_1)\rangle}^{\to k_2}$}};
		\node [style=big box] (101) at (11.75, 0.75) {\rotatebox[origin=c]{90}{$V_{\!|(k_1,...,k_{N\!-\!1})\rangle}^{\to k_N}$}};
		\node [style=big box] (102) at (16.75, 0.75) {\rotatebox[origin=c]{90}{$V_{|(k_1,...,k_N)\rangle}^{\to F}$}};
	\end{pgfonlayer}
	\begin{pgfonlayer}{edgelayer}
		\draw (3.center) to (4.center);
		\draw (5.center) to (6.center);
		\draw (8.center) to (7.center);
		\draw (16.center) to (17.center);
		\draw (18.center) to (19.center);
		\draw (22.center) to (23.center);
		\draw (24.center) to (25.center);
		\draw (21.center) to (20.center);
		\draw (27.center) to (26.center);
		\draw (31.center) to (28.center);
		\draw (30.center) to (29.center);
		\draw (34.center) to (35.center);
		\draw (32.center) to (33.center);
		\draw (36.center) to (37.center);
		\draw (39.center) to (38.center);
		\draw (43.center) to (44.center);
		\draw (47.center) to (46.center);
		\draw (46.center) to (45.center);
		\draw (46.center) to (48.center);
		\draw (48.center) to (35.center);
		\draw (40.center) to (16.center);
		\draw (41.center) to (42.center);
		\draw (52.center) to (53.center);
		\draw (63.center) to (59);
		\draw (64.center) to (60);
		\draw (62.center) to (61);
		\draw (35.center) to (69.center);
	\end{pgfonlayer}
\end{tikzpicture}}\]
For readability, we label a wire for Hilbert $\mathcal{H}_X$ as $X$. $\mathcal{H}_{C_n}$ is defined as the finite-dimensional Hilbert space whose computational basis consists of the states $\ket{(k_1,...,k_n)}$ where $k_1,...,k_n$ are pairwise distinct elements of $\{ 1,\ldots, N\}$.
Each of the $N$ open slots,  which are given by variables $x_{1},...,x_N$, is for one input channel.
In this model, every open slot is controlled by a memory register (the bottom wire) that decides which input channel is plugged to which open slot. For instance, if the state of the control register of the left-most controlled open slot is $\ket{(1)}$ then the first input channel $x_1$ is applied, if the state is $\ket{(2)}$ then the second channel $x_2$ is applied, and if the state is $\frac{\ket{(1)}+\ket{(2)}}{\sqrt 2}$, then  $x_1$ and $x_2$ are applied in superposition.

The bottom register also plays the role of a memory: the $n^\text{th}$ controlled open slot is controlled by a state of the form $\ket{(k_1,...,k_n)}$, such that $x_{k_1}, \ldots, x_{k_{n-1}}$ correspond to the input channels applied so far in this branch of the superposition, and $x_{k_n}$ is the input channel which is plugged in the $n^\text{th}$ open slot. Such a memory mechanism guarantees the linearity on the input channels: at the end of the circuit, for a given basis state of the memory register, each input channel has been applied once and only once. Indefinite causal order can thus be implemented when the state of memory register is in superposition.
Each of these controlled open slot is intertwined with isometries $V_{\ket{(k_1,...,k_n)}}^{\to k_{n+1}}$ that initialise and update the memory register in an appropriate way, depending on the higher order quantum operation one wants to implement.\footnote{Additionally, the output spaces of the $V_{\ket{(k_1,...,k_N)}}^{\to F}$ are assumed to be orthogonal.}

 \version{A more detailed presentation of \textbf{QC-QC}s with memory is given in Appendix~\ref{app:QCQC}.}{}

We show that the language is universal for \textbf{QC-QC}s with memory:
\begin{restatable}[Universality for \textbf{QC-QC}s with memory]{proposition}{propqcqc}
\label{prop:qcqc}
Let $\smap$ be a \textup{\textbf{QC-QC}} with memory whose input consists of $N$ channels in $\CPM(\mathbb{C}^{2^n},\mathbb{C}^{2^m})$ and whose output is a channel in $\CPM(\mathbb{C}^{2^k},\mathbb{C}^{2^l})$. Then there exists a term $M$ of type $(\q^n \lolli \q^m)^{\otimes N}\lolli \q^k \lolli \q^l$ whose interpretation coincides with $\smap$.
\end{restatable}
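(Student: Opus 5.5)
The construction is by a direct compilation of the circuit into a term, followed by a semantic check that uses compositionality of $\sem{\cdot}$. We encode the memory register $\mathcal{H}_{C_n}$ in qubits: each basis state $\ket{(k_1,\dots,k_n)}$ is mapped to a bitstring, padded with unused basis states so that $\mathcal{H}_{C_n}$ embeds into $\q^{c_n}$. The ancillas $\alpha_i$ are also encoded as blocks of qubits. Each isometry $V_{\ket{(k_1,\dots,k_n)}}^{\to k_{n+1}}$ extends to a unitary on a larger qubit space after appending freshly prepared $\ket{0}$ ancillas. It is therefore a quantum channel. By Proposition~\ref{prop:channels}, or more directly from single-qubit unitaries plus $\mathtt{CNOT}$ (Example~\ref{ex:cnot}) and universality of the gate set, we obtain a term $\mathtt{V}_n$ with $\sem{\mathtt{V}_n}$ equal to the doubling of that unitary. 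We take these terms to be controllable, i.e.\ built only from unitaries, $\qcase$, $\ket{0}$ and $\ket{1}$, with no $\meas$. This matters because they will sit inside the branches of a $\qcase$.

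The core step is the \emph{controlled open slot}. For slot number $n$ we need a term, controlled by the register qubits, that applies $x_{k}$ when the control is $\ket{(k_1,\dots,k_{n-1},k)}$, while every $x_j$ still occurs linearly overall. The plan is to build the whole second-order term as nested $\qcase$ trees over the register qubits. Each leaf corresponds to a full ordering $(k_1,\dots,k_N)$, a permutation of $\{1,\dots,N\}$, and contains the sequential composition (via $\mathtt{comp}$-style application) of $x_{k_1},\dots,x_{k_N}$ interleaved with the appropriate $\mathtt{V}$'s. Every leaf uses each $x_j$ exactly once, so the branches of each $\qcase$ share the same linear context $\Delta'=x_1,\dots,x_N$, as the qcase rule requires. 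The free variables $x_j$ are not measurements, so the branches are in $\mathscr{C}$. We abstract over the control and target so that each branch has type $A\lolli\q^{r}$, with the register qubits as the controls. Because control must be resolved incrementally (the register after slot $n$ depends on $V$'s output), I would restructure the circuit as a coherent superposition over full histories. The state after all $V$'s decomposes as $\sum_{\vec k}\ket{\vec k}\otimes(\text{branch operator})$ by linearity of the $V$'s in the control basis. I would then show, by induction on $n$, that the circuit equals a $\qcase$ tree whose leaves are the fixed-order circuits along each history. At the end, the final register and the $\alpha_F$ wire are discarded with $\mathtt{discard}$ (Example~\ref{ex:discard}). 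The footnoted orthogonality of the output spaces of $V^{\to F}$ is what makes discarding the memory equal to tracing it out in the circuit.

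For correctness, we compute $\sem{M}$ using the $\qcase$ clause $D(\mathfrak{q}(f,g))$ together with Lemma~\ref{lem:double}. By induction on the tree depth, the pure interpretation $[\cdot]$ of the $\qcase$ tree applied to inputs is $\sum_{\vec k}\ket{\vec k}\bra{\vec k}\otimes(\text{leaf}_{\vec k})$. This is exactly the Kraus-level description of the QC-QC with memory when the inputs are pure. Multilinearity in the $x_j$, and the fact that both sides are determined by their Choi representation via $\phi$, then extend equality from pure (unitary/isometric) inputs to arbitrary channels in $\CPM(\mathbb{C}^{2^n},\mathbb{C}^{2^m})$ by linear span. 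The type $(\q^n\lolli\q^m)^{\otimes N}\lolli\q^k\lolli\q^l$ is obtained by $\lambda\pair{x_1,\dots,x_N}.\lambda p.\ldots$.

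The main obstacle is the middle step: the circuit interleaves the $V$'s, which act on the register, with slots controlled by that same register. A $\qcase$, however, fixes its control before executing its branches. The resolution to try first is to make the register qubits lambda-threaded through the branches: each branch takes the current register and target as input and returns them. This fits the $A\lolli\q^n$ typing, with the updated register being output qubits of the branch. A nested $\qcase$ on the newly produced register qubits is then applied inside the next stage, using the $\mathtt{let}$/abstraction trick from Section~\ref{section:types}. The remaining work is to verify that linearity in each $x_j$ survives across stages.
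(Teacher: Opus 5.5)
Your architecture is the paper's: nested $\qcase$s over index registers, each branch using the not-yet-applied slot variables exactly once so the qcase rule's shared-context condition holds, measurement-free isometry terms inside branches, measurements only outside, and multilinear extension. However, two steps fail as written.

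\textbf{The final step.} In the circuit the memory register is not traced out. It is absorbed coherently by $\tilde V_{N+1}=\sum_{\vec k}V^{\to F}_{\ket{\vec k}}\otimes\bra{\vec k}$, and only $\alpha_F$ is traced. Orthogonality of the output spaces of the $V^{\to F}_{\ket{\vec k}}$ is what makes this map an isometry, and hence implementable by a term (the paper's $\mathtt{W}_F$). It does not make discarding the register harmless. Applying $\mathtt{discard}$, i.e.\ a computational-basis measurement, to the register destroys the coherences between different orderings. Take the quantum switch with control $\alpha\ket{0}+\beta\ket{1}$, target $\ket{\psi}$ and Kraus-rank-one inputs $A_1,A_2$. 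The output $\alpha\ket{0}A_2A_1\ket{\psi}+\beta\ket{1}A_1A_2\ket{\psi}$ becomes the mixture $|\alpha|^2\ket{0}\bra{0}\otimes A_2A_1\ket{\psi}\bra{\psi}A_1^\dagger A_2^\dagger+|\beta|^2\ket{1}\bra{1}\otimes A_1A_2\ket{\psi}\bra{\psi}A_2^\dagger A_1^\dagger$, which is the classical switch. You must let $\mathtt{W}_F$ consume the register and discard only $\alpha_F$.

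\textbf{The step you flag as the main obstacle.} This stays open, and your suggested threading does not work with a full-history register. A $\qcase$ returns its control unchanged and never passes it to the branches. So $\ket{(k_1,\dots,k_n)}$ cannot be both the control and the data that a controlled $\tilde V_{n+1}$ reads and overwrites inside the branch. The paper makes the history syntactic instead.
\begin{itemize}
\item For each prefix it writes a separate program $\mathtt{P}_{(k_1,\dots,k_n)}$. Its $\qcase_{N-n+1}$ acts on a fresh register holding only $k_n$ and has one branch per remaining index.
\item Inside a branch the history is known, so it applies the history-specific isometry $W_{(k_1,\dots,k_n)}=\sum_{k_{n+1}}V^{\to k_{n+1}}_{\ket{(k_1,\dots,k_n)}}\otimes\ket{k_{n+1}}$.
\item This writes just the new index into a fresh register, which controls the next, nested $\qcase$.
\item The returned controls together form $\ket{(k_1,\dots,k_N)}$, which $\mathtt{W}_F$ then consumes.
\end{itemize}
It is these combined maps that you must extend to unitaries, not the individual $V^{\to k_{n+1}}$. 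Since $W^\dagger W=\sum_{k}(V^{\to k})^\dagger V^{\to k}$ must be the identity, the individual maps are generally not isometries.

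\textbf{A smaller point.} Verify the pure computation for arbitrary Kraus operators $D(A)$, not just unitary or isometric inputs. When $n=m$, unitary channels only span maps sending $I$ to a multiple of $I$, so non-unital channels would be missed. When $m<n$, there are no isometric inputs at all.
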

The constructive, bottom-up approach of the definition directly gives us the implementation of a specific \textbf{QC-QC}s with memory.
Therefore after defining a generalization of the $\qcase$ construct to $n$ branching statements, it is straightforward to construct a term that implements any \textbf{QC-QC} with memory. \version{The construction is detailed in Appendix~\ref{app:expressivity}.}{}

\begin{remark}
\textbf{QC-QC}s with memory are special instances of general \textbf{QC-QC}s. In the general case the bottom register stores which input channels have been applied so far but not in which order (the basis states of this register are then sets rather than lists). We leave for further development the implementation
of general \textbf{QC-QC}s.
\end{remark}

\section{An Extension for Non-Linearity and Recursion}
\label{section:extension}

We have defined a linear programming language which supports both measurements and processes with indefinite causal orders.
Although linearity is the key feature that enables us combine these two features, requiring all programs to be strictly linear is a considerable restriction. Indeed, we cannot implement any quantum algorithm whose input is an oracle that is queried more than once.
To address this limitation, we define an extension of the language that includes some nonlinear features, and give its operational semantics. Notably, this extension has a $\letrec$ construct for recursive processes, because multiple calls to a function are now allowed.

\subsection{Syntax and Type System}

The extension of the syntax is fairly straightforward: we add nonlinear versions of the terms of the lambda calculus, and terms for recursion.
In addition to the set $\mathcal{X}$ of linear variables, we define a countably infinite set $\mathcal{V}$ of \emph{duplicable variables}, written as $u,v,w...$.
The set of terms $\mathscr{T}^*\supseteq \mathscr{T}$ is defined by the grammar in Figure~\ref{fig:syntax_gen}.
\begin{figure}[!t]
\hrulefill
\begin{align*}
 \begin{array}{lrl}
 \
 \text{(Extended terms)}&\quad \mathscr{T}^* \ni M,N,P \  ::=  &  x \ |\ M N \ |\ \lambda x.M \ |\ u\ |\ M^*N \ | \ \lambda^* u. M \\
  && \!\!\!\! |\ \ \pair{M,N} \ |\ \mathtt{let}\ \pair{x,y} = M \  \mathtt{in}\ N \ |\ \unit\ |\ M;N \\
  && \!\!\!\! |\ \ U\ | \ \ket{0}\ |\ \ket{1}
  \\
  && \!\!\!\! |\ \ \meas \ P\ \{0\rightarrow M\ |\ 1\rightarrow N\} \\
  && \!\!\!\! |\ \ \qcase \ P\ \{0\rightarrow M\ |\ 1\rightarrow N\} \\
 && \!\!\!\! |\ \ \letrec \ f\ x=M \ |\ \letrec^* \ f\ u=M
 \end{array}
\end{align*}
\hrulefill
\caption{Extended syntax}
\label{fig:syntax_gen}
\end{figure}
The terms $u,M^*N,\lambda^*u.M$ denote nonlinear variables, application and $\lambda$-abstraction.
Whereas the term $M$ in $\lambda x. M$ must contain \emph{exactly} one occurrence of $x$ (counting occurrences in different branches of $\meas$ or $\qcase$ as the same), the term $M$ in $\lambda^* u. M$ must contain \emph{at least} one occurrence of $u$.
The term $\letrec\ f\ x=M$ recursively defines $f$ as a linear function of $x$; the term $\letrec^*\ f\ u=M$ recursively defines $f$ as a nonlinear function of $u$. In both cases, the function symbol $f$ itself is a duplicable variable.

We define a set of \emph{controllable terms} $\mathscr{C}^*\subseteq \mathscr{T}^*$ as the set of terms that do not contain duplicable variables or measurement.

The type system is extended to include a nonlinear function type $\Rightarrow$:
\[
 \begin{array}{lll}
 \text{(Extended types)}\quad A,B \ & ::= \ & \one\ |\ \q \ |\
  A \otimes B \ |\ A \lolli B \ |\ A \Rightarrow B
 \end{array}
\]
We now distinguish linear and unrestricted typing contexts.
A \emph{linear typing context} is a list $\Delta = x_1:A_1,...,x_n:A_n$ where $x_1,...,x_n$ are pairwise distinct linear variables and $A_1,...,A_n$ their respective types.
An \emph{unrestricted typing context} is a list $\Gamma = u_1:A_1,...,u_n:A_n$ where $u_1,...,u_n$ are pairwise distinct duplicable variables and $A_1,...,A_n$ their respective types.
Then, \emph{typing judgments} are expressions of the form
\[
 \Gamma; \Delta \vdash M :A
\]
where $M$ is a term, $A$ a type, $\Gamma$ an unrestricted typing context and $\Delta$ a linear typing context.
The typing rules are obtained by extending the linear type system in the expected way.\version{\footnote{See Appendix~\ref{app:extension_typing}.}}{}

\subsection{Operational Semantics}

The main difficulty of the operational semantics is keeping track of which measurement outcomes must be synchronized. In the linear fragment, this is achieved by assigning to each measurement a unique device reference $d$ and recording its outcome via a memory function.

However, the possible duplication of terms in the nonlinear extension invalidates the device reference approach.
To see this,  consider the execution term $\Measd{d}$ of Section~\ref{section:design}
On the one hand, by reducing the term
\begin{equation}
\label{eq:linear_duplication}
(\lambda y. \qcase \ H\ket{0} \ \{0\to y\ |\ 1\to y\})\  \Measd{d},
\end{equation}
we will obtain two copies of $ \Measd{d}$, whose measurement outcomes must be synchronized.
But on the other hand, the term
\[
 (\lambda^*u. \pair{u\ket{0},u\ket{1}})\ \Measd{d}
\]
should be read as creating two independent occurences of the process $\Measd{d}$, which are then applied to distinct inputs. This is essentially the same as creating a second measurement device; their measurement outcomes should therefore also be independent.

To resolve this issue, we modify the treatment of device references so that nonlinearity can be interpreted as duplicating the devices themselves. We define the set of \emph{extended device references} $\device^*$ as sequences $d\cdot n_1 \cdot ...\cdot n_k$ where $d\in\device$ is a device reference, $k\geq 0$ and $n_1,...,n_k\geq 1$. In particular, $\device\subseteq \device^*$. Then, an \emph{extended memory function} is a function from $\device^*$ to $\{0,1\}_\bot$ such that only a finite number of elements of $\device$ are mapped to $\{0,1\}$.

In order to define the operational semantics, we define execution terms and terms superpositions similarly to the linear language, replacing device references and memory functions with their extended counterparts. \version{A more detailed explanation is provided in Appendix~\ref{app:extension_op}.}{} For all pair of execution terms $M,N$ and $u\in\mathcal{V}$, $M\db{N/u}$ is defined as the execution term where the $n$-th occurrence of $u$ in $M$ is replaced by $N$ in which every extended device reference $d^*$ is replaced by $d^* \cdot n$. For example:
\begin{align*}
&\pair{\lambda x. M\,u,u}\db{\meas \ d^*\triangleright H\ket{0}\  \{0\to \ket{0} \ |\ 1\to \ket{1}\}/u}\\
&\quad := \pair{\lambda x. M\,\meas \ (d^*\cdot 1) \triangleright H\ket{0}\ \{0\to \ket{0} \ |\ 1\to \ket{1}\},\meas \ (d^*\cdot 2)\triangleright H\ket{0}\ \{0\to \ket{0} \ |\ 1\to \ket{1}\}}\
\end{align*}

Therefore, we can extend the reduction relation $\twoheadrightarrow$, where the main novelty is the rules for nonlinear abstraction and recursion\version{ (the full definition is given in Appendix~\ref{app:extension_op})}{}:
\[
\scalebox{1}{
 \begin{prooftree}
  \infer0{(\lambda^* u. M)^*V\twoheadrightarrow M\db{V/u}}
 \end{prooftree}}
\qquad
\scalebox{1}{
\begin{prooftree}
 \infer0{(\letrec\ f\ x=M )V \twoheadrightarrow M\{V/x\}\db{\letrec \ f\ x=M/f}}
\end{prooftree}}
\]
\[
\scalebox{1}{
\begin{prooftree}
 \infer0{(\letrec^*\ f\ u=M)^*V \twoheadrightarrow M\db{V/u}\db{\letrec^* \ f\ u=M/f}}
\end{prooftree}}
\]

Lastly, we can check that the standard properties of the typing and operational semantics (substitution—Lemma~\ref{lem:substitution}, subject reduction—Proposition~\ref{prop:subject_reduction}, progress—Lemma~\ref{lem:progress} and uniqueness of the normal form—Proposition~\ref{prop:confluence}) still hold.

\subsection{Examples}

\begin{example}[No-cloning]

Due to the linear evaluation strategy, the term $\lambda^*u.\pair{u,u}$ acts by copying a qubit along the computational basis, not cloning it. To illustrate this, suppose the term is applied to $U\ket{0}$, which we assume reduces to $\alpha\ket{0}+\beta\ket{1}$:
\begin{align*}
 (\lambda^*u.\pair{u,u} )(U\ket{0}) & \twoheadrightarrow \alpha \cdot (\lambda^*u.\pair{u,u} )\ket{0} +\beta\cdot (\lambda^*u.\pair{u,u} )\ket{1} \\
 & \twoheadrightarrow \alpha \cdot \ket{0,0} +\beta\cdot (\lambda^*u.\pair{u,u} )\ket{1}  \twoheadrightarrow \alpha\cdot \ket{0,0} +\beta \cdot\ket{1,1}
\end{align*}
In contrast, cloning would have given the following result:
\begin{align*}
 \pair{U\ket{0},U\ket{0}}&\twoheadrightarrow \alpha\cdot \pair{U\ket{0},\ket{0}}+\beta\cdot \pair{U\ket{0},\ket{1}} \twoheadrightarrow \alpha^2\cdot \ket{0,0}+\alpha\beta\cdot \ket{1,0}+\beta\alpha \cdot\ket{0,1} + \beta^2\cdot\ket{1,1}
\end{align*}

\end{example}

\begin{example}[Repeat-Until-Success]
\label{ex:rus}
In~\cite{repeat_until_success}, the authors define ``Repeat-Until-Success'' circuits, which are non-deterministic circuits that minimize the $T$ count required to approximate single-qubit unitary gates.
Repeat-Until-Success circuits can be implemented in our language by recursive terms of the following form:
\begin{align*}
 \letrec \ f \ z = \ & \mathtt{let}\ \pair{x,y}= \mathtt{U}\ \pair{\ket{0},z} \ \mathtt{in}
 \ \meas \ x \ \{0\to \pair{y,f} \ |\ 1\ \to f\,y\}
\end{align*}
where $\mathtt{U}$ is a term of type $\q\otimes\q \lolli \q\otimes \q$ implementing a 2-qubit unitary gate. The program returns a pair consisting of the output qubit together with a copy of the recursive function $f$.

\end{example}

\begin{acks}
This work is supported by the the \emph{Plan France 2030} through the PEPR integrated project EPiQ ANR-22-PETQ-0007 and the HQI platform ANR-22-PNCQ-0002; and by the European projects Quantum Flagship NEASQC, European  MSCA Staff Exchanges Qcomical HORIZON-MSCA-2023-SE-01. The project is also supported by the \emph{Maison du Quantique} MaQuEst. 
\end{acks}

\printbibliography

\version{}{\end{document}}

\begin{appendices}

\section{Additional Details on Examples}
\label{app:examples}

\subsection{Typing Derivations for Section~\ref{section:syntax_examples}}

Here, we give the typing derivations for the examples of Section~\ref{section:syntax_examples}.

$\mathtt{CNOT}$ is of type $\q\lolli\q \lolli \q\otimes \q$:
\[
\scalebox{0.8}{
 \begin{prooftree}
 \infer0{c:\q\vdash c:\q}
 \infer0{\vdash I:\q\lolli \q}
 \infer0{\vdash X:\q\lolli \q}
 \hypo{I,X\in\mathscr{C}}
 \infer4{c:\q\vdash \qcase \ c \ \{0\rightarrow I \ |\ 1\rightarrow X\}: \q\lolli \q\otimes \q}
 \infer1{\vdash  \lambda c.\qcase \ c \ \{0\rightarrow I \ |\ 1\rightarrow X\}: \q\lolli \q\lolli \q\otimes \q}
 \end{prooftree}}
\]

For composition, we have the following typing derivation for any types $A$, $B$, $C$, where exchange rules are left implicit:
\[
\scalebox{0.8}{
 \begin{prooftree}
 \infer0{z:(A\lolli B)\otimes (B\lolli C)\vdash z:(A\lolli B)\otimes (B\lolli C)}
 \infer0{g:B\lolli C\vdash g: B\lolli C}
 \infer0{f:A\lolli B\vdash f: A\lolli B}
 \infer0{t:A\vdash t: A}
 \infer2{f:A\lolli B,t:A\vdash ft: B}
 \infer2{f:A\lolli B, g:B\lolli C,t:A\vdash g(ft): C}
 \infer1{f:A\lolli B, g:B\lolli C\vdash \lambda t. g(ft): A\lolli C}
 \infer2{z:(A\lolli B)\otimes (B\lolli C)\vdash \mathtt{let} \ \pair{f, g} =z \ \mathtt{in} \ \lambda t. g(ft): A\lolli C}
\infer1{\vdash \lambda z. \mathtt{let} \ \pair{f,g} =z \ \mathtt{in} \ \lambda t. g(f t):(A\lolli B)\otimes (B\lolli C)\lolli A\lolli C}
 \end{prooftree}}
\]

The following derivation shows that $\mathtt{switch}$ is of type $(\q^n\lolli \q^n)\otimes (\q^n\lolli \q^n )\lolli \q\lolli \q^n \lolli \q^{n+1}$, for $n\geq 1$. Again, exchange rules are left implicit.
\[
\scalebox{0.6}{
 \begin{prooftree}
 \infer0{z :  (\q^n\lolli \q^n)\otimes (\q^n\lolli \q^n )\vdash z:  (\q^n\lolli \q^n)\otimes (\q^n\lolli \q^n )}
  \infer0{q:\q \vdash q:\q}
 \hypo{(*)}
 \infer1{x:\q^n\lolli \q^n , y: \q^n\lolli \q^n \vdash \mathtt{comp} \ \pair{x,y}: \q^n\lolli \q^n}
 \hypo{(*')}
 \infer1{x:\q^n\lolli \q^n , y: \q^n\lolli \q^n \vdash \mathtt{comp} \ \pair{y,x}:\q^n\lolli \q^n}
 \infer3{x:\q^n\lolli \q^n , y: \q^n\lolli \q^n, q: \q \vdash \qcase\ q\ \{0\rightarrow \mathtt{comp} \ \pair{x,y} \ |\ 1\rightarrow \mathtt{comp} \ \pair{y,x} \}: \q^n \lolli \q^{n+1}}
 \infer1{x:\q^n\lolli \q^n , y: \q^n\lolli \q^n \vdash \lambda q. \qcase\ q\ \{0\rightarrow \mathtt{comp} \ \pair{x,y} \ |\ 1\rightarrow \mathtt{comp} \ \pair{y,x} \}: \q\lolli \q^n \lolli \q^{n+1}}
 \infer2{z :  (\q^n\lolli \q^n)\otimes (\q^n\lolli \q^n )\vdash \mathtt{let}\ \pair{x,y}=z \ \mathtt{in}\ \lambda q. \qcase\ q\ \{0\rightarrow \mathtt{comp} \ \pair{x,y} \ |\ 1\rightarrow \mathtt{comp} \ \pair{y,x} \}: \q\lolli \q^n \lolli \q^{n+1}}
  \infer1{\vdash \lambda z. \mathtt{let}\ \pair{x,y}=z \ \mathtt{in}\ \lambda q. \qcase\ q\ \{0\rightarrow \mathtt{comp} \ \pair{x,y} \ |\ 1\rightarrow \mathtt{comp} \ \pair{y,x} \}: (\q^n\lolli \q^n)\otimes (\q^n\lolli \q^n )\lolli \q\lolli \q^n \lolli \q^{n+1}}
 \end{prooftree}}
\]
The derivations $(*)$ and $(*')$ are symmetrical. We give the derivation $(*)$:
\[
 \scalebox{0.8}{
 \begin{prooftree}
  \infer0{\vdash \mathtt{comp}: (\q^n\lolli \q^n)\otimes (\q^n\lolli \q^n) \lolli \q^n\lolli \q^n}
  \infer0{x:\q^n\lolli \q^n \vdash x:\q^n\lolli \q^n}
  \infer0{y: \q^n\lolli \q^n \vdash y:\q^n\lolli \q^n}
 \infer2{x:\q^n\lolli \q^n , y: \q^n\lolli \q^n \vdash \pair{x,y}:(\q^n\lolli \q^n)\otimes (\q^n\lolli \q^n)}
 \infer2{x:\q^n\lolli \q^n , y: \q^n\lolli \q^n \vdash \mathtt{comp} \ \pair{x,y}: \q^n\lolli \q^n}
 \end{prooftree}
}
\]
where the typing of composition is given by the previous example.

\subsection{Detailed Reduction of Example~\ref{ex:switch_reduction}}
\label{app:example_switch}

With $\Measd{d}$ and $\Measd{e}$ defined as in Example~\ref{ex:switch_reduction}, we have the following sequence of reductions:
\begin{align*}
 &\mathtt{switch} \ \pair{\Measd{d},\Measd{e}} \ (H\ket{0})\\
 &\quad = \ \bigl(\lambda z. \mathtt{let}\ \pair{x,y}=z \ \mathtt{in}\ \lambda q. \qcase\ q\ \{0\rightarrow \mathtt{comp} \ \pair{x,y} \ |\ 1\rightarrow \mathtt{comp} \ \pair{y,x} \}\bigr)\ \pair{\Measd{d},\Measd{e}} \ (H\ket{0})
 \\
 &\quad \twoheadrightarrow \ \sfrac{1}{\sqrt{2}}\cdot  \bigl(\lambda z. \mathtt{let}\ \pair{x,y}=z \ \mathtt{in}\ \lambda q. \qcase\ q\ \{0\rightarrow \mathtt{comp} \ \pair{x,y} \ |\ 1\rightarrow \mathtt{comp} \ \pair{y,x} \}\bigr)\ \pair{\Measd{d},\Measd{e}} \ \ket{0}
 \\
 &\quad \quad\ \ + \sfrac{1}{\sqrt{2}} \cdot \bigl(\lambda z. \mathtt{let}\ \pair{x,y}=z \ \mathtt{in}\ \lambda q. \qcase\ q\ \{0\rightarrow \mathtt{comp} \ \pair{x,y} \ |\ 1\rightarrow \mathtt{comp} \ \pair{y,x} \}\bigr)\ \pair{\Measd{d},\Measd{e}} \ \ket{1}
 \\
 & \quad \twoheadrightarrow^2 \ \sfrac{1}{\sqrt{2}}\cdot  \bigl(\mathtt{let}\ \pair{x,y}=\pair{\Measd{d},\Measd{e}} \ \mathtt{in}\ \lambda q. \qcase\ q\ \{0\rightarrow \mathtt{comp} \ \pair{x,y} \ |\ 1\rightarrow \mathtt{comp} \ \pair{y,x} \}\bigr) \ \ket{0}
 \\
 &\quad \quad \ \ + \sfrac{1}{\sqrt{2}} \cdot \bigl(\mathtt{let}\ \pair{x,y}=\pair{\Measd{d},\Measd{e}} \ \mathtt{in}\ \lambda q. \qcase\ q\ \{0\rightarrow \mathtt{comp} \ \pair{x,y} \ |\ 1\rightarrow \mathtt{comp} \ \pair{y,x} \}\bigr)\ \ket{1}
 \\
 &\quad \twoheadrightarrow^2 \ \sfrac{1}{\sqrt{2}}\cdot  \bigl(\lambda q. \qcase\ q\ \{0\rightarrow \mathtt{comp} \ \pair{\Measd{d},\Measd{e}} \ |\ 1\rightarrow \mathtt{comp} \ \pair{\Measd{e},\Measd{d}} \}\bigr) \ \ket{0}
 \\
 &\quad \quad \ \ + \sfrac{1}{\sqrt{2}} \cdot \bigl(\lambda q. \qcase\ q\ \{0\rightarrow \mathtt{comp} \ \pair{\Measd{d},\Measd{e}} \ |\ 1\rightarrow \mathtt{comp} \ \pair{\Measd{e},\Measd{d}} \}\bigr)\ \ket{1}
 \\
 &\quad \twoheadrightarrow^2 \ \sfrac{1}{\sqrt{2}}\cdot   \qcase\ \ket{0}\ \{0\rightarrow \mathtt{comp} \ \pair{\Measd{d},\Measd{e}} \ |\ 1\rightarrow \mathtt{comp} \ \pair{\Measd{e},\Measd{d}} \}
 \\
 &\quad \quad\ \ + \sfrac{1}{\sqrt{2}} \cdot \qcase\ \ket{1}\ \{0\rightarrow \mathtt{comp} \ \pair{\Measd{d},\Measd{e}} \ |\ 1\rightarrow \mathtt{comp} \pair{\Measd{e},\Measd{d}} \}
 \\
 &\quad \twoheadrightarrow^2 \ \sfrac{1}{\sqrt{2}}\cdot   \lambda s. \pair{\ket{0}, \mathtt{comp} \ \pair{\Measd{d},\Measd{e}}\ s}
 +\ \sfrac{1}{\sqrt{2}} \cdot\lambda s. \pair{\ket{1}, \mathtt{comp} \ \pair{\Measd{e},\Measd{d}}\ s}
\end{align*}
Therefore, if we take the target qubit to be initially in state $\ket{0}$, we obtain the sequence of reductions:
\begin{align*}
 &\mathtt{switch} \ \pair{\Measd{d},\Measd{e}} \ (H\ket{0}) \ \ket{0}
\\
 &\quad \twoheadrightarrow^* \ \sfrac{1}{\sqrt{2}}\cdot   (\lambda s. \pair{\ket{0}, \mathtt{comp} \ \pair{\Measd{d},\Measd{e}}\ s})\,\ket{0}
 +\ \sfrac{1}{\sqrt{2}} \cdot(\lambda s. \pair{\ket{1}, \mathtt{comp} \ \pair{\Measd{e},\Measd{d}}\ s})\,\ket{0}
 \\
 &\quad = \ \sfrac{1}{\sqrt{2}}\cdot   (\lambda s. \pair{\ket{0}, (\lambda z. \mathtt{let} \ \pair{x, y} =z \ \mathtt{in} \ \lambda t. y(xt)) \ \pair{\Measd{d},\Measd{e}}\ s})\,\ket{0}
 \\
 &\quad \quad \ \ +\ \sfrac{1}{\sqrt{2}} \cdot(\lambda s.\pair{\ket{1}, (\lambda z. \mathtt{let} \ \pair{x, y} =z \ \mathtt{in} \ \lambda t. y(xt)) \ \pair{\Measd{e},\Measd{d}}\ s})\,\ket{0}
 \\
 &\quad \twoheadrightarrow^2 \ \sfrac{1}{\sqrt{2}}\cdot \pair{\ket{0}, (\lambda z. \mathtt{let} \ \pair{x, y} =z \ \mathtt{in} \ \lambda t. y(xt)) \ \pair{\Measd{d},\Measd{e}}\ \ket{0}}
 \\
 &\quad \quad \ \ +\ \sfrac{1}{\sqrt{2}} \cdot \pair{\ket{1}, (\lambda z. \mathtt{let} \ \pair{x, y} =z \ \mathtt{in} \ \lambda t. y(xt)) \ \pair{\Measd{e},\Measd{d}}\ \ket{0}}
 \\
 &\quad \twoheadrightarrow^2 \ \sfrac{1}{\sqrt{2}}\cdot   \pair{\ket{0}, (\mathtt{let} \ \pair{x, y} =\pair{\Measd{d},\Measd{e}} \ \mathtt{in} \ \lambda t. y(xt)) \ket{0}}
 \\
&\quad \quad \ \ +\ \sfrac{1}{\sqrt{2}} \cdot\pair{\ket{1}, (\mathtt{let} \ \pair{x, y} =\pair{\Measd{e},\Measd{d}} \ \mathtt{in} \ \lambda t. y(xt))\ket{0}}
 \\
&\quad \twoheadrightarrow^2 \ \sfrac{1}{\sqrt{2}}\cdot   \pair{\ket{0}, (\lambda t. \Measd{e}(\Measd{d}t))\ket{0}}
 +\ \sfrac{1}{\sqrt{2}} \cdot\pair{\ket{1}, (\lambda t. \Measd{d}(\Measd{e}t))\ket{0}}
 \\
 &\quad \twoheadrightarrow^2 \ \sfrac{1}{\sqrt{2}}\cdot   \pair{\ket{0}, \Measd{e}(\Measd{d}\ket{0})}
 +\ \sfrac{1}{\sqrt{2}} \cdot\pair{\ket{1}, \Measd{d}(\Measd{e}\ket{0})}
\end{align*}

\subsection{Detailed Reduction of Example~\ref{ex:meas_transfer}}
\label{app:meas_transfer}

We have the following sequence of reductions:
\begin{align*}
 \mathtt{N'} =& \left(\lambda w. \qcase\ H\ket{0} \ \{ 0 \rightarrow \lambda t. \pair{t,w\,\ket{0}} \ |\ 1\rightarrow \lambda t.\pair{t,w\,\ket{1}}\}\right)\Measd{d} \ \ket{k}
 \\
 \twoheadrightarrow & \ \qcase\ H\ket{0} \ \{ 0 \rightarrow \lambda t. \pair{t,\Measd{d}\,\ket{0}} \ |\ 1\rightarrow \lambda t.\pair{t,\Measd{d}\,\ket{1}}\} \ \ket{k}
 \\
 \twoheadrightarrow &\ \frac{1}{\sqrt{2}}\cdot \qcase\ \ket{0} \ \{ 0 \rightarrow \lambda t. \pair{t,\Measd{d}\,\ket{0}} \ |\ 1\rightarrow \lambda t.\pair{t,\Measd{d}\,\ket{1}}\} \ \ket{k} \\
 & + \frac{1}{\sqrt{2}}\cdot \qcase\ \ket{1} \ \{ 0 \rightarrow \lambda t. \pair{t,\Measd{d}\,\ket{0}} \ |\ 1\rightarrow \lambda t.\pair{t,\Measd{d}\,\ket{1}}\} \ \ket{k}
 \\
 \twoheadrightarrow^2 &\ \frac{1}{\sqrt{2}}\cdot \lambda s. \pair{\ket{0},\lambda t. \pair{t,\Measd{d}\,\ket{0}} s} \ket{k}
 + \frac{1}{\sqrt{2}}\cdot \lambda s. \pair{\ket{1},\lambda t. \pair{t,\Measd{d}\,\ket{1}} s}\ket{k}
 \\
\twoheadrightarrow^2 &\ \frac{1}{\sqrt{2}}\cdot \pair{\ket{0},\lambda t. \pair{t,\Measd{d}\,\ket{0}} \ket{k}}
 + \frac{1}{\sqrt{2}}\cdot \pair{\ket{1},\lambda t. \pair{t,\Measd{d}\,\ket{1}}\ket{k}}
 \\
 \twoheadrightarrow^2 &\ \frac{1}{\sqrt{2}}\cdot \pair{\ket{0},\ket{k} ,\Measd{d}\,\ket{0}}
 + \frac{1}{\sqrt{2}}\cdot \pair{\ket{1},\ket{k},\Measd{d}\,\ket{1}}
 \ =:\ \mathtt{P}
\end{align*}
The term $P$ consists of a superposition with one copy of the measurement in each summand. If we apply rule ($m_0$) in the first summand, we obtain:
\begin{align*}
 \mathtt{P} \ =& \ \frac{1}{\sqrt{2}}\cdot \pair{\ket{0},\ket{k} ,\Measd{d}\,\ket{0}}
 + \frac{1}{\sqrt{2}}\cdot \pair{\ket{1},\ket{k},\Measd{d}\,\ket{1}}
 \\
 \twoheadrightarrow^* &\ \frac{1}{\sqrt{2}}\cdot \pair{\ket{0},\ket{k},\ket{0}}^{[d\mapsto 0]} + \frac{1}{\sqrt{2}}\cdot \pair{\ket{1},\ket{k},\Measd{d}\,\ket{1}}^{[\bot]}
 \\
 \twoheadrightarrow^* &\ \frac{1}{\sqrt{2}}\cdot \pair{\ket{0},\ket{k},\ket{0}}^{[d\mapsto 0]} + 0\cdot \pair{\ket{1},\ket{k},\ket{0}}^{[d\mapsto 0]}
\end{align*}
Conversely if we apply $(m_1)$, we obtain:
\begin{align*}
 P \ =& \ \frac{1}{\sqrt{2}}\cdot \pair{\ket{0},\ket{k} ,\Measd{d}\,\ket{0}}
 + \frac{1}{\sqrt{2}}\cdot \pair{\ket{1},\ket{k},\Measd{d}\,\ket{1}}
 \\
 \twoheadrightarrow^* &\  0 \cdot \pair{\ket{0},\ket{k},\ket{1}}^{[d\mapsto 1]} + \frac{1}{\sqrt{2}}\cdot \pair{\ket{1},\ket{k},\Measd{d}\,\ket{1}}^{[\bot]}
 \\
 \twoheadrightarrow^* & \ 0 \cdot \pair{\ket{0},\ket{k},\ket{1}}^{[d\mapsto 1]} + \frac{1}{\sqrt{2}} \cdot \pair{\ket{1},\ket{k},\ket{1}}^{[d\mapsto 1]}
\end{align*}

 \section{Proofs of Properties of the Operational Semantics}

 \subsection{Safety Properties}

 \lemsubstitution*

\begin{proof}
 We prove the statement by structural induction on the derivation of $\Delta,x:A\vdash M:B$.
  We examine the last typing rule of the derivation. For instance, suppose the last rule is $\lolli$I:
  \[
  \scalebox{0.9}{
  \begin{prooftree}
 \hypo{\Delta_1,x:A,\Delta_2,y:B\vdash M:C}
 \infer1[$\lolli$I]{\Delta_1,x:A,\Delta_2 \vdash \lambda y. M:B\lolli C}
 \end{prooftree}}
 \]
 Without loss of generality, we assume that the variable $y$ is not in $\Delta_3$. By induction hypothesis, $\Delta_1,\Delta_2,y:B,\Delta_3\vdash M\{N/x\}:C$ is a valid typing judgment. Therefore we can derive:
\[
\scalebox{0.9}{
\begin{prooftree}
 \hypo{\Delta_1,\Delta_2,y:B,\Delta_3\vdash M\{N/x\}:C}
 \infer[double]1[$X$]{\Delta_1,\Delta_2,\Delta_3,y:B\vdash M\{N/x\}:C}
 \infer1[$\lolli$I]{\Delta_1,\Delta_2,\Delta_3\vdash \lambda y. M\{N/x\}:B\lolli C}
 \end{prooftree}}
 \]
where the double inference line represents several applications of the $X$ rule. Since $\lambda y. M\{N/x\} = (\lambda y. M)\{N/x\}$, this proves the statement for the $\lolli$I case.
 The proofs for the other applicable rules are similar.
\end{proof}

\propsubjectreduction*

\begin{proof}
The property can be proved by induction on the derivation of $M\twoheadrightarrow \sum_{i=1}^n \alpha_i \cdot M_i^{\sigma_i}$. We examine each derivation rule.
We detail the proof for the case of the linear abstraction rule,
$\begin{prooftree}
\infer0{(\lambda x. M)V\twoheadrightarrow M\{V/x\}}
\end{prooftree}$. Suppose that we have the following derivation:
 \[
 \scalebox{0.9}{
 \begin{prooftree}
  \hypo{\begin{array}{c}\pi \\ \vdots \\ \Lambda,x:A\vdash M:B\end{array}}
  \infer1[$\lolli$I]{\Lambda\vdash \lambda x. M : A\lolli B}
  \infer[double]1[$X$]{\Delta\vdash \lambda x. M : A\lolli B}
  \hypo{\pi'}
  \infer[no rule]1{\raisebox{1mm}{\vdots}}
  \infer[no rule]1{\Delta'\vdash V:A}
  \infer2[$\lolli$E]{\Delta,\Delta'\vdash (\lambda x. M)V :B}
  \infer[double]1[$X$]{\Gamma\vdash (\lambda x. M)V :B}
 \end{prooftree}}
\]
where the typing contexts $\Lambda$ and $\Gamma$ are respectively permutations of $\Delta$ and of $\Delta,\Delta'$.
Then by Lemma~\ref{lem:substitution}, $\Lambda,\Delta'\vdash M\{V/x\}:B$ is a valid typing judgment, and we have:
\[
 \begin{prooftree}
  \hypo{\Lambda,\Delta'\vdash M\{V/x\}:B}
  \infer[double]1[$X$]{\Delta,\Delta'\vdash M\{V/x\}:B}
  \infer[double]1[$X$]{\Gamma\vdash M\{V/x\}:B}
  \end{prooftree}
\]
which completes the proof for the abstraction case.

The pair decomposition case requires two applications of the substitution lemma;
the remaining rules of Figure~\ref{fig:rel} are straightforward to check.
\end{proof}

\lemprogress*
\begin{proof}
The first point can be shown by induction on the structure of $M$.

To prove the second point, suppose $\vec{M}\equiv \alpha\cdot P^{\sigma}+ \vec{N}$ is a well-formed closed configuration with $P\notin \mathscr{V}$ and $P^\sigma \notin \Supp(\vec{N})$. Since $P$ is closed, there exists a reduction $P\twoheadrightarrow \vec{P}$. If $\sigma \sqcup \Mem(\vec{P})\ ||\ \Mem(\vec{N})$, then we have a reduction $\vec{M}\twoheadrightarrow \alpha\cdot \vec{P}^{\{\sigma\}}+\vec{N}$.
Otherwise, the reduction $P\twoheadrightarrow \vec{P}$ was necessarily obtained by applying rule $(m_k)$ for $k\in\{0,1\}$, followed by some congruence rules. By instead applying rule $(m_{1-k})$, we obtain a reduction $P\twoheadrightarrow \vec{P}$ with $\sigma \sqcup \Mem(\vec{P})\ ||\ \Mem(\vec{N})$. Therefore we have a reduction $\vec{M}\twoheadrightarrow \alpha\cdot \vec{P}^{\{\sigma\}}+\vec{N}$, which concludes the proof.
\end{proof}

 \subsection{Uniqueness of the Normal Form}

The goal of this appendix is to prove Proposition~\ref{prop:confluence}. The proof applies to both the operational semantics defined in Section~\ref{section:operational} and its extension defined in Section~\ref{section:extension}. Fix a configuration $\vec{M}$ and a target final configuration for which we know that there exists a reduction $\vec{M}\twoheadrightarrow^*\vec{V}$. We will show that the following simple algorithm  constructs a specific reduction $\vec{M}\twoheadrightarrow^*\vec{V}$: we begin with the initial configuration $\vec{M}$ in its canonical form, choose a summand that is not a value, and reduce it consistently with the target memory function of $\vec{V}$. In this case, there is exactly one possible reduction. We repeat the process until only values remain, in which case we have obtained the desired final configuration $\vec{V}$ (up to congruence). We will show that the algorithm produces a reduction with a bounded number of steps, and that every possible reduction $\vec{M}\twoheadrightarrow^* \vec{V}$ corresponds to an execution of this algorithm.

\begin{lemma}
\label{lem:unique_transition}
Let $\sigma,\nu:\device\to\{0,1\}_\bot$ be a pair of memory functions satisfying $\sigma\preceq \nu$. For all well-typed term $M\in\mathscr{E}$, there exists at most one transition $M\twoheadrightarrow \vec{M}$ such that $\Mem(\vec{M}^{\{\sigma\}})\preceq \nu$.
\end{lemma}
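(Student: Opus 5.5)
We argue by structural induction on the derivation of a transition $M\twoheadrightarrow\vec{M}$, equivalently by induction on the structure of $M$. We show that the shape of $M$, together with the constraint $\Mem(\vec{M}^{\{\sigma\}})\preceq\nu$, determines which rule of Figure~\ref{fig:rel} applies and what its result is. Here $M$ is a single execution term, i.e.\ an initial configuration $M^{[\bot]}$, so rule $(sup)$ is never the last rule used. The only rules with premises are the congruence rules for contexts, and every other rule is an axiom. The first step is a decomposition claim: $M$ can be written as $E[R]$ in at most one way, where $E$ is an evaluation context and $R$ is a redex (an instance of the left-hand side of an axiom).

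The decomposition claim follows from the standard call-by-value analysis of the context grammar. Consider the applicative cases. A context $ME$ applies when the argument is not a value. A context $EV$ applies when the argument is a value $V$ and the function part is not a value. The $\beta$-rule $(\lambda x.M)V$ and the rule $U\ket{k}$ apply exactly when both parts are values. These three cases are disjoint because values do not reduce; a short lemma shows that no value is of the form $E[R]$, proved by inspection of the value grammar. Pairs, $\mathtt{let}$, sequencing, $\meas$ and $\qcase$ are handled the same way: either the scrutinised subterm is a value of the right shape and a head axiom fires, or it is not a value and we recurse into it. By the induction hypothesis the reduction of the subterm is unique under the constraint. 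The context rules then lift it, and since $E[\cdot]$ acts on configurations without changing memories, the constraint on $\vec{M}$ transfers to the subterm's configuration.

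For a redex, we check that the axioms are deterministic except in one case. The $\beta$-rule, pair destruction, $\unit;N$, $(un)$ and the $\qcase\ \ket{k}$ rule each have a unique result, up to $\alpha$-renaming of the fresh variable $t$, which is identified by $\alpha$-equivalence. The exception is $\meas\ d\triangleright\ket{k'}\{\dots\}$, where both $(m_0)$ and $(m_1)$ apply, yielding $\delta_{k,k'}\cdot M_k^{[d\mapsto k]}$. This is the main obstacle, and the hypothesis on memories removes it. The resulting memory is $\Mem(\vec{M}^{\{\sigma\}})=\sigma\sqcup[d\mapsto k]$, and this must lie below $\nu$, which forces $k=\nu(d)$. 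If $\nu(d)=\bot$, neither rule satisfies the constraint. If $\nu(d)=k$, only $(m_k)$ does. Either way there is at most one admissible transition.

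Two details need care. First, a zero-amplitude summand still carries its memory, because $0\cdot\vec{M}\not\equiv\vec{0}$, so $\Mem$ still records $[d\mapsto k]$ and the argument above is not affected. Second, "one transition" means uniqueness of the resulting configuration up to $\equiv$ and $\alpha$-equivalence. With these two points the induction closes. Well-typedness of $M$ is used only to guarantee that the redex shapes match their typing, for instance that the argument of a $\meas$ reduces to a qubit value $\ket{k'}$; the progress argument of Lemma~\ref{lem:progress} already gives this.
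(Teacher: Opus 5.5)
Your proposal is correct and follows the same route as the paper, whose proof is simply ``by induction on the structure of $M$''. Your unique-decomposition bookkeeping, and your observation that the constraint $\sigma\sqcup[d\mapsto k]\preceq\nu$ forces $k=\nu(d)$ (including for the zero-amplitude outcome, since $0\cdot\vec{M}\not\equiv\vec{0}$), are exactly the content the paper leaves implicit; two small imprecisions are harmless: $(sup)$ can in fact be applied to $1\cdot M^{[\bot]}+\vec{0}$ but only reproduces the same result, and well-typedness is not needed for the ``at most one'' direction, since a stuck term just has zero transitions.
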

\begin{proof}
By induction on the structure of $M$.
\end{proof}

Fix $\vec{V}\in\vec{\mathscr{V}}$ and let $\nu=\Mem(\vec{V})$.
We inductively define the length function $L_\nu$ on i) configurations $\vec{M}$ such that $\vec{M}\twoheadrightarrow^* \vec{V}$, and ii) pairs $(M,\sigma)$ such that $M^\sigma$ is in the support of such an $\vec{M}$. Notice that we necessarily have $\Mem(\vec{M})\preceq \nu$ and $\sigma\preceq \nu$.
 \begin{align*}
 \begin{split}
  L_\nu(\vec{M}) & := \sum_{i=1}^n L_\nu(M_i,\sigma_i) \quad\ \ \text{where }\sum_{i=1}^n\alpha_i\cdot M_i^{T_i}\text{ is the canonical form of }\vec{M}
  \\
  L_\nu(M,\sigma) &:= \left\{
  \begin{array}{ll}
1+L_\nu(\vec{M})& \ \ \,\text{if }\vec{M} \text{ is the unique output as defined in Lemma~\ref{lem:unique_transition},}
\\
0 &\ \ \,\text{if $M\in\mathscr{V}$.}
\end{array}
\right.
\end{split}
 \end{align*}

 It is not immediately clear that $L_\nu$ is well defined. In particular, the unique output defined by Lemma~\ref{lem:unique_transition} does not always exist. For this reason, it is necessary to restrict the domain of $L_\nu$ to configurations that reduce to $\vec{V}$.
\begin{lemma}
 Let $\vec{V}$ be a final configuration and $\nu= \Mem(\vec{V})$. Then $L_\nu$ is well defined:
 \begin{itemize}
  \item on all configuration $\vec{M}$ such that $\vec{M}\twoheadrightarrow^* \vec{V}$;
  \item on all pair $(M,\sigma)$ such that $M^\sigma$ is in the support of such an $\vec{M}$.
 \end{itemize}
\end{lemma}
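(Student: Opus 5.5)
The definition of $L_\nu$ is a mutual recursion, so the plan is to prove well-definedness by well-founded induction on the length of a reduction to $\vec{V}$. For a configuration $\vec{M}$ with $\vec{M}\twoheadrightarrow^*\vec{V}$, let $\ell(\vec{M})$ be the minimal length of such a reduction. I will prove the two items simultaneously by induction on $\ell$, strengthening the first item to: every summand $M_i^{\sigma_i}$ of the canonical form of $\vec{M}$ has $L_\nu(M_i,\sigma_i)$ defined. Along the way I will also need two facts: $\sigma_i\preceq\nu$ and $\Mem(\vec{M})\preceq\nu$. Both follow because the rules never erase memory: $\vec{M}^{\{\sigma\}}$ only enlarges memories, and the congruence $\equiv$ omits $0\cdot\vec{M}=\vec{0}$, so memory functions are monotone along $\twoheadrightarrow$.

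Base case, $\ell=0$. Here $\vec{M}\equiv\vec{V}$, so every summand is a value and $L_\nu(M_i,\sigma_i)=0$.

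Inductive step. Take a summand $M^\sigma$ of $\vec{M}$ with $M\notin\mathscr{V}$. Since $\vec{M}$ eventually reaches a final configuration, the summand $M^\sigma$ must be reduced at some step of the reduction $\vec{M}\twoheadrightarrow^*\vec{V}$. Before that step, the other (sup) steps act only on disjoint summands, so $M^\sigma$ survives unchanged; this uses the hypothesis $M^\sigma\notin\Supp(\vec{N})$ in rule (sup). At that step the rule fires some transition $M\twoheadrightarrow\vec{M}_0$ with $\Mem(\vec{M}_0^{\{\sigma\}})\preceq\nu$, by monotonicity of memory. By Lemma~\ref{lem:unique_transition} this $\vec{M}_0$ is the unique output, so the first clause of $L_\nu(M,\sigma)$ applies unambiguously. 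It then remains to show that $\vec{M}_0^{\{\sigma\}}$ itself reduces to $\vec{V}$ (or to a subconfiguration of it) with strictly smaller $\ell$, and then apply the induction hypothesis.

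I expect this last step to be the main obstacle. Two problems arise:
\begin{itemize}
\item Reductions on different summands are interleaved, and consistency conditions link them.
\item Canonical forms can merge summands, since equal $M^\sigma$ add their coefficients.
\end{itemize}
I would handle both with a permutation lemma. Steps acting on distinct summands commute, because the side condition $\sigma\sqcup\Mem(\vec{M})\,||\,\Mem(\vec{N})$ holds automatically whenever all memories lie below the single bound $\nu$. Using this, I reorder the reduction so that the step on $M^\sigma$ comes first, giving $\alpha\cdot\vec{M}_0^{\{\sigma\}}+\vec{N}\twoheadrightarrow^*\vec{V}$ in one fewer step. I then project this reduction onto the summands originating from $\vec{M}_0^{\{\sigma\}}$, which should yield a reduction of $\vec{M}_0^{\{\sigma\}}$ to a final configuration whose memory is below $\nu$.

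For the induction to go through, the statement should be generalised slightly. It should not be about reaching $\vec{V}$ specifically, but about reaching some final configuration with memory at most $\nu$. The definition of $L_\nu$ only uses $\nu$, so this generalisation costs nothing. Merging of equal summands is harmless, because $L_\nu(M,\sigma)$ depends only on the pair $(M,\sigma)$ and not on the coefficient. Finally, the sum defining $L_\nu(\vec{M})$ is finite and independent of the representative, since the canonical form is unique up to reordering.
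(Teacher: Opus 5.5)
\textbf{The reordering step fails.} Because canonical forms merge equal summands, moving the step on $M^\sigma$ to the front does not just permute the same steps, and it need not shorten the rest of the reduction. Take $\vec{M}=(\lambda x.x)(U\ket{0})+(\lambda x.x)\ket{0}$, with all memories $[\bot]$. A minimal reduction first reduces the left summand to $u_{00}\cdot(\lambda x.x)\ket{0}+u_{10}\cdot(\lambda x.x)\ket{1}$. Its first summand merges with the right summand, and two more steps finish, so $\ell(\vec{M})=3$. If you reduce $M^\sigma=(\lambda x.x)\ket{0}$ first, you reach $\ket{0}+(\lambda x.x)(U\ket{0})$, which still needs $3$ steps.

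So the claim ``$\alpha\cdot\vec{M}_0^{\{\sigma\}}+\vec{N}\twoheadrightarrow^*\vec{V}$ in one fewer step'' is false. The length bound you want for the projected reduction then has no justification. Moreover, ``the summands originating from $\vec{M}_0^{\{\sigma\}}$'' is not well defined once they merge with summands of $\vec{N}$. The projection lemma would need its own careful statement, tracking supports rather than coefficients. As it stands, the central step of your induction is missing.

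\textbf{The missing idea:} you never need to find the step at which a given summand is reduced. Induct on the length $n$ of a reduction $\vec{M}\twoheadrightarrow^n\vec{V}$, with hypothesis ``$L_\nu$ is defined on every pair in the support of a configuration reaching $\vec{V}$ in $n$ steps''. Peel off the \emph{first} step $\vec{M}\twoheadrightarrow\vec{M'}$, which reduces one summand $M_{i_0}^{\sigma_{i_0}}$ of the canonical form via $M_{i_0}\twoheadrightarrow\sum_j\beta_j\cdot N_j^{\mu_j}$.
\begin{itemize}
\item Every other summand $M_i^{\sigma_i}$ is untouched by (sup). It stays in $\Supp(\vec{M'})$ even if it merges, since the congruence has no rule $0\cdot\vec{M}\equiv\vec{0}$.
\item Every output $N_j^{\sigma_{i_0}\sqcup\mu_j}$ also lies in $\Supp(\vec{M'})$.
\end{itemize}
So the induction hypothesis covers all of these pairs. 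Since $\Mem(\vec{M'})\preceq\nu$, the transition of $M_{i_0}$ is the unique one of Lemma~\ref{lem:unique_transition}. Hence $L_\nu(M_{i_0},\sigma_{i_0})=1+\sum_j L_\nu(N_j,\sigma_{i_0}\sqcup\mu_j)$ is defined.

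This is the paper's argument. It needs no commutation, no projection, and no generalisation beyond $\vec{V}$. Your route could be repaired by projecting the original tail after the step on $M^\sigma$ directly, without reordering. That bounds the projected length by $\ell(\vec{M})-1$, but it still requires a projection lemma that the first-step argument avoids.
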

\begin{proof}
We show by induction on $n\in\mathbb{N}$ that if there exists a reduction $\vec{M}\twoheadrightarrow^n\vec{V}$ in $n$ steps, then $L_\nu(\vec{M})$ is well defined.
This is sufficient to prove both points of the lemma. Indeed, if we write the canonical form of $\vec{M}$ as $\sum_{i=1}^n \alpha_i \cdot M_i^{\sigma_i}$, $L_{\nu}(\vec{M})$ being well defined implies that all the $L_{\nu}(M_i,\sigma_i)$ are also well defined.

For $n=0$ the result is clear, using $L_\nu(\vec{V})=0$. Suppose the result holds for a fixed $n\in\mathbb{N}$ and consider a reduction $\vec{M}\twoheadrightarrow \vec{M}\twoheadrightarrow^n \vec{V}$.
We write the canonical form of $\vec{M}$ as $\sum_{i=1}^n\alpha_i \cdot M_i^{\sigma_i}$, and assume that the reduction $\vec{M}\twoheadrightarrow \vec{M}$ occurs by reducing $M_{i_0}\twoheadrightarrow \sum_{j=1}^m\beta_j\cdot N_j^{\mu_j}$ ($1\leq i_0\leq n$).
By induction hypothesis, $L_\nu(\vec{M})$ is well-defined.
Then necessarily,
\begin{itemize}
 \item for all $i \in \llbracket 1,n \rrbracket \setminus \{i_0\}$, $L_\nu(M_i,\sigma_i)$ is well defined;
 \item for all $j\in \llbracket 1,m \rrbracket$, $L_\nu(N_j,\sigma_{i_0}\sqcup \mu_j)$ is well defined.
\end{itemize}
Moreover, $M_{i_0}\twoheadrightarrow \sum_{j\in J}\beta_j\cdot N_j^{\mu_j}$ is necessarily the unique reduction given by Lemma~\ref{lem:unique_transition}. Consequently, $L_\nu(M_{i_0},\sigma_{i_0})$, and therefore $L_\nu(\vec{M})$, are well defined.
\end{proof}

\begin{lemma}
 \label{lem:decrease}
 If we have the following reductions:
 \[
 \scalebox{0.9}{
  \begin{tikzcd}[ampersand replacement=\&]
   \&\vec{M}
   \ar[dl,twoheadrightarrow]
   \ar[dr,twoheadrightarrow,"*"]
   \&
   \\
   \vec{M'}
   \ar[rr,twoheadrightarrow,"*"]
   \&\&\vec{V}
\end{tikzcd}}
 \]
then $L_\nu(M')\leq L_\nu(M)-1$, where $\nu:=\Mem(\vec{V})$.
\end{lemma}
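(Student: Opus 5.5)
The idea is to compare the canonical forms of $\vec{M}$ and $\vec{M'}$ summand by summand, using the definition of $L_\nu$ and the uniqueness of admissible transitions given by Lemma~\ref{lem:unique_transition}. Write the canonical form of $\vec{M}$ as $\sum_{i=1}^n \alpha_i\cdot M_i^{\sigma_i}$. By rule $(sup)$, the step $\vec{M}\twoheadrightarrow\vec{M'}$ reduces a single summand $M_{i_0}^{\sigma_{i_0}}$ via some transition $M_{i_0}\twoheadrightarrow \vec{N}=\sum_{j}\beta_j\cdot N_j^{\mu_j}$. We then have $\vec{M'}\equiv \sum_{i\neq i_0}\alpha_i\cdot M_i^{\sigma_i}+\alpha_{i_0}\cdot\vec{N}^{\{\sigma_{i_0}\}}$. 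Since $\vec{M'}\twoheadrightarrow^*\vec{V}$, memories only grow along reductions, so $\Mem(\vec{M'})\preceq\nu$. In particular $\Mem(\vec{N}^{\{\sigma_{i_0}\}})\preceq\nu$ and $\sigma_{i_0}\preceq\nu$. Lemma~\ref{lem:unique_transition} then says that this transition is \emph{the} unique output used in the definition of $L_\nu(M_{i_0},\sigma_{i_0})$. Hence
\[
L_\nu(M_{i_0},\sigma_{i_0}) = 1 + L_\nu\bigl(\vec{N}^{\{\sigma_{i_0}\}}\bigr) = 1+\sum_j L_\nu(N_j,\sigma_{i_0}\sqcup\mu_j),
\]
where the sum ranges over the canonical form of $\vec{N}^{\{\sigma_{i_0}\}}$. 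By the preceding well-definedness lemma, all these quantities exist, because both $\vec{M}$ and $\vec{M'}$ reduce to $\vec{V}$.

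The main obstacle is that the canonical form of $\vec{M'}$ is not simply the concatenation of the untouched summands with those of $\vec{N}^{\{\sigma_{i_0}\}}$. Some $N_j^{\sigma_{i_0}\sqcup\mu_j}$ may coincide with an existing $M_i^{\sigma_i}$ for $i\neq i_0$, or two of the $N_j$'s may coincide, and then they merge into one summand with coefficient the sum (possibly $0$, which is still kept, since $0\cdot\vec{M}\not\equiv\vec{0}$). So $L_\nu(\vec{M'})$ sums $L_\nu$ over the \emph{set} of distinct pairs in the union, whereas the naive count sums over the multiset. Merging only removes duplicate nonnegative terms, so
\[
L_\nu(\vec{M'}) \le \sum_{i\neq i_0} L_\nu(M_i,\sigma_i) + \sum_j L_\nu(N_j,\sigma_{i_0}\sqcup\mu_j) = L_\nu(\vec{M}) - 1.
\]
This works because $L_\nu(\cdot,\cdot)$ is a function of the pair (term, memory) alone, independent of its coefficient and of the ambient configuration, and because it takes values in $\mathbb{N}$.

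One point needs care: $L_\nu(M_i,\sigma_i)$ for an untouched summand must be the same number whether computed inside $\vec{M}$ or inside $\vec{M'}$. This is immediate from the definition, which depends only on $(M_i,\sigma_i)$ and $\nu$, once well-definedness holds on both sides. I would also check that $\vec{M}^{\{\sigma\}}$ commutes with canonical forms up to the merging discussed above; merging inside $\vec{N}$ itself is handled by the same inequality.
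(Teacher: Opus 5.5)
Your proposal is correct and follows essentially the same route as the paper: take the canonical form of $\vec{M}$, note that the step reduces a single summand $M_{i_0}^{\sigma_{i_0}}$, and get an inequality rather than an equality because summands can merge in the canonical form of $\vec{M'}$. You also make explicit two points the paper's one-line argument leaves implicit, and which that argument relies on. First, memory monotonicity together with Lemma~\ref{lem:unique_transition} forces the chosen transition to be the one used in the definition of $L_\nu(M_{i_0},\sigma_{i_0})$. Second, $L_\nu$ of a pair does not depend on the ambient configuration.
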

\begin{proof}
 Let $\sum_{i=1}^n\alpha_i \cdot M_i^{\sigma_i}$ be the canonical form of $\vec{M}$. The reduction $\vec{M}\twoheadrightarrow\vec{M'}$ is obtained by reducing one of the summands of the canonical form, say $M_{i_0}\twoheadrightarrow \sum_{j=1}^n\beta_j\cdot N_j^{\mu_j}$. Then we have $\vec{M}\twoheadrightarrow \alpha_{i_0}\cdot \sum_{j}\beta_j\cdot N_j^{\sigma_{i_0}\sqcup \mu_j}+ \sum_{i\neq i_0}\alpha_i\cdot M_i^{\sigma_i} \equiv \vec{M'}$. Therefore:
\[
 L_\nu(\vec{M'})\leq L_\nu(\vec{M})-1
 \]
 (This is an inequality because some of the $N_j^{\sigma_{i_0}\sqcup \mu_j}$ may be equal to some of the remaining $M_i^{\sigma_i}$.)
\end{proof}

\begin{lemma}
\label{lem:bounded_length}
 Let $\vec{M}\in\vec{\mathscr{E}}$ and $\vec{V}\in\vec{\mathscr{V}}$ such that $\vec{M}\twoheadrightarrow^*\vec{V}$. Every reduction $\vec{M}\twoheadrightarrow^* \vec{V}$ has at most $L_\nu(\vec{M})$ steps, where $\nu:=\Mem(\vec{V})$.
\end{lemma}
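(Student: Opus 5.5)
The argument is an induction on the number of steps $k$ of a given reduction $\vec{M}\twoheadrightarrow^k \vec{V}$, and it rests entirely on Lemma~\ref{lem:decrease}. We set $\nu:=\Mem(\vec{V})$ and show that whenever $\vec{M}\twoheadrightarrow^k\vec{V}$ we have $k\leq L_\nu(\vec{M})$. By the preceding well-definedness lemma, $L_\nu$ is defined on every configuration along such a reduction, since each intermediate configuration itself reduces to $\vec{V}$.

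For the base case $k=0$, the bound $0\leq L_\nu(\vec{M})$ holds trivially, because $L_\nu$ takes values in $\mathbb{N}$. In fact $\vec{M}\equiv\vec{V}$ here, so $L_\nu(\vec{M})=0$ holds with equality.

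For the inductive step, we write the reduction as $\vec{M}\twoheadrightarrow\vec{M'}\twoheadrightarrow^{k-1}\vec{V}$. The induction hypothesis, applied to $\vec{M'}$ (which reduces to $\vec{V}$, so $L_\nu(\vec{M'})$ is defined), gives $k-1\leq L_\nu(\vec{M'})$. Lemma~\ref{lem:decrease} applies to the triangle formed by $\vec{M}\twoheadrightarrow\vec{M'}$, $\vec{M'}\twoheadrightarrow^*\vec{V}$ and $\vec{M}\twoheadrightarrow^*\vec{V}$, and yields $L_\nu(\vec{M'})\leq L_\nu(\vec{M})-1$. Combining the two inequalities gives $k\leq L_\nu(\vec{M})$.

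The only delicate point is that the $\nu$ used for $\vec{M'}$ is the same $\nu=\Mem(\vec{V})$ as for $\vec{M}$. This holds because the target $\vec{V}$ is fixed throughout the induction.

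Lemma~\ref{lem:decrease} also tacitly relies on the step $\vec{M}\twoheadrightarrow\vec{M'}$ being the unique transition of Lemma~\ref{lem:unique_transition} for the reduced summand. That uniqueness follows from memories growing monotonically along reductions: $\Mem(\vec{M'})\preceq\nu$, so the step is consistent with $\nu$. If one wants to be careful, this is the step to double-check.
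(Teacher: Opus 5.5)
Your proof is correct and is essentially the paper's argument: both just iterate Lemma~\ref{lem:decrease} along the reduction. The only cosmetic difference is that you induct on the length $k$ of the reduction, whereas the paper does strong induction on the value $L_\nu(\vec{M})$. Your closing concern about Lemma~\ref{lem:unique_transition} is already handled inside Lemma~\ref{lem:decrease} and the well-definedness lemma for $L_\nu$, so you can use Lemma~\ref{lem:decrease} as a black box.
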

\begin{proof}
Fix $\vec{V}\in\mathscr{\vec{V}}$. We prove by induction on $n\in\mathbb{N}$ that for all $\vec{M}\in\vec{\mathscr{E}}$ such that $\vec{M}\twoheadrightarrow^*\vec{V}$ and $L_\nu(\vec{M})=n$, every reduction $\vec{M}\twoheadrightarrow^* \vec{V}$ has length at most $n$. If $L_\nu(\vec{M})=0$, we must have $\vec{M}\equiv\vec{V}$ and the result is clear. Fix $n\geq 0$ and suppose the result holds for all $0\leq k \leq n$. Let $\vec{M}$ be a configuration such that $\vec{M}\twoheadrightarrow^*\vec{V}$ and $L_\nu(\vec{M})=n+1$.
We write a reduction $\vec{M}\twoheadrightarrow^* \vec{V}$ as $\vec{M}\twoheadrightarrow\vec{M'}\twoheadrightarrow^*\vec{V}$, for $\vec{M'}\in\vec{\mathscr{E}}$.
By Lemma~\ref{lem:decrease}, $L_\nu(\vec{M'})\leq L_\nu(\vec{M})-1\leq n$.
By induction hypothesis, every reduction $\vec{M'}\twoheadrightarrow^* \vec{V}$ has length at most $n$, and therefore every reduction $\vec{M}\twoheadrightarrow^*\vec{V}$ whose first step is $\vec{M}\twoheadrightarrow \vec{M'}$ has length at most $n+1$. Since this holds for any $\vec{M'}$,
every reduction $\vec{M}\twoheadrightarrow^*\vec{V}$ has length at most $n+1$.
\end{proof}

We are now able to prove Proposition~\ref{prop:confluence}:

\propconfluence*

\begin{proof}
Fix $\vec{V}\in\vec{\mathscr{V}}$ and let $\nu:=\Mem(\vec{V})$.
We prove by induction on $n\in\mathbb{N}$ that for all $\vec{M},\vec{N}$ such that $\vec{M}\twoheadrightarrow^* \vec{V}$, $L_\nu(\vec{M})=n$, $\vec{M}\twoheadrightarrow^* \vec{N}$ and $\Mem(\vec{N})\ ||\ \Mem(\vec{V})$, we have $\vec{N}\twoheadrightarrow^*\vec{V}$. For $n=0$, we have $\vec{M}\equiv\vec{V}$ and the result is clear.

Let $n\geq 0$ and suppose the result holds for all $0\leq k\leq n$.
To show the result for $n+1$, we fix a configuration $\vec{M}$ such that $\vec{M}\twoheadrightarrow^* \vec{V}$ and $L_\nu(\vec{M})=n+1$, and prove by induction on $m\in\mathbb{N}$ that for all $\vec{N}$ such that $\vec{M}\twoheadrightarrow^m\vec{N}$ and $\Mem(\vec{N})\ ||\ \Mem(\vec{V})$, we have $\vec{N}\twoheadrightarrow^*\vec{V}$ and $L_\nu(\vec{N})\leq n+1$. If $m=0$ the result is clear. Now suppose the result holds for a fixed $m$ and consider $\vec{M}\twoheadrightarrow^{m+1} \vec{N}$ with $\Mem(\vec{N})\ ||\ \Mem(\vec{V})$.
We write the reduction as $\vec{M}\twoheadrightarrow^m \vec{P}\twoheadrightarrow \vec{N}$.
Since $\Mem(\vec{N})\ ||\ \Mem(\vec{V})$ and $\Mem(\vec{P})\preceq \Mem(\vec{N})$, we necessarily have $\Mem(\vec{P})\ ||\ \Mem(\vec{V})$.
By induction hypothesis (inner induction), there exists a reduction $\vec{P}\twoheadrightarrow^*\vec{V}$ and $L_\nu(\vec{P})\leq n+1$. If $L_\nu(\vec{P})< n+1$, we obtain the result by induction hypothesis (outer induction).

Otherwise, we write the reduction as $\vec{P}\twoheadrightarrow\vec{Q}\twoheadrightarrow^*\vec{V}$.
We write the canonical form of $\vec{P}$ as $\sum_{i=1}^n\alpha_i\cdot P_i^{\sigma_i}$. Suppose that the reduction $\vec{P}\twoheadrightarrow \vec{N}$ occurs by reducing $P_{i_0}\twoheadrightarrow \sum_{j\in J}\beta_j \cdot N_j^{\mu_j}$ and the reduction $\vec{P}\twoheadrightarrow \vec{Q}$ occurs by reducing $P_{i_1}\twoheadrightarrow \sum_{k\in K}\gamma_k \cdot Q_k^{\theta_k}$ ($i_0,i_1\in \llbracket 1,n \rrbracket$). If $i_0=i_1$ then $\vec{N}=\vec{Q}$ and the result holds.
Otherwise, we make the following observation. One of the $N_j^{\sigma_{i_0}\sqcup \mu_j}$ may be equal to $P_{i_1}^{\sigma_{i_1}}$, and similarly one of the $Q_k^{\sigma_{i_1}\sqcup \theta_k}$ may be equal to $P_{i_0}^{\sigma_{i_0}}$; but these two events cannot simultaneously be true. Indeed, if both are true, then $\sigma_{i_0}=\sigma_{i_1}$ and $\mu_j=\theta_k=[\bot]$, and there are no other possible reductions from $P_{i_0}$ and $P_{i_1}$. Therefore, no sequence of reductions from $\vec{P}$ would ever reach a final configuration.
Then by reducing all occurrences of $P_{i_0}$ and $P_{i_1}$, we obtain three possible cases, represented diagrammatically as follows:
 \[
 \scalebox{0.9}{
 \begin{tikzcd}[ampersand replacement=\&]
  \vec{P}
  \ar[r,twoheadrightarrow]
  \ar[d,twoheadrightarrow]
  \&
  \vec{Q}
  \ar[r,twoheadrightarrow,"*"]
  \ar[d,twoheadrightarrow]
  \&\vec{V}
  \\
  \vec{N}
  \ar[r,twoheadrightarrow]
  \&\vec{Q'}
  \ar[ru,twoheadrightarrow,dashed, "*"]\&
 \end{tikzcd}}
\qquad
\scalebox{0.9}{
  \begin{tikzcd}[ampersand replacement=\&]
   \vec{P}
   \ar[dd,twoheadrightarrow]
   \ar[r,twoheadrightarrow]
   \&\vec{Q}
   \ar[r,twoheadrightarrow,"*"]
   \ar[d,twoheadrightarrow]
   \&\vec{V}
   \\
   \& \vec{Q'}
   \ar[d,twoheadrightarrow]
   \&
   \\
   \vec{N}
   \ar[r,twoheadrightarrow]\& \vec{Q''}
   \ar[ruu,twoheadrightarrow,"*",dashed]
   \&
  \end{tikzcd}}
  \qquad
  \scalebox{0.9}{
  \begin{tikzcd}[ampersand replacement=\&]
   \vec{P}
   \ar[d,twoheadrightarrow]
   \ar[rr,twoheadrightarrow]
   \&\&\vec{Q}
   \ar[r,twoheadrightarrow,"*"]
   \ar[d,twoheadrightarrow]
   \&\vec{V}
   \\
   \vec{N}
   \ar[r,twoheadrightarrow]
   \& \vec{Q'} \ar[r,twoheadrightarrow]
   \&\vec{Q''}\ar[ru,twoheadrightarrow, dashed, "*"]
   \&
  \end{tikzcd}}
 \]
 where $\vec{Q'}$ and $\vec{Q''}$ can be chosen such that $\Mem(\vec{Q'})\ ||\ \Mem(\vec{V})$ and $\Mem(\vec{Q''})\ ||\ \Mem(\vec{V})$. By Lemma~\ref{lem:decrease}, $L_{\nu}(\vec{Q})\leq L_{\nu}(\vec{P})-1=n$.
 Then, the induction hypothesis (outer induction) guarantees the existence of the dashed arrows. Lemma~\ref{lem:decrease} also ensures that $L_\nu(\vec{N})\leq L_{\nu}(\vec{P})-1 = n$, which concludes the proof.
\end{proof}

We have the following corollary:
\begin{corollary}
 \label{cor:termination}
Suppose that
\begin{align*}
 \vec{M}&\twoheadrightarrow^* \vec{V} &\text{where }\vec{V}\text{ is a final configuration}
 \\
 \vec{M}&\twoheadrightarrow \vec{M}_1 \twoheadrightarrow \vec{M}_2 \cdots &\text{is an infinite sequence of reductions}
\end{align*}
Then there exists $n\geq 1$ such that $\Mem(\vec{V})$ and $\Mem(\vec{M_n})$ are not consistent.
\end{corollary}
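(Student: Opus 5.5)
The plan is to argue by contradiction, combining Proposition~\ref{prop:confluence} with the length bound of Lemma~\ref{lem:bounded_length}. Let $\nu:=\Mem(\vec{V})$ and suppose that $\Mem(\vec{M_n})\ ||\ \nu$ holds for every $n\geq 1$. The goal is to show that the infinite sequence would then contain a finite prefix longer than any admissible reduction to $\vec{V}$.

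\textbf{Step 1.} For every $n$, the finite prefix gives $\vec{M}\twoheadrightarrow^n \vec{M_n}$. Together with $\vec{M}\twoheadrightarrow^*\vec{V}$ and the consistency assumption, Proposition~\ref{prop:confluence} yields $\vec{M_n}\twoheadrightarrow^*\vec{V}$. Hence every $\vec{M_n}$ lies in the domain where $L_\nu$ is well defined.

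\textbf{Step 2.} For each $n$, apply Lemma~\ref{lem:decrease} to the step $\vec{M_n}\twoheadrightarrow\vec{M_{n+1}}$. This is legitimate because $\vec{M_{n+1}}\twoheadrightarrow^*\vec{V}$ by Step 1, so the hypotheses of the lemma are met with $\vec{M_n}$ in the role of $\vec{M}$. The lemma gives $L_\nu(\vec{M_{n+1}})\leq L_\nu(\vec{M_n})-1$, and likewise $L_\nu(\vec{M_1})\leq L_\nu(\vec{M})-1$.

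\textbf{Step 3.} Conclude. By induction, $L_\nu(\vec{M_n})\leq L_\nu(\vec{M})-n$, which is negative for $n>L_\nu(\vec{M})$. This contradicts $L_\nu$ taking values in $\mathbb{N}$. Alternatively, the concatenation $\vec{M}\twoheadrightarrow^n\vec{M_n}\twoheadrightarrow^*\vec{V}$ has length at least $n$, while Lemma~\ref{lem:bounded_length} bounds every reduction $\vec{M}\twoheadrightarrow^*\vec{V}$ by $L_\nu(\vec{M})$; taking $n>L_\nu(\vec{M})$ contradicts this. So some $n\geq 1$ has $\Mem(\vec{M_n})$ inconsistent with $\Mem(\vec{V})$.

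The main obstacle is checking that Proposition~\ref{prop:confluence} applies at every prefix. It requires consistency only at the endpoint $\vec{M_n}$, and memories grow monotonically along reductions, so consistency at $\vec{M_n}$ is exactly the assumption being negated. The rest is bookkeeping with the natural-number-valued length function.
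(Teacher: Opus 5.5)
Your proof is correct and follows the paper's argument. Assuming consistency for all $n$, you use Proposition~\ref{prop:confluence} to get $\vec{M_n}\twoheadrightarrow^*\vec{V}$ for every $n$, and then contradict Lemma~\ref{lem:bounded_length} via reductions $\vec{M}\twoheadrightarrow^*\vec{V}$ of arbitrary length. Your ``alternative'' in Step 3 is exactly the paper's proof, and your primary route of iterating Lemma~\ref{lem:decrease} just unfolds the proof of Lemma~\ref{lem:bounded_length}, so it is the same idea.
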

\begin{proof}
By contradiction, suppose for all $n\geq1$, $\Mem(\vec{M_n})\ ||\ \Mem(\vec{V})$. By Proposition~\ref{prop:confluence}, there exist reductions
 \[
 \scalebox{0.9}{
  \begin{tikzcd}[ampersand replacement=\&]
   \vec{M}
   \ar[d,twoheadrightarrow]
   \ar[rr,twoheadrightarrow,"*"]
   \&\& \vec{V} \\
   \vec{M_1}
   \ar[d,twoheadrightarrow]
   \ar[rru,twoheadrightarrow,"*"]\&\&
   \\
   \vec{M_2}
   \ar[d,twoheadrightarrow]
   \ar[rruu,twoheadrightarrow,"*"]\&\&
   \\
   \vdots
  \end{tikzcd}}
 \]
Therefore there exist reductions $\vec{M}\twoheadrightarrow^* \vec{V}$ of arbitrary length, which contradicts Lemma~\ref{lem:bounded_length}.
\end{proof}

In short, normalization is necessarily strong.
This means that whether a sequence of reductions from a configuration $\vec{M}$ terminates accurately represents whether the corresponding program execution terminates.

\section{Denotational Semantics: Full Definition and Proofs}
\label{app:denotational}

In this appendix, we give the proofs of results stated in Section~\ref{section:denotational} as well as the full definition of $\sem{\cdot}$.

\proppurewelldefined*
\begin{proof}
 One can show by induction on the derivation of the typing judgment $\Delta\vdash M:A$ that i) its interpretation $[\Delta\vdash M:A]$ is well defined, and ii) for all typing context $\Gamma$ obtained by permuting the variables of $\Delta$, the interpretation $[\Gamma\vdash M:A]$ is well defined and can be obtained by precomposing $[\Delta\vdash M:A]$ with the appropriate symmetry morphisms.
\end{proof}

The full definition of the denotational semantics $\sem{\cdot}$ is given in Figure~\ref{fig:denotational}.
The morphism $\mathfrak{M}(f,g)$ is defined as follows for all $f,g:\sem{\Delta'}\to \sem{A}$:
\[
 \mathfrak{M}(f,g) = \mathcal{P}_0\otimes f+ \mathcal{P}_1\otimes g : \sem{\q} \otimes \sem{\Delta'} \to \sem{A}
\]
where $\mathcal{P}_k =\bra{k}(\cdot)\ket{k}$ is the projection on the computational basis state $\ket{k}$ ($k\in\{0,1\}$).
The morphism $\mathfrak{q}(f,g):[\q]\otimes[\Delta']\to [A\lolli \q^n]$ is defined as follows for all $f,g:[\Delta']\to [A\lolli \q^n]$:
 \[
 \mathfrak{q}(f,g) := (\gamma_{[\q],[A]^*}\otimes I_{[\q^n]})(\ket{0}\bra{0}\otimes f + \ket{1}\bra{1}\otimes g).
\]

\begin{figure}[!ht]
\hrulefill
\begin{center}
\[
\scalebox{0.8}{
 \begin{prooftree}
  \hypo{\vphantom{k\in\{0,1\}}}
 \infer1{\sem{x:A \vdash x:A} \ =\ \sem{A}\xrightarrow{1_{\sem{A}}}\sem{A}}
\end{prooftree}}
 \qquad
 \scalebox{0.8}{
 \begin{prooftree}
  \hypo{\vphantom{k\in\{0,1\}}}
  \infer1{\sem{\vdash U:\q \lolli \q}\ =\  \mathbf{I}\xrightarrow{D(I\otimes U)\circ \eta'_{[q]}}\sem{\q\lolli \q}}
 \end{prooftree}}
 \qquad
 \scalebox{0.8}{
 \begin{prooftree}
  \hypo{k\in\{0,1\}}
  \infer1{\sem{\vdash \ket{k}:\q}\ = \ \mathbf{I} \xrightarrow{1\mapsto \ket{k}\bra{k}} \sem{\q}}
 \end{prooftree}}
\]
\vspace{0.5ex}
\[
\scalebox{0.8}{
 \begin{prooftree}
  \hypo{\sem{\Delta,x:A,y:B,\Delta'\vdash M:C} = \sem{\Delta}\otimes\sem{A}\otimes\sem{B}\otimes \sem{\Delta'} \xrightarrow{f} \sem{C}}
  \infer1{\sem{\Delta,y:B,x:A,\Delta'\vdash M:C} = \sem{\Delta}\otimes\sem{B}\otimes\sem{A}\otimes \sem{\Delta'} \xrightarrow{1_{\sem{\Delta}}\otimes \gamma_{\sem{B},\sem{A}}\otimes 1_{\sem{\Delta'}}}\sem{\Delta}\otimes\sem{A}\otimes\sem{B}\otimes \sem{\Delta'} \xrightarrow{f} \sem{C}}
 \end{prooftree}}
\]
 \vspace{0.5ex}
\[
\scalebox{0.8}{
\begin{prooftree}
 \hypo{\sem{\Delta,x:A\vdash M:B} \ = \ \sem{\Delta}\otimes\sem{A} \xrightarrow{f} \sem{B}}
 \infer1{\sem{\Delta\vdash \lambda x. M:A\lolli B} \ = \ \sem{\Delta} \xrightarrow{\phi(f)}\sem{A}\lolli\sem{B}}
 \end{prooftree}}
 \quad
 \scalebox{0.8}{
 \begin{prooftree}
  \hypo{\sem{\Delta\vdash M:A\lolli B} \ = \ \sem{\Delta}\xrightarrow{f} \sem{A\lolli B}
}
\hypo{\sem{\Delta'\vdash N:A} \ = \ \sem{\Delta'} \xrightarrow{g}\sem{A}}
  \infer2{\sem{\Delta,\Delta'\vdash M N:B} \ = \ \sem{\Delta}\otimes \sem{\Delta'} \xrightarrow{f\otimes g} \sem{A\lolli B}\otimes \sem{A} \xrightarrow{\phi^{-1}(1_{A\lolli B})} \sem{B}}
 \end{prooftree}}
 \]
 \vspace{0.5ex}
\[
\scalebox{0.8}{
 \begin{prooftree}
  \hypo{\sem{\Delta \vdash M:A} \ =\ \sem{\Delta}\xrightarrow{f}\sem{A}}
  \hypo{\sem{\Delta'\vdash N:B} \ =\ \sem{\Delta'} \xrightarrow{g}\sem{B}}
  \infer2{\sem{\Delta,\Delta' \vdash \pair{M , N}:A\otimes B}\ = \ \sem{\Delta}\otimes\sem{\Delta'} \xrightarrow{f\otimes g} \sem{A}\otimes\sem{B} }
 \end{prooftree}}
 \]
\vspace{0.5ex}
\[
\scalebox{0.8}{
  \begin{prooftree}
\hypo{\sem{\Delta \vdash M:A\otimes B}\ = \ \sem{\Delta} \xrightarrow{f}\sem{A\otimes B}
}
\hypo{\sem{\Delta',x:A,y:B\vdash N:C}\ = \ \sem{\Delta'}\otimes \sem{A}\otimes \sem{B} \xrightarrow{g}\sem{C}}
\infer2{\sem{\Delta,\Delta' \vdash \mathtt{let}\ \pair{x,y}=M\ \mathtt{in}\ N:C} \ = \ \sem{\Delta}\otimes\sem{\Delta'} \xrightarrow{f\otimes 1_{\sem{\Delta'}}} \sem{A\otimes B} \otimes\sem{\Delta'} \xrightarrow{\gamma_{\sem{A\otimes B},\sem{\Delta'}}} \sem{\Delta'}\otimes \sem{A\otimes B}\xrightarrow{g} \sem{C}}
\end{prooftree}}
\]
\vspace{0.5ex}
\[
\scalebox{0.8}{
 \begin{prooftree}
  \hypo{\vphantom{\xrightarrow{f}}}
  \infer1{\sem{\vdash \unit:\one} \ = \ \mathbf{I}\xrightarrow{1_{\mathbf{I}}} \mathbf{I}}
 \end{prooftree}}
 \qquad
 \scalebox{0.8}{
 \begin{prooftree}
  \hypo{\sem{\Delta\vdash M: \one}\ = \ \sem{\Delta} \xrightarrow{f} \mathbf{I}}
  \hypo{\sem{\Delta' \vdash N: A}\ = \ \sem{\Delta'}\xrightarrow{g}\sem{A}}
  \infer2{\sem{\Delta,\Delta'\vdash M;N:A}\ = \ \sem{\Delta}\otimes\sem{\Delta'} \xrightarrow{f\otimes g} \sem{A}}
 \end{prooftree}}
\]
\vspace{0.5ex}
\[
\scalebox{0.8}{
 \begin{prooftree}
 \hypo{\sem{\Delta \vdash P:\q}\ = \ \sem{\Delta} \xrightarrow{p} \sem{\q}}
  \hypo{\sem{\Delta' \vdash M:A}\ = \ \sem{\Delta'}\xrightarrow{f}\sem{A}}
  \hypo{\sem{\Delta' \vdash N: A} \ = \ \sem{\Delta'} \xrightarrow{g} \sem{A}}
  \infer3{\sem{\Delta,\Delta'\vdash \meas\ P\ \{0\rightarrow M\ |\ 1\rightarrow N\}: A}\ = \ \sem{\Delta}\otimes\sem{\Delta'}\xrightarrow{p\otimes 1_{\sem{\Delta'}}} \sem{\q}\otimes \sem{\Delta'} \xrightarrow{\mathfrak{M}(f,g)}\sem{A}}
 \end{prooftree}}
\]
\vspace{0.5ex}
\[
\scalebox{0.8}{
 \begin{prooftree}
  \hypo{\sem{\Delta \vdash P:\q} = \sem{\Delta} \xrightarrow{p}\sem{\q}}
  \hypo{[\Delta' \vdash M:A\lolli \q^n] = [\Delta'] \xrightarrow{f}[A\lolli \q^n]}
  \hypo{[\Delta' \vdash N: A\lolli \q^n] = [\Delta'] \xrightarrow{g} [A\lolli\q^n]}
  \infer3{\sem{\Delta,\Delta'\vdash \qcase\ P\ \{0\rightarrow M\ |\ 1\rightarrow N\}: A\lolli \q^{n+1}} = \sem{\Delta}\otimes\sem{\Delta'} \xrightarrow{p\otimes 1_{\sem{\Delta'}}} \sem{\q}\otimes \sem{\Delta'}\xrightarrow{D(\mathfrak{q}(f,g))}\sem{A\lolli\q^{n+1}}}
 \end{prooftree}}
\]

\end{center}

\hrulefill
\caption{Denotational semantics}
\label{fig:denotational}
\end{figure}

We now give proofs leading up to the fact that $\sem{\cdot}$ is well defined.

\begin{lemma}
\label{lem:double_qcase}
 Let $f,g:[\Lambda]\to [A\lolli \q^n]$ be morphisms of $\FHilb$ such that $D(f),D(g):\sem{\Lambda}\to \sem{A\lolli \q^n}$ are morphisms of $\Caus[\CPM]$. Then $D(\mathfrak{q}(f,g)):\sem{\q}\otimes\sem{\Lambda}\to\sem{A\lolli \q^{n+1}}$ is a morphism of $\Caus[\CPM]$.
\end{lemma}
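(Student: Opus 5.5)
The plan is to reduce everything to a statement about first-order outputs, where causality amounts to a trace condition, and then to check that condition directly on the block-diagonal form of $\mathfrak{q}(f,g)$.

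\emph{Step 1: uncurry.} $\Caus[\CPM]$ is $*$-autonomous, so a map $X\to\sem{A}\lolli\mathbf{B}$ is a morphism of $\Caus[\CPM]$ iff its uncurrying $X\otimes\sem{A}\to\mathbf{B}$ is; I take this closedness from~\cite{kissinger2019categorical}. Hence the hypothesis says that $F:=\phi^{-1}(f)$ and $G:=\phi^{-1}(g)$, both linear maps $[\Lambda]\otimes[A]\to[\q^n]$, satisfy: $D(F),D(G):\sem{\Lambda}\otimes\sem{A}\to\sem{\q^n}$ are causal. The goal becomes: $D(H)$ is causal, where
\[
H:=\phi^{-1}(\mathfrak{q}(f,g))=\ket{0}\bra{0}\otimes F+\ket{1}\bra{1}\otimes G\ :\ [\q]\otimes[\Lambda]\otimes[A]\to[\q^{n+1}].
\]
This identity is the simplification of $\mathfrak{q}$ recalled before Proposition~\ref{prop:pure_well_defined}; the swap $\gamma_{[\q],[A]^*}$ is absorbed by $\phi^{-1}$, and $D$ commutes with $\phi$ because it preserves the compact structure.

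\emph{Step 2: first-order codomains.} $\sem{\q^m}=(\mathbb{C}^{2^m},St(\mathbb{C}^{2^m}))$ is a first-order system, and its state set is exactly $\{\rho\mid \mathfrak{d}\circ\rho=1\}$. I will use the standard fact from~\cite{kissinger2019categorical} that a CP map $h:\mathbf{X}\to\sem{\q^m}$ is causal iff $\mathfrak{d}\circ h\in c_{\mathbf{X}}^*$, i.e.\ $h$ is trace-preserving on $c_{\mathbf{X}}$. So the hypotheses become $e_F:=\mathfrak{d}\circ D(F)\in c^*$ and $e_G:=\mathfrak{d}\circ D(G)\in c^*$, where $c:=c_{\sem{\Lambda}\otimes\sem{A}}$.

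\emph{Step 3: compute the discarded map.} Apply $D(H)$ to an arbitrary $\rho$ on $\mathbb{C}^2\otimes[\Lambda]\otimes[A]$ and take the trace. The output control register carries $\ket{0}$ in one block and $\ket{1}$ in the other, so the off-diagonal terms vanish because $\braket{0}{1}=0$. This gives
\[
\mathfrak{d}\circ D(H)=\mathcal{P}_0\otimes e_F+\mathcal{P}_1\otimes e_G .
\]
It remains to show this effect lies in $c_{\sem{\q}\otimes\sem{\Lambda}\otimes\sem{A}}^*$. By definition that set equals $\bigl((St(\mathbb{C}^2)\otimes c)^{**}\bigr)^*=(St(\mathbb{C}^2)\otimes c)^*$, so it suffices to test on product states $\sigma\otimes\tau$ with $\sigma\in St(\mathbb{C}^2)$ and $\tau\in c$. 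On such a state the value is $\sigma_{00}\,e_F(\tau)+\sigma_{11}\,e_G(\tau)=\sigma_{00}+\sigma_{11}=\operatorname{tr}\sigma=1$. Currying back with Step 1 concludes.

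The main obstacle is the bookkeeping in Steps 1--2. One must justify that $\phi$ in $\CPM$ restricts to the internal hom of $\Caus[\CPM]$, that $D$ intertwines the two $\phi$'s, and that the swap $\gamma$ and the reassociation $\sem{\q}\otimes(\sem{\Lambda}\otimes\sem{A})$ preserve the relevant $c$-sets. All of these follow from $\Caus[\CPM]$ being symmetric monoidal closed and from $D$ preserving the compact closed structure. The trace-preservation characterization for first-order codomains is the key external ingredient; I would cite it rather than reprove it.
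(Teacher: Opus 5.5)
Your proof is correct. It differs from the paper's mainly in how the higher-order codomain is handled.

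\textbf{The paper's route.} The paper stays at the curried level. It takes an arbitrary effect $\pi\in c^*_{\sem{A\lolli\q^{n+1}}}$ and invokes Lemma 6.1 of~\cite{kissinger2019categorical} to write it as $\pi'\otimes\mathfrak{d}^*_{[\q^{n+1}]}$ with $\pi'\in c_{\sem{A}}$. It then tests $\pi\circ D(\mathfrak{q}(f,g))$ against product states $\rho_1\otimes\rho_2$ with $\rho_2\in c_{\sem{\Lambda}}$. The computation is closed using the inclusion $c_{\sem{A}}\otimes c^*_{\sem{\q^n}}\subseteq c^*_{\sem{A\lolli\q^n}}$, which gives $(\pi'\otimes Tr_{[\q^n]})\circ D(f)\circ\rho_2=1$.

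\textbf{Your route.} You uncurry first, so the codomain becomes the first-order object $\sem{\q^{n+1}}$. You then use that causality into a first-order system is just the condition $\mathfrak{d}\circ h\in c^*$ on the domain. As a result, the structural description of $c^*_{\sem{A\lolli\q^{n+1}}}$ is never needed. Your test states $\tau\in c_{\sem{\Lambda}\otimes\sem{A}}$ may be correlated between $\Lambda$ and $A$, and this is exactly what the uncurried hypothesis on $D(F),D(G)$ supplies.

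\textbf{Comparison.} The central computation is identical in both: the cross terms between the $\ket{0}$ and $\ket{1}$ blocks vanish, leaving $\bra{0}\sigma\ket{0}+\bra{1}\sigma\ket{1}=1$. Your route relies on two facts: that $\Caus[\CPM]$ is closed with the same currying bijection as $\CPM$, and that $D$ commutes with $\phi$. The paper also relies on these in the proof of Lemma~\ref{lem:double}. In exchange, your route makes it transparent that the lemma is the higher-order lift of ``a coherently controlled pair of trace-preserving maps is trace-preserving.'' The paper's route works directly with $\mathfrak{q}(f,g)$ in the curried form in which it is defined, at the cost of leaning on the more specialised Lemma 6.1.
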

\begin{proof}
The proof relies on a few properties of causal categories.
By~\cite[Definition 5.1]{kissinger2019categorical}, the object $\sem{\q}$ is first order.  Since first-order systems are closed under $\otimes$, so is $\sem{\q^{n+1}}$.
Let $\pi\in c_{\sem{A\lolli \q^{n+1}}}^*$. By~\cite[Lemma 6.1]{kissinger2019categorical}, there exists $\pi'\in c_{\sem{A}}$ such that $\pi = \pi'\otimes \mathfrak{d}_{[\q^{n+1}]}^*$, where $\mathfrak{d}_{[\q^{n+1}]}$ is given by the trace function $Tr_{[\q^{n+1}]}$.
We aim to show that $\pi \circ D(\mathfrak{q}(f,g))\in c_{\sem{\q}\otimes \sem{\Lambda}}^*=(c_{\sem{\q}}\otimes c_{\sem{\Lambda}})^*$. Let $\rho_1\in c_{\sem{\q}}$ and $\rho_2\in c_{\sem{\Lambda}}$. We have:
\begin{align*}
 &\pi \circ  D(\mathfrak{q}(f,g)) \circ (\rho_1\otimes \rho_2)
 \\
 &\quad= (\pi'\otimes Tr_{[\q^{n+1}]})\left(\gamma_{\sem{\q},\sem{A}^*}\left( (\ket{0}\bra{0}\otimes f+\ket{1}\bra{1}\otimes g)(\rho_1\otimes \rho_2)(\ket{0}\bra{0}\otimes f^\dag+\ket{1}\bra{1}\otimes g^\dag)\right)\right)
 \\
 &\quad = \bra{0}\rho_1\ket{0}(\pi'\otimes Tr_{[\q^{n}]})\left( f\rho_2 f^\dag\right)+ \bra{1}\rho_1\ket{1}(\pi'\otimes Tr_{[\q^{n}]})\left( g\rho_2 g^\dag\right)
 \\
 &\quad = \bra{0}\rho_1\ket{0}(\pi'\otimes Tr_{[\q^{n}]})\circ D(f)\circ \rho_2 + \bra{1}\rho_1\ket{1}(\pi'\otimes Tr_{[\q^{n}]})\circ D(g) \circ \rho_2
 \\
 &\quad = \bra{0}\rho_1\ket{0}+ \bra{1}\rho_1\ket{1}
 \\
 &\quad =1
\end{align*}
where we used the fact that $c_{\sem{A}}\otimes c_{\sem{\q^n}}^*\subseteq c_{\sem{A\lolli \q^n}}^*$. Therefore $D(\mathfrak{q}(f,g))$ is a morphism of $\Caus[\CPM]$.

\end{proof}

\lemdouble*
\begin{proof}
 $[\Delta] \xrightarrow{D(f)} [A]$ is a morphism of $\CPM$. To show that $\sem{\Delta} \xrightarrow{D(f)} \sem{A}$ is a morphism of $\Caus[\CPM]$, we must show that for all $\rho\in c_{\sem{\Delta}}$, we have $D(f)\circ \rho =f\rho f^\dag \in c_{\sem{A}}$.
We prove this statement by induction on the derivation of $\Delta\vdash M:A$. We examine the last rule in the derivation. For most of the rules, the statement can be proved using the fact that $\Caus[\CPM]$ is a symmetric monoidal closed category, and that the bijection $\phi$ is the same in $\CPM$ and $\Caus[\CPM]$.

For the unitary rule, showing that $\mathbf{I}\xrightarrow{D(I\otimes U)\circ \eta'_{[q]}}\sem{\q\lolli \q}$ is a morphism of $\Caus[\CPM]$ is the same as showing that $D(I\otimes U)\circ \eta'_{[q]}\in c_{\sem{\q\lolli \q}}= c_{\sem{\q}\lolli \sem{\q}}$. We have $c_{\sem{\q}}= St(\mathbb{C}^2)$ and $c_{\sem{\q}^*}=\{Tr\}$. Then the result follows from~\cite[Lemma 4.9]{kissinger2019categorical}, using the fact that for all $\rho\in c_{\sem{\q}}$, $Tr(U\rho U^\dag)=1$.

For the qcase rule, the statement follows from Lemma~\ref{lem:double_qcase}.
\end{proof}

\propmixedwelldefined*
\begin{proof}
To prove that the interpretation is well defined, we show that the interpretation of each valid typing judgment $\Delta \vdash M:A$ is well defined first as a morphism of $\CPM$, then as a morphism of $\Caus[\CPM]$.

To prove that the interpretation is well defined in $\CPM$, one can show by induction that the interpretation of a typing judgment is independent of the particular derivation of that judgment, similarly to the proof of Proposition~\ref{prop:pure_well_defined}.

Then, we prove by induction of the derivation of $\Delta\vdash M:A$ that its interpretation is in $\Caus[\CPM]$. When examining the last rule in the derivation, in most cases the statement can be shown using the fact that $\Caus[\CPM]$ is a symmetric monoidal closed category. We treat the unitary, measurement and qcase rules in more detail.

For the unitary rule, following from~\cite[Lemma 4.9]{kissinger2019categorical}, we have $D(I\otimes U)\circ \eta'_{[q]}\in c_{\sem{\q\lolli \q}}$. Therefore $\mathbf{I}\xrightarrow{D(I\otimes U)\circ \eta'_{[q]}} \sem{\q\lolli\q}$ is a morphism of $\Caus[\CPM]$.

For the measurement rule, we must show that if $f,g:\sem{\Delta'}\to\sem{A}$ are morphisms of $\Caus[\CPM]$, then so is $\mathfrak{M}(f,g)$. Let $\pi \in c_{\sem{A}}^*$. To see that $\pi \circ\mathfrak{M}(f,g)\in c_{\sem{\q}\otimes\sem{\Delta'}}^*=(c_{\sem{\q}}\otimes c_{\sem{\Delta'}})^*$, we have for all $\rho_1\in c_{\sem{\q}}$ and $\rho_2\in c_{\sem{\Delta'}}$:
\begin{align*}
 \pi \circ \mathfrak{M}(f,g) \circ (\rho_1\otimes \rho_2) &= \pi\bigl(\mathcal{P}_0(\rho_1)\otimes f(\rho_2)+ \mathcal{P}_1(\rho_1)\otimes g(\rho_2) \bigr)\\
 &= \bra{0}\rho_1\ket{0} \pi \circ f \circ \rho_2 + \bra{1}\rho_1\ket{1} \pi \circ g \circ \rho_2
 \\
 &= \bra{0}\rho_1\ket{0} + \bra{1}\rho_1\ket{1}
 \\
 &=1
\end{align*}
Therefore $\mathfrak{M}(f,g): \sem{\q}\otimes \sem{\Delta'} \to \sem{A}$ is a morphism of $\Caus[\CPM]$.

For the qcase rule, suppose $\Delta'\vdash M:A\lolli \q^n$ and $\Delta'\vdash M:A\lolli \q^n$ are valid typing judgments with $M,N\in\mathscr{C}$, whose interpretations we write as $[\Delta' \vdash M:A\lolli \q^n] = [\Delta'] \xrightarrow{f}[A\lolli \q^n]$ and $[\Delta' \vdash N: A\lolli \q^n] = [\Delta'] \xrightarrow{g} [A\lolli\q^n]$. By Lemma~\ref{lem:double}, $D(f):\sem{\Delta'} \to \sem{A\lolli \q^n}$ and $D(g):\sem{\Delta'} \to\sem{A\lolli \q^n}$ are morphism of $\Caus[\CPM]$. By Lemma~\ref{lem:double_qcase}, $D(\mathfrak{q}(f,g)):\sem{\q}\otimes \sem{\Delta'} \to \sem{A \lolli \q^{n+1}}$ is a morphism of $\Caus[\CPM]$, which concludes the proof.
\end{proof}

\section{Soundness}
\label{app:soundness}

The interpretation of terms of $\mathscr{E}$ is inductively defined in Figure~\ref{fig:denotational_execution}, where $\mathfrak{m}_i(f):= \bra{i}\otimes f$.

\begin{figure}[!ht]
\hrulefill
\begin{center}
\[
\scalebox{0.8}{
 \begin{prooftree}
  \hypo{\vphantom{k\in\{0,1\}}}
 \infer1{[x:A \vdash x:A]_\nu \ =\ [A]\xrightarrow{1_{[A]}}[A]}
\end{prooftree}}
 \qquad
 \scalebox{0.8}{
 \begin{prooftree}
   \hypo{\vphantom{k\in\{0,1\}}}
  \infer1{[\vdash U:\q \lolli \q]_\nu\ =\  I\xrightarrow{(I\otimes U)\circ \eta_{[\q]}}[\q\lolli \q]}
 \end{prooftree}}
 \qquad
 \scalebox{0.8}{
 \begin{prooftree}
  \hypo{k\in\{0,1\}}
  \infer1{[\vdash \ket{k}:\q]_\nu\ = \ I \xrightarrow{1\mapsto \ket{k}} [\q]}
 \end{prooftree}}
\]
\vspace{0.5ex}
\[
\scalebox{0.8}{
 \begin{prooftree}
  \hypo{[\Delta,x:A,y:B,\Delta'\vdash M:C]_\nu = [\Delta]\otimes[A]\otimes[B]\otimes [\Delta'] \xrightarrow{f} [C]}
  \infer1{[\Delta,y:B,x:A,\Delta'\vdash M:C]_\nu = [\Delta]\otimes[B]\otimes[A]\otimes [\Delta'] \xrightarrow{1_{[\Delta]}\otimes \gamma_{[B],[A]}\otimes 1_{[\Delta']}}[\Delta]\otimes[A]\otimes[B]\otimes [\Delta'] \xrightarrow{f} [C]}
 \end{prooftree}}
\]
\vspace{0.5ex}
 \[
 \scalebox{0.8}{
\begin{prooftree}
 \hypo{[\Delta,x:A\vdash M:B]_\nu = [\Delta]\otimes [A] \xrightarrow{f} [B]}
 \infer1{[\Delta\vdash \lambda x. M:A\lolli B]_\nu = [\Delta] \xrightarrow{\phi(f)}[A]\lolli[B]}
 \end{prooftree}}
 \quad
 \scalebox{0.8}{
 \begin{prooftree}
  \hypo{[\Delta\vdash M:A\lolli B]_{\nu} = [\Delta]\xrightarrow{f} [A\lolli B]}
\hypo{[\Delta'\vdash N:A]_{\nu} = [\Delta'] \xrightarrow{g}[A]}
  \infer2{[\Delta,\Delta'\vdash M N:B]_{\nu} = [\Delta]\otimes [\Delta'] \xrightarrow{f\otimes g} [A\lolli B]\otimes [A] \xrightarrow{\phi^{-1}(1_{A\lolli B})} [B]}
 \end{prooftree}}
 \]
\vspace{0.5ex}
\[
\scalebox{0.8}{
 \begin{prooftree}
  \hypo{[\Delta \vdash M:A]_{\nu} \ =\ [\Delta]\xrightarrow{f}[A]}
  \hypo{[\Delta'\vdash N:B]_{\nu} \ =\ [\Delta'] \xrightarrow{g}[B]}
  \infer2{[\Delta,\Delta' \vdash \pair{M , N}:A\otimes B]_\nu\ = \ [\Delta]\otimes[\Delta'] \xrightarrow{f\otimes g} [A]\otimes[B] }
 \end{prooftree}}
 \]
\vspace{0.5ex}
\[
\scalebox{0.8}{
  \begin{prooftree}
\hypo{[\Delta \vdash M:A\otimes B]_{\nu}\ = \ [\Delta] \xrightarrow{f}[A\otimes B]
}
\hypo{[\Delta',x:A,y:B\vdash N:C]_{\nu}\ = \ [\Delta']\otimes [A]\otimes [B] \xrightarrow{g}[C]}
\infer2{[\Delta,\Delta' \vdash \mathtt{let}\ \pair{x,y}=M\ \mathtt{in}\ N:C]_\nu \ = \ [\Delta]\otimes [\Delta'] \xrightarrow{f\otimes 1_{[\Delta']}} [A\otimes B] \otimes [\Delta'] \xrightarrow{\gamma_{[A\otimes B], [\Delta']}} [\Delta']\otimes [A\otimes B]\xrightarrow{g} [C]}
\end{prooftree}}
\]
\vspace{0.5ex}
\[
\scalebox{0.8}{
 \begin{prooftree}
  \hypo{\vphantom{\xrightarrow{f}}}
  \infer1{[\vdash \unit:\one]_\nu \ = \ I\xrightarrow{1_{I}} I}
 \end{prooftree}
 \qquad
 \begin{prooftree}
 \hypo{[\Delta\vdash M: \one]_{\nu}\ = \ [\Delta] \xrightarrow{f} I}
  \hypo{[\Delta' \vdash N: A]_{\nu}\ = \ [\Delta']\xrightarrow{g}[A]}
  \infer2{[\Delta,\Delta'\vdash M;N:A]_\nu = \ [\Delta]\otimes[\Delta'] \xrightarrow{f\otimes g} [A]}
 \end{prooftree}}
\]
\vspace{0.5ex}
\[
\scalebox{0.8}{
 \begin{prooftree}
 \hypo{[\Delta \vdash P:\q]_{\nu}\ = \ [\Delta] \xrightarrow{p} [\q]}
  \hypo{[\Delta' \vdash M_i:A]_{\nu}\ = \ [\Delta']\xrightarrow{f_i}[A]}
\hypo{\nu(d)=i}
  \infer3{[\Delta,\Delta'\vdash \meas\ d\triangleright P\ \{0\rightarrow M_0\ |\ 1\rightarrow M_1\}: A]_\nu \ = \ [\Delta]\otimes[\Delta']\xrightarrow{p\otimes 1_{\sem{\Delta'}}} [\q]\otimes [\Delta'] \xrightarrow{\mathfrak{m}_i(f_i)}[A]}
 \end{prooftree}}
\]
\vspace{0.5ex}
\[
\scalebox{0.8}{
 \begin{prooftree}
  \hypo{[\Delta \vdash P:\q]_{\nu} = [\Delta] \xrightarrow{p}[\q]}
  \hypo{[\Delta' \vdash M:A\lolli \q^n]_{\nu} = [\Delta'] \xrightarrow{f}[A \lolli \q^{n}]}
  \hypo{[\Delta' \vdash N: A\lolli \q^n]_{\nu} = [\Delta'] \xrightarrow{g} [A\lolli \q^n]}
  \infer3{[\Delta,\Delta'\vdash \qcase\ P\ \{0\rightarrow M\ |\ 1\rightarrow N\}: A\lolli \q^{n+1}]_\nu\ = \ [\Delta]\otimes[\Delta'] \xrightarrow{p\otimes 1_{[\Delta']}} [\q]\otimes [\Delta']\xrightarrow{\mathfrak{q}(f,g)}[A\lolli \q^{n+1}]}
 \end{prooftree}}
\]

\end{center}
\hrulefill
\caption{Denotational semantics of $\mathscr{E}$}
\label{fig:denotational_execution}
\end{figure}

\lemnuwelldefined*
\begin{proof}
We show by induction on the derivation of $\Delta\vdash M:A$ that its interpretation is independent of the particular derivation of the typing judgment, similarly to the proof of Proposition~\ref{prop:pure_well_defined}. In particular, we use the fact that $\nu$ is also an extended valuation for the appropriate subterms.
\end{proof}

\lemcoincide*
\begin{proof}
 We prove the result by induction on the derivation of $\Delta \vdash M:A$.
 For the measurement rule, we partition $\Omega(\meas\ d\triangleright P\ \{0\to M \ |\ 1\to N\})$ according to the image of $d$.
 For the qcase rule, we use the fact that $\Omega(M)=\Omega(N)=[\bot]$.
\end{proof}

\thmsoundness*
\begin{proof}
We write $\vec{M}= \alpha \cdot N^\sigma + \vec{P}$, where $ N^\sigma \notin \Supp(\vec{P})$, and assume that the reduction $\vec{M} \twoheadrightarrow \vec{M'}$ is obtained by reducing $N \twoheadrightarrow \vec{N'}$. That is, we have $\vec{M'}=\alpha\cdot \vec{N'}^{\{\sigma\}}+\vec{P}$.
With $\mu:=\Mem(\vec{N'})$, we have:
\begin{align*}
 \sem{\vec{M}}_{\Mem(\vec{M'})}&=\sum_{\nu\in \Omega(\vec{M}):\  \nu\ ||\ \mu}[\vec{M}]_\nu (\cdot)[\vec{M}]_\nu^\dag
 &
 \sem{\vec{M'}} &= \sum_{\nu\in\Omega(\vec{M'})} [\vec{M'}]_\nu (\cdot)[\vec{M'}]_\nu^\dag
\end{align*}
where to simplify the notations we omit $\Delta$ and $A$.
First, we observe that $\bigl\{\nu\in \Omega(\vec{M})\ \big|\ \nu\ ||\ \mu\bigr\}=\Omega(\vec{M'})$.
Then it suffices to prove that for all $\nu$ in this set, we have $[\vec{M}]_\nu = [\vec{M'}]_\nu$.

To this end, we prove by induction on the derivation $N\twoheadrightarrow\vec{N'}$ that $[N]_\nu = [\vec{N'}]_\nu$ for all $\nu\in\Omega_+(N)$ such that $\nu\ ||\ \Mem(\vec{N'})$.
This is sufficient to prove the result. Indeed, notice that $\Omega(\vec{M'})\subseteq \bigl\{\nu\in \Omega_+(N)\ \big|\ \nu\ ||\ \Mem(\vec{N'})\bigr\}$, and for $\nu\in\Omega(\vec{M'})$:
\begin{align*}
 [\vec{M}]_{\nu} &= \alpha[N]_\nu+[\vec{P}]_\nu
 &
 [\vec{M'}]_{\nu}&=\alpha[\vec{N'}]_\nu+[\vec{P}]_\nu
\end{align*}

We detail the case of rule $(m_k)$. Suppose $N = \meas \ d\triangleright \ket{k'}\ \{0\to N_0 \ |\ 1\to N_1\}$ and $\vec{N'}=\delta_{k,k'}\cdot N_k^{[d\to k]}$. For $\nu\in\Omega_+(N)$ satisfying $\nu\ ||\ \Mem(\vec{N'})$, we necessarily have $\nu(d)=k$. By definition:
\begin{align*}
 [\vec{N'}]_\nu &= \delta_{k,k'}[N_k]_{\nu}
 \\
 [N]_\nu &= \mathfrak{m}_k([N_k]_\nu) \circ (1\mapsto \ket{k'}) = \delta_{k,k'}[N_k]_{\nu}
\end{align*}
which is the desired result.
\end{proof}

\section{Expressivity}
\label{app:expressivity}

\subsection{Quantum Channels}

\propchannels*
\begin{proof}
 By~\citet{stinespring}'s dilation theorem, every quantum channel can be decomposed into a unitary channel applied to the input together with an environment system, followed by a partial trace.
 To implement a unitary channel, we invoke the univerality of CNOT and single-qubit unitary gates~\cite{barenco}. Example~\ref{ex:cnot} shows how to implement the CNOT gate using a $\qcase$.
The partial trace can be implemented by applying the discard map (Example~\ref{ex:discard}) to every qubit of the environment system.
\end{proof}

\subsection{\textbf{QC-QC}s with Memory}
\label{app:QCQC}

We provide some more details on the construction of \textbf{QC-QC}s with memory and prove that the language is universal for this subclass. The $N$-linear map from $\CPM(\mathcal{H}_I,\mathcal{H}_O)^N$ to $\CPM(\mathcal{H}_P,\mathcal{H}_F)$ that is computed by the circuit
\[
\scalebox{0.9}{\begin{tikzpicture}
	\begin{pgfonlayer}{nodelayer}
		\node [style=none] (3) at (-0.25, 1.75) {};
		\node [style=none] (4) at (1.25, 1.75) {};
		\node [style=none] (5) at (2.25, 1.75) {};
		\node [style=none] (6) at (3.75, 1.75) {};
		\node [style=none] (7) at (6.25, -0.25) {};
		\node [style=none] (8) at (2.25, -0.25) {};
		\node [style=none] (16) at (-3.25, -2) {};
		\node [style=none] (17) at (8.75, -2) {};
		\node [style=none] (18) at (4.75, 1.75) {};
		\node [style=none] (19) at (6.25, 1.75) {};
		\node [style=none] (20) at (-1.25, 1.75) {};
		\node [style=none] (21) at (-2.75, 1.75) {};
		\node [style=none] (22) at (7.25, 1.75) {};
		\node [style=none] (23) at (8.75, 1.75) {};
		\node [style=none] (24) at (7.25, -0.25) {};
		\node [style=none] (25) at (8.75, -0.25) {};
		\node [style=none] (26) at (-3.75, 0.75) {};
		\node [style=none] (27) at (-4.75, 0.75) {};
		\node [style=none] (28) at (11.25, 1.75) {};
		\node [style=none] (29) at (11.25, -0.25) {};
		\node [style=none] (30) at (9.75, -0.25) {};
		\node [style=none] (31) at (9.75, 1.75) {};
		\node [style=none] (32) at (12.25, 1.75) {};
		\node [style=none] (33) at (13.75, 1.75) {};
		\node [style=none] (34) at (12.25, -0.25) {};
		\node [style=none] (35) at (16.75, -0.25) {};
		\node [style=none] (36) at (14.75, 1.75) {};
		\node [style=none] (37) at (16.25, 1.75) {};
		\node [style=none] (38) at (1.25, -0.25) {};
		\node [style=none] (39) at (-2.75, -0.25) {};
		\node [style=none] (40) at (-3.25, -0.5) {};
		\node [style=none] (41) at (1.75, -0.5) {};
		\node [style=none] (42) at (1.75, -2) {};
		\node [style=none] (43) at (6.75, -0.5) {};
		\node [style=none] (44) at (6.75, -2) {};
		\node [style=none] (45) at (11.75, -0.5) {};
		\node [style=none] (46) at (11.75, -2) {};
		\node [style=none] (47) at (9.75, -2) {};
		\node [style=none] (48) at (16.75, -2) {};
		\node [style=texthere] (49) at (9.25, 1.675) {$\cdots$};
		\node [style=texthere] (50) at (9.25, -0.325) {$\cdots$};
		\node [style=texthere] (51) at (9.25, -2.075) {$\cdots$};
		\node [style=none] (52) at (17.25, 1.75) {};
		\node [style=none] (53) at (19, 1.75) {};
		\node [style=dot] (54) at (16.75, -2) {};
		\node [style=dot] (55) at (11.75, -2) {};
		\node [style=dot] (56) at (-3.25, -2) {};
		\node [style=dot] (57) at (1.75, -2) {};
		\node [style=dot] (58) at (6.75, -2) {};
		\node [style=dot] (59) at (-0.75, -2) {};
		\node [style=dot] (60) at (4.25, -2) {};
		\node [style=dot] (61) at (14.25, -2) {};
		\node [style=none] (62) at (14.25, 1.25) {};
		\node [style=none] (63) at (-0.75, 1.25) {};
		\node [style=none] (64) at (4.25, 1.25) {};
		\node [style=texthere] (65) at (-0.75, -2.75) {$|(k_1)\rangle$};
		\node [style=texthere] (66) at (4.25, -2.75) {$|(k_1,k_2)\rangle$};
		\node [style=texthere] (67) at (14.25, -2.75) {$|(k_1,...,k_N)\rangle$};
		\node [style=upground] (68) at (19, -0.25) {};
		\node [style=none] (69) at (18.825, -0.25) {};
		\node [style=big box] (74) at (-3.25, 0.75) {\rotatebox[origin=c]{90}{$V_{|()\rangle}^{\to k_1}$}};
		\node [style=small box] (75) at (-0.75, 1.75) {$x_{k_1}$};
		\node [style=small box] (76) at (4.25, 1.75) {$x_{k_2}$};
		\node [style=small box] (77) at (14.25, 1.75) {$x_{k_N}$};
		\node [style=texthere] (78) at (-4.75, 1) {$P$};
		\node [style=texthere] (79) at (-2, 2) {$I$};
		\node [style=texthere] (80) at (0.5, 2) {$O$};
		\node [style=texthere] (81) at (-2, 0) {$\alpha_1$};
		\node [style=texthere] (82) at (3, 2) {$I$};
		\node [style=texthere] (83) at (3, 0) {$\alpha_2$};
		\node [style=texthere] (85) at (5.5, 2) {$O$};
		\node [style=texthere] (86) at (8, 2) {$I$};
		\node [style=texthere] (87) at (8, 0) {$\alpha_3$};
		\node [style=texthere] (88) at (10.25, 2) {$O$};
		\node [style=texthere] (89) at (10.25, 0) {$\alpha_{N-1}$};
		\node [style=texthere] (90) at (13, 2) {$I$};
		\node [style=texthere] (91) at (13, 0) {$\alpha_N$};
		\node [style=texthere] (92) at (15.5, 2) {$O$};
		\node [style=texthere] (93) at (19, 2) {$F$};
		\node [style=texthere] (94) at (18.25, 0) {$\alpha_F$};
		\node [style=texthere] (95) at (-2.5, -1.75) {$C_1$};
		\node [style=texthere] (96) at (2.5, -1.75) {$C_2$};
		\node [style=texthere] (97) at (7.5, -1.75) {$C_3$};
		\node [style=texthere] (98) at (12.5, -1.75) {$C_N$};
		\node [style=big box] (99) at (6.75, 0.75) {\rotatebox[origin=c]{90}{$V_{|(k_1,k_2)\rangle}^{\to k_3}$}};
		\node [style=big box] (100) at (1.75, 0.75) {\rotatebox[origin=c]{90}{$V_{|(k_1)\rangle}^{\to k_2}$}};
		\node [style=big box] (101) at (11.75, 0.75) {\rotatebox[origin=c]{90}{$V_{\!|(k_1,...,k_{N\!-\!1})\rangle}^{\to k_N}$}};
		\node [style=big box] (102) at (16.75, 0.75) {\rotatebox[origin=c]{90}{$V_{|(k_1,...,k_N)\rangle}^{\to F}$}};
	\end{pgfonlayer}
	\begin{pgfonlayer}{edgelayer}
		\draw (3.center) to (4.center);
		\draw (5.center) to (6.center);
		\draw (8.center) to (7.center);
		\draw (16.center) to (17.center);
		\draw (18.center) to (19.center);
		\draw (22.center) to (23.center);
		\draw (24.center) to (25.center);
		\draw (21.center) to (20.center);
		\draw (27.center) to (26.center);
		\draw (31.center) to (28.center);
		\draw (30.center) to (29.center);
		\draw (34.center) to (35.center);
		\draw (32.center) to (33.center);
		\draw (36.center) to (37.center);
		\draw (39.center) to (38.center);
		\draw (43.center) to (44.center);
		\draw (47.center) to (46.center);
		\draw (46.center) to (45.center);
		\draw (46.center) to (48.center);
		\draw (48.center) to (35.center);
		\draw (40.center) to (16.center);
		\draw (41.center) to (42.center);
		\draw (52.center) to (53.center);
		\draw (63.center) to (59);
		\draw (64.center) to (60);
		\draw (62.center) to (61);
		\draw (35.center) to (69.center);
	\end{pgfonlayer}
\end{tikzpicture}}
\]
is formally defined as follows.

The $V_{\ket{(k_1,...,k_n)}}^{\to k_{n+1}}$ ($0\leq n \leq N-1$) and $V_{\ket{(k_1,...,k_N)}}^{\to F}$ represent isometries. We further assume that the output spaces of the $V_{\ket{(k_1,...,k_N)}}^{\to F}$ in the final time-step are orthogonal. The corresponding controlled operations, which we will call $\tilde{V_1},...,\tilde{V_{N+1}}$, are defined as follows:
\begin{align*}
 \tilde{V}_1 &:= \sum_{k_1} V_{\ket{()}}^{\to k_1}\otimes \ket{(k_1)} \in\mathcal{L}(\mathcal{H}_P,\mathcal{H}_I\otimes \mathcal{H}_{\alpha_1}\otimes \mathcal{H}_{C_1})
 \\
 \tilde{V}_n &:=\sum_{(k_1,...,k_{n})} V_{\ket{(k_1,...,k_{n-1})}}^{\to k_{n}}\otimes \ket{(k_1,...k_{n})}\bra{(k_1,...,k_{n-1})} \in\mathcal{L}(\mathcal{H}_I\otimes \mathcal{H}_{\alpha_{n-1}}\otimes \mathcal{H}_{C_{n-1}},\mathcal{H}_I\otimes \mathcal{H}_{\alpha_{n}}\otimes \mathcal{H}_{C_{n}})
 \\
 &(\text{for }2\leq n\leq N)
 \\
 \tilde{V}_{N+1} &:=\sum_{(k_1,...,k_{N})} V_{\ket{(k_1,...,k_{N})}}^{\to F}\otimes \bra{(k_1,...,k_{N})} \in\mathcal{L}(\mathcal{H}_I\otimes \mathcal{H}_{\alpha_{N}}\otimes \mathcal{H}_{C_{N}},\mathcal{H}_F\otimes \mathcal{H}_{\alpha_{F}})
\end{align*}
where when taking sums over $(k_1,..,k_n)$ or $(k_1,...,k_N)$, the indices are assumed to be pairwise distinct.
Note that $\tilde{V}_1$, the $\tilde{V}_n$ and $\tilde{V}_{N+1}$ are also isometries. For this property to hold, it is necessary that the $V_{\ket{(k_1,...,k_N)}}^{\to F}$ have orthogonal output spaces.

Suppose that $x_1,...,x_N$ are substituted for inputs $\mathcal{A}_1,...,\mathcal{A}_N\in \CPM(\mathcal{H}_I,\mathcal{H}_O)$ such that $\mathcal{A}_1 = D(A_1),...,\mathcal{A}_N=D(A_N)$ for some $A_1,...,A_n\in \FHilb(\mathcal{H}_I,\mathcal{H}_O)$.
The corresponding controlled operations occuring at each time-step, which we call $\tilde{A}_1,...,\tilde{A}_N$, are defined as follows:
\begin{align*}
\tilde{A}_n := \sum_{(k_1,...,k_n)}A_{k_n}\otimes I_{\mathcal{H}_{\alpha_n}}\otimes \ket{(k_1,...,k_n)}\bra{(k_1,...,k_n)} \in \mathcal{L}(\mathcal{H}_I\otimes\mathcal{H}_{\alpha_n}\otimes \mathcal{H}_{C_n},\mathcal{H}_O\otimes\mathcal{H}_{\alpha_n}\otimes \mathcal{H}_{C_n})
\end{align*}

Then, the ciruict computes the completely positive map $\mathcal{A}_{out}$ by composing the $\tilde{V_n}$ and $\tilde{A_n}$ as follows:
\[
 \mathcal{A}_{\text{out}} \ := \ Tr_{\alpha_F} \circ D(\tilde{V}_{N+1}\circ \tilde{A}_{N}\circ\tilde{V}_{N}\circ ... \circ \tilde{V}_2\circ \tilde{A}_1 \circ \tilde{V}_1) \in \CPM(\mathcal{H}_P,\mathcal{H}_F)
\]

We obtain the output circuit for general $\mathcal{A}_n$ (i.e. not necessarily of the form $\mathcal{A}_n=D(A_n)$) by linearity.

To show that every \textbf{QC-QC} with memory can be implemented in the language, we begin by recursively defining the statement $\qcase_n$ with $n$ branching statements:
 \begin{align*}
 &\qcase_1\ P \ \{ 0 \to M\} :=\  \lambda t. \pair{P,M\,t}
 \\
 &\qcase_2 \ P \ \{ 0\to M_0 \ |\ 1\to M_1\}:=\ \qcase \ P \ \{ 0\to M_0 \ |\ 1\to M_1\}
 \\
& \qcase_n \ P\ \{0\to M_0 \ |\ ...\ |\ n-1 \to M_{n-1}\}
\\
& \quad := \mathtt{let}\ \pair{x,y}=P \ \mathtt{in}\\
 & \qquad\ \  \qcase \ x \ \{0\to \qcase_{\alift{\frac{n}{2}}}\ y \ \{0 \to M_0 \ |\ ...\ |\ \Big\lceil {\frac{n}{2}}\Big\rceil \to M_{\alift{\frac{n}{2}}}\}
 \\
 &\qquad \ \qquad \qquad |\ 1\to \qcase_{\lfloor \frac{n}{2}\rfloor}\ y\ \{0\to M_{\alift{\frac{n}{2}}+1}\ |\ ...\ |\ \Big\lfloor \frac{n}{2} \Big\rfloor \to M_n\}
 \\
 &\qquad \ \qquad \qquad \} &\text{for }n\geq 3.
\end{align*}
To simplify the notations, we will write $\qcase_n \ P \ \{k \to M_k\}$ for $\qcase_n \ P\ \{0\to M_0 \ |\ ...\ |\ n-1 \to M_{n-1}\}$.

In order to construct a program implementing the above circuit, we define the following isometries:
\begin{align*}
 W_{(k_1,...,k_{n-1})} &:=\sum_{k_n\notin(k_1,...,k_{n-1})} V_{\ket{(k_1,...,k_{n-1})}}^{\to k_{n}}\otimes \ket{k_{n}}
 & \text{for }1\leq n\leq N - 1\\
 W_{F} &:=\sum_{(k_1,...,k_{N})} V_{\ket{(k_1,...,k_{N})}}^{\to F}\otimes \bra{(k_1,...,k_{N})}
\end{align*}
There exist programs $\mathtt{W}_{(k_1,...,k_n)}$ ($1\leq n\leq N-1$) and $\mathtt{W}_F$ implementing the $W_{(k_1,...,k_n)}$ ($1\leq n\leq N-1$) and $W_F$. We recursively define the following programs:
\[
 \mathtt{P}_{(k_1,...,k_N)} := \lambda \pair{s_N,a_N,k_N}.\mathtt{let}\ t_N= x_{k_N}\ s_{N} \ \mathtt{in} \ \pair{t_N,a_N,k_N}
\]

\begin{align*}
  \mathtt{P}_{(k_1,...,k_n)} := \lambda \pair{s_n,a_n,k_n}.\qcase_{N-n+1} \ k_n\ \{k_n \to \ &
 \mathtt{let} \ t_n = x_{k_n}\,s_n \ \mathtt{in}
 \\
& \mathtt{let}\ \pair{s_{n+1},a_{n+1},k_{n+1}} = \mathtt{W}_{(k_{1},...,k_n)} \ \pair{t_n,a_n}\ \mathtt{in}
\\
&  \mathtt{P}_{(k_1,...,k_{n+1})} \ \pair{s_{n+1},a_{n+1},k_{n+1}}\ \}
\end{align*}
for $1\leq n\leq N-1$.
\[
\mathtt{P}_{()} := \lambda p.\mathtt{let}\ \pair{s_1,a_1,k_1} = \mathtt{W}_{()}\ p \ \mathtt{in} \ \mathtt{P}_{(k_1)} \ \pair{s_1,a_1,k_1}
\]

By construction, the following program implements the above circuit for a \textbf{QC-QC} with memory:
\begin{align*}
 \mathtt{P} :=\ &\lambda\pair{x_1,...,x_n}.\lambda p. \mathtt{let} \ \pair{t_N,a_N,k_N} = \mathtt{P}_{()} \ p \ \mathtt{in}\\
 &\mathtt{let}\ \pair{f,a_{F}}= \mathtt{W}_F\ \pair{t_N,a_N,k_N} \ \mathtt{in}
 \\
 &\mathtt{discard} \ a_{F}\ ; \ f
\end{align*}
where $\mathtt{discard}$ is a generalization of Example~\ref{ex:discard} to multiple qubits.

This proves the universality result:
\propqcqc*

\section{An Extension for Non-Linearity and Recursion}

\subsection{Full Definition of the Type System}
\label{app:extension_typing}

In this section, we give the full definition of the extended type system.

Let $\Lambda$ and $\Lambda'$ be a pair of typing contexts, which  are either both linear or both unrestricted.
If all the variables appearing in both $\Lambda$ and $\Lambda'$ are assigned the same type in $\Lambda$ and $\Lambda'$, then we can define their union, which we will write as $\Lambda\cup\Lambda'$.
If $\Lambda$ and $\Lambda'$ do not have any variables in common, we will write their union as $\Lambda\uplus\Lambda':=\Lambda,\Lambda'$.

The typing rules are detailed in Figure~\ref{fig:typing_gen}. If a typing judgment is valid, each variable of $\Gamma$ is guaranteed to be used at least once in the term, and each variable of $\Delta$ exactly once (counting uses in both branches of $\meas$ or $\qcase$ as the same). In particular, having unrestricted variables be used at least once prevents qubit deletion outside of a measurement (for instance $\lambda^* u. \ket{0}$).

\begin{figure}[!ht]
\hrulefill
\begin{center}
\[
\scalebox{0.9}{
 \begin{prooftree}
 \infer0[ax]{\cdot ; x:A \vdash x:A}
\end{prooftree}}
 \qquad
 \scalebox{0.9}{
 \begin{prooftree}
 \infer0[ax$^*$]{ u:A;\cdot \vdash u:A}
 \end{prooftree}}
 \qquad
 \scalebox{0.9}{
 \begin{prooftree}
  \infer0[unit]{\cdot;\cdot \vdash U:\q \lolli \q}
 \end{prooftree}}
 \qquad
 \scalebox{0.9}{
 \begin{prooftree}
  \hypo{k\in\{0,1\}}
  \infer1[qbit]{\cdot;\cdot \vdash \ket{k}:\q}
 \end{prooftree}}
 \]

 \[\scalebox{0.9}{
 \begin{prooftree}
  \hypo{\Gamma;(\Delta,x:A,y:B,\Delta') \vdash M:C}
  \infer1[X]{\Gamma;(\Delta,y:B,x:A,\Delta')\vdash M:C}
 \end{prooftree}}
 \qquad
 \scalebox{0.9}{
 \begin{prooftree}
  \hypo{(\Gamma,u:A,v:B,\Gamma');\Delta \vdash M:C}
  \infer1[X$^*$]{(\Gamma,v:B,u:A,\Gamma');\Delta\vdash M:C}
 \end{prooftree}}
\]

 \[
 \scalebox{0.9}{
\begin{prooftree}
 \hypo{\Gamma ;(\Delta,x:A)\vdash M:B}
 \infer1[$\lolli$I]{\Gamma;\Delta\vdash \lambda x. M:A\lolli B}
 \end{prooftree}}
 \quad
 \scalebox{0.9}{
 \begin{prooftree}
  \hypo{\Gamma;\Delta\vdash M:A\lolli B}
  \hypo{\Gamma';\Delta'\vdash N:A}
  \infer2[$\lolli$E]{\Gamma\cup \Gamma' ; \Delta\uplus\Delta'\vdash M N:B}
 \end{prooftree}}
 \quad
 \scalebox{0.9}{
 \begin{prooftree}
  \hypo{\Gamma;\Delta \vdash M:A}
  \hypo{\Gamma';\Delta'\vdash N:B}
  \infer2[$\otimes$I]{\Gamma\cup\Gamma';\Delta\uplus\Delta' \vdash \pair{M , N}:A\otimes B}
 \end{prooftree}}
 \]

 \[
\scalebox{0.9}{
  \begin{prooftree}
\hypo{\Gamma;\Delta \vdash M:A\otimes B}
\hypo{\Gamma';(\Delta',x:A,y:B)\vdash N:C}
\infer2[$\otimes$E]{\Gamma\cup\Gamma'; \Delta\uplus\Delta' \vdash \mathtt{let}\ \pair{x,y}=M\ \mathtt{in}\ N:C}
\end{prooftree}}
\quad
\scalebox{0.9}{
 \begin{prooftree}
  \infer0[$\one$I]{\cdot;\cdot \vdash \unit:\one}
 \end{prooftree}}
 \quad
 \scalebox{0.9}{
 \begin{prooftree}
  \hypo{\Gamma;\Delta\vdash M: \one}
  \hypo{\Gamma';\Delta' \vdash N: A}
  \infer2[seq]{\Gamma\cup\Gamma';\Delta\uplus\Delta'\vdash M;N:A}
 \end{prooftree}}
\]

\[
\scalebox{0.9}{
 \begin{prooftree}
 \hypo{(\Gamma,u:A) ;\Delta\vdash M:B}
 \infer1[$\Rightarrow$I]{\Gamma;\Delta\vdash \lambda^* u. M:A\Rightarrow B}
 \end{prooftree}}
 \qquad
 \scalebox{0.9}{
 \begin{prooftree}
  \hypo{\Gamma;\Delta\vdash M:A\Rightarrow B}
  \hypo{\Gamma';\cdot \vdash N:A}
  \infer2[$\Rightarrow$E]{\Gamma\cup\Gamma' ; \Delta\vdash M^*N:B}
 \end{prooftree}}
\]

\[
\scalebox{0.9}{
 \begin{prooftree}
 \hypo{\Gamma;\Delta \vdash P:\q}
  \hypo{\Gamma';\Delta' \vdash M:A}
  \hypo{\Gamma';\Delta' \vdash N: A}
  \infer3[meas]{\Gamma\cup\Gamma';\Delta\uplus\Delta'\vdash \meas\ P\ \{0\rightarrow M\ |\ 1\rightarrow N\}: A}
 \end{prooftree}}
 \]

 \[
 \scalebox{0.9}{
 \begin{prooftree}
  \hypo{\Gamma;\Delta \vdash P:\q}
  \hypo{\Gamma';\Delta' \vdash M:A \lolli \q^n}
  \hypo{\Gamma';\Delta' \vdash N: A\lolli \q^n}
  \hypo{M,N\in\mathscr{C}^*}
  \infer4[qcase]{\Gamma\cup \Gamma';\Delta\uplus\Delta'\vdash \qcase\ P\ \{0\rightarrow M\ |\ 1\rightarrow N\}: A\lolli \q^{n+1}}
 \end{prooftree}}
\]

\[
\scalebox{0.9}{
 \begin{prooftree}
  \hypo{(\Gamma,f:A\lolli B); x:A \vdash M: B}
  \infer1[rec]{\Gamma;\cdot\vdash \letrec\ f \ x = M : A\lolli B}
 \end{prooftree}}
\qquad
\scalebox{0.9}{
 \begin{prooftree}
  \hypo{(\Gamma,f:A\Rightarrow B,u:A);\cdot \vdash M: B}
  \infer1[rec$^*$]{\Gamma;\cdot\vdash \letrec^*\ f \ u = M : A\Rightarrow B}
 \end{prooftree}}
\]

\end{center}
\hrulefill
\caption{Typing rules}
\label{fig:typing_gen}
\end{figure}

\subsection{Full Definition of the Operational Semantics}
\label{app:extension_op}

In this section, we detail the definition of the operational semantics of the language's extension.

\subsubsection{A Language for Program Executions}

The set of (extended) \emph{execution terms}, which we will write as $\mathscr{E}^*$, is defined in Figure~\ref{fig:syntax_execution_terms_gen}. Compared to the set of extended terms $\mathscr{T}^*$, the measurement is assigned the extended reference $d^*\in\device$, which is of the form $d^*= d \cdot n_1\cdot ...\cdot n_k$ where $d\in\device$ is a device reference.
\begin{figure}[!ht]
\hrulefill
\begin{align*}
 \begin{array}{lrl}
 \
 \text{(Extended execution\ terms)}&\ \ \mathscr{E}^* \ni M,N,P \  ::=  &  x \ |\ M N \ |\ \lambda x.M \ |\ u\ |\ M^*N \ | \ \lambda^* u. M \\
  && \!\!\!\! |\ \ \pair{M,N} \ |\ \mathtt{let}\ \pair{x,y} = M \  \mathtt{in}\ N \ |\ \unit\ |\ M;N \\
  && \!\!\!\! |\ \ U\ | \ \ket{0}\ |\ \ket{1}
  \\
  && \!\!\!\! |\ \ \meas \ d^* \triangleright P\ \{0\rightarrow M\ |\ 1\rightarrow N\} \\
  && \!\!\!\! |\ \ \qcase \ P\ \{0\rightarrow M\ |\ 1\rightarrow N\} \\
 && \!\!\!\! |\ \ \letrec \ f\ x=M \ |\ \letrec^* \ f\ u=M
 \end{array}
\end{align*}
\hrulefill
\caption{Extended syntax of execution terms}
\label{fig:syntax_execution_terms_gen}
\end{figure}

Similarly to the linear fragment, we define the following typing rules on terms of $\mathscr{E}^*$:
\[
\scalebox{0.9}{
 \begin{prooftree}
 \hypo{\Gamma;\Delta \vdash P:\q}
  \hypo{\Gamma';\Delta' \vdash M:A}
  \hypo{\Gamma';\Delta' \vdash N: A}
  \infer3[meas]{\Gamma\cup\Gamma';\Delta\uplus\Delta'\vdash \meas\ d^* \triangleright P\ \{0\rightarrow M\ |\ 1\rightarrow N\}: A}
 \end{prooftree}}
 \]

 \[
 \scalebox{0.9}{
 \begin{prooftree}
  \hypo{\Gamma;\Delta \vdash P:\q}
  \hypo{\Gamma';\Delta' \vdash M:A \lolli \q^n}
  \hypo{\Gamma';\Delta' \vdash N: A\lolli \q^n}
  \infer3[qcase]{\Gamma\cup \Gamma';\Delta\uplus\Delta'\vdash \qcase\ P\ \{0\rightarrow M\ |\ 1\rightarrow N\}: A\lolli \q^{n+1}}
 \end{prooftree}}
\]
and the rules for all other primitives are the same as in $\mathscr{T}^*$.

In turn, the set of (extended) \emph{configurations}, written as $\vec{\mathscr{E}}^*$ is defined by the following grammar:
\[
 \begin{array}{lrl}
  (\text{Extended configurations}) &\quad \vec{\mathscr{E}}^* \ni \vec{M},\vec{N} \ ::= & \vec{0}\ |\ M^{\sigma}\ |\ \vec{M}+\vec{N} \ |\ \alpha \cdot \vec{M}
 \end{array}
\]
where $M\in\mathscr{E}^*$, $\alpha \in\mathbb{C}$ and $\sigma \in \mathcal{M}^*$.

\subsubsection{Reduction Rules}

To define the reduction of a term $M\in\mathscr{T}^*$, we assign to each measurement in $M$ a unique device reference $d\in\device\subseteq \device^*$, resulting in a term $M'\in \mathscr{E}^*$.
It is important to choose $d$ in $\device$ rather than $\device^*\setminus \device$ to avoid conflicting device references.

\emph{Contexts} are defined by the following grammar:
\begin{align*}
 E\ ::=\ & [] \ | \ EM \ |\ ME\ |\ E^*M\ |\ M^*E\ |\ \pair{E,M}\ |\ \pair{M,E}\ |\ \mathtt{let}\ \pair{x,y}=E\ \mathtt{in}\ M\ |\ E;M \\
 &\!\!\!\! |\ \ \meas \ d\triangleright E\ \{0\to M\ |\ 1\to N\} \ |\ \qcase \ E\ \{0\to  M\ |\ 1\to N\}
\end{align*}
where $M,N\in\mathscr{E}^*$.
Therefore we can use the following notation:
\[
 E\left[\sum_{i=1}^n \alpha_i \cdot M_i^{\sigma_i}\right] \ :=\ \sum_{i=1}^n \alpha_i \cdot E[M_i]^{\sigma_i}
\]

Finally, the set of extended values $\mathscr{V}^*\subseteq \mathscr{E}^*$ is defined by the following grammar:
 \[
 \begin{array}{lll}
  \text{(Extended values)} \quad \mathscr{V}^* \ni V,W \ ::= \!\!\!\!& \ \ \
  \lambda x. M \ |\ \lambda^* u. M\ |\ \unit  \ |\ \pair{V,W} \ |\ U \ |\ \ket{0}\ |\ \ket{1}
 \\ & |\ \ \letrec \ f\ x = M \ |\ \letrec^* \ f\ u=M
 \end{array}
\]

The reduction relation ${\twoheadrightarrow}\subseteq \vec{\mathscr{E}}^*\times\vec{\mathscr{E}}^*$ is defined in Figure~\ref{fig:rel_gen}.

\begin{figure}[!ht]
\hrulefill
\begin{center}

\[
\scalebox{0.9}{
 \begin{prooftree}
  \hypo{N\twoheadrightarrow \vec{N}}
  \infer1{MN\twoheadrightarrow M\vec{N}}
 \end{prooftree}}
\qquad
\scalebox{0.9}{
 \begin{prooftree}
  \hypo{M\twoheadrightarrow \vec{M}}
  \infer1{MV\twoheadrightarrow \vec{M}V}
 \end{prooftree}}
\qquad
\scalebox{0.9}{
 \begin{prooftree}
  \hypo{\phantom{\vec{M}}}
  \infer1{(\lambda x. M)V\twoheadrightarrow M\{V/x\}}
 \end{prooftree}}
\qquad
\scalebox{0.9}{
 \begin{prooftree}
  \hypo{N\twoheadrightarrow \vec{N}}
  \infer1{M^*N\twoheadrightarrow M^*\vec{N}}
 \end{prooftree}}
\]
\[
\scalebox{0.9}{
 \begin{prooftree}
  \hypo{M\twoheadrightarrow \vec{M}}
  \infer1{M^*V\twoheadrightarrow \vec{M}^*V}
 \end{prooftree}}
\qquad
\scalebox{0.9}{
 \begin{prooftree}
  \hypo{\phantom{\vec{M}}}
  \infer1{(\lambda^* u. M)^*V\twoheadrightarrow M\db{V/u}}
 \end{prooftree}}
\qquad
\scalebox{0.9}{
 \begin{prooftree}
  \hypo{N\twoheadrightarrow \vec{N}}
  \infer1{\pair{M,N}\twoheadrightarrow \pair{M,\vec{N}}}
 \end{prooftree}}
\qquad
\scalebox{0.9}{
 \begin{prooftree}
  \hypo{M\twoheadrightarrow \vec{M}}
  \infer1{\pair{M,V}\twoheadrightarrow \pair{\vec{M},V}}
 \end{prooftree}}
\]
\[
\scalebox{0.9}{
 \begin{prooftree}
  \hypo{M\twoheadrightarrow \vec{M}}
  \infer1{\mathtt{let} \ \pair{x,y} = M \ \mathtt{in}\ N \twoheadrightarrow \mathtt{let} \ \pair{x,y}=\vec{M}\ \mathtt{in}\ N}
 \end{prooftree}}
\qquad
\scalebox{0.9}{
 \begin{prooftree}
   \hypo{\phantom{\vec{M}}}
  \infer1{\mathtt{let} \ \pair{x,y} = \pair{V,W} \ \mathtt{in}\ N \twoheadrightarrow N\{V/x,W/y\}}
 \end{prooftree}}
\]
\[
\scalebox{0.9}{
 \begin{prooftree}
  \hypo{M\twoheadrightarrow \vec{M}}
  \infer1{M;N \twoheadrightarrow \vec{M};N}
 \end{prooftree}}
\qquad
\scalebox{0.9}{
\begin{prooftree}
 \hypo{\phantom{\vec{M}}}
  \infer1{\unit;N\twoheadrightarrow N}
 \end{prooftree}}
\qquad
\scalebox{0.9}{
 \begin{prooftree}
 \hypo{k\in\{0,1\}}
 \hypo{U = {\begin{bmatrix}
                  u_{00} & u_{01} \\
                  u_{10} & u_{11}
                 \end{bmatrix}}
}
  \infer2{U\ket{k}\twoheadrightarrow u_{0k} \cdot\ket{0} + u_{1k} \cdot\ket{1}}
 \end{prooftree}}
\]
\[
\scalebox{0.9}{
 \begin{prooftree}
  \hypo{P\twoheadrightarrow \vec{P}}
  \infer1{\meas\ d^*\triangleright P\ \{0\rightarrow M_0\ |\ 1\rightarrow M_1\}\twoheadrightarrow \meas\ d^*\triangleright\vec{P}\  \{0\rightarrow M_0\ |\ 1\rightarrow M_1\}}
 \end{prooftree}}
\]
\\
\[
\scalebox{0.9}{
 \begin{prooftree}
  \hypo{k'\in\{0,1\}}
  \infer1[$(m_k) \quad (\scriptsize k\in\{0,1\})$]{\meas\ d^*\triangleright \ket{k'}\ \{0\rightarrow M_0\ |\ 1\rightarrow M_1\}\twoheadrightarrow \delta_{k,k'}\cdot M_{k}^{[d^* \mapsto k]}}
 \end{prooftree}}
\]
\[
\scalebox{0.9}{
 \begin{prooftree}
  \hypo{P\twoheadrightarrow \vec{P}}
  \infer1{\qcase\ P\ \{0\rightarrow M_0\ |\ 1\rightarrow M_1\}\twoheadrightarrow \qcase\ \vec{P}\ \{0\rightarrow M_0\ |\ 1\rightarrow M_1\}}
 \end{prooftree}}
\]
\\
\[
\scalebox{0.9}{
\begin{prooftree}
\hypo{k\in\{0,1\}}
\hypo{t \text{ is fresh}}
 \infer2{\qcase\ \ket{k}\ \{0\rightarrow M_0 \ |\  1\rightarrow M_1\} \twoheadrightarrow \lambda t. \pair{\ket{k},M_k\, t}}
\end{prooftree}}
\]
\[
\scalebox{0.9}{
\begin{prooftree}
 \hypo{\phantom{\vec{M}}}
 \infer1{(\letrec\ f\ x=M )V \twoheadrightarrow M\{V/x\}\db{\letrec \ f\ x=M/f}}
\end{prooftree}}
\]
\[
\scalebox{0.9}{
\begin{prooftree}
 \hypo{\phantom{\vec{M}}}
 \infer1{(\letrec^*\ f\ u=M)^*V \twoheadrightarrow M\db{V/u}\db{\letrec^* \ f\ u=M/f}}
\end{prooftree}}
\]
\[
\scalebox{0.9}{
\begin{prooftree}
\hypo{M \twoheadrightarrow \vec{M}}
\hypo{\sigma \sqcup \Mem(\vec{M})\ ||\ \Mem(\vec{N})}
\hypo{M^\sigma \notin \Supp(\vec{N})}
 \infer3{\alpha \cdot M^\sigma + \vec{N}\twoheadrightarrow \alpha \cdot \vec{M}^{\{\sigma\}}+ \vec{N}}
\end{prooftree}}
\]

\end{center}
\hrulefill
\caption{Reduction relation $\twoheadrightarrow$}
\label{fig:rel_gen}
\end{figure}

\end{appendices}

\end{document}